\documentclass[journal]{IEEEtran}
\usepackage{graphicx}
\usepackage{amsmath}

\usepackage[noend]{algpseudocode}
\usepackage{algorithmicx,algorithm}
\usepackage{amsthm}
\usepackage{amsfonts}
\usepackage {subfigure}

\usepackage{color}
\usepackage{cite}
\usepackage{setspace}
\newtheorem{Lemma}{Lemma}
\newtheorem{Theorem}{Theorem}

\newtheorem{Proposition}{Proposition}
\newtheorem{Definition}{Definition}
\newtheorem{Assumption}{Assumption}
\usepackage{amssymb}
\usepackage{color}
\graphicspath{{images/}}
\usepackage{booktabs}
\usepackage{multirow} 
\usepackage{makecell}
\usepackage[numbers,sort&compress]{natbib}
\newtheorem{remark}{Remark}
\usepackage{stfloats}
\usepackage{amsmath}
\usepackage{amssymb}
\usepackage{enumitem}
\usepackage{threeparttable}
\usepackage{comment}
\usepackage{bm}
\usepackage{colortbl}
\usepackage{tabularx}
\usepackage[table]{xcolor}
\setlist{noitemsep, topsep=2pt, parsep=0pt, partopsep=0pt}
\begin{document}	
	\title{Vorticity Dissipation Based Routing: A Fluid-Kinetic Framework for Loop-Free Transport in Ultra-Dense Networks}
	
	\author{Wen-Yu Dong, Weiwei Jiang,~\IEEEmembership{Senior Member,~IEEE}, Song Zhao,  Rui-Si Han,\\ Qi Bi,~\IEEEmembership{Fellow,~IEEE}, Sheng Chen,~\IEEEmembership{Life Fellow,~IEEE}	%
		
		\thanks{W.-Y. Dong, S. Zhao and Q. Bi are with Future Technology Research Center, China Telecom Research Institute, Beijing 102209, China (E-mails: dongwy@chinatelecom.cn; zhaosong1@chinatelecom.cn; qibi@chinatelecom.cn).} %
		\thanks{W. Jiang is with the School of Information and Communication Engineering, Beijing University of Posts and Telecommunications, Beijing, 100876, China (E-mail: jww@bupt.edu.cn).}
		\thanks{R.-S. Han is with Cloud Network Operating System R\&D Center, China Telecom, Beijing 102209, China (E-mail: hanruisi@chinatelecom.cn).} %
		\thanks{S. Chen is with the School of Electronics and Computer Science, University of Southampton, Southampton SO17 1BJ, U.K., and also with  Faculty of Information Science and Technology, Ocean University of China, Qingdao 266100, China (E-mail: sqc@ecs.soton.ac.uk).} %
		\thanks{\scriptsize Digital Object Identifier: 10.1109/TMC.2026.3722828. 
			\copyright~2026 IEEE. Personal use of this material is permitted.
			Permission from IEEE must be obtained for all other uses, in any current or future media,
			including reprinting/republishing this material for advertising or promotional purposes,
			creating new collective works, for resale or redistribution to servers or lists,
			or reuse of any copyrighted component of this work in other works.
		}
		\vspace*{-5mm}
	}
	
	
	
	\maketitle 
	\vspace{-1.5cm}
	\begin{abstract}
		Discrete routing protocols in ultra-dense wireless networks are constrained by signaling overhead and transient routing loops that degrade {\color{black}radio-resource efficiency}. While continuum modeling provides a scalable alternative, existing scalar density approaches lack the vector geometric structure required to characterize these topological anomalies. This paper introduces a fluid-kinetic framework, vorticity dissipation-based routing (VDR), utilizing the Helmholtz-Hodge decomposition. We demonstrate that the macroscopic traffic flux can be orthogonally decoupled into a demand-driven irrotational component and a loop-induced solenoidal component representing routing vorticity. Building on this insight, we define network vorticity as a macroscopic metric to quantify topological inefficiency. Routing optimization is formulated as a gradient flow on an enstrophy functional, yielding a vorticity dissipation equation as the governing dynamic law. Lyapunov stability analysis proves that this mechanism ensures the monotonic decay of global enstrophy toward an asymptotically loop-free equilibrium. Numerical results validate that {\color{black}VDR suppresses realized forwarding loops, reduces end-to-end delay, maintains robust packet delivery, and exhibits near-linear scaling under fixed-area densification while explicitly accounting for the grid-dependent Poisson-solver cost.}
	\end{abstract}
	
	\begin{IEEEkeywords}
		Ultra-dense networks, continuum modeling, Helmholtz-Hodge decomposition, network vorticity, loop-free routing.
	\end{IEEEkeywords}
	
	\section{Introduction} \label{S1}
	
	The ultra-dense deployment of Internet of Things (IoT) devices and small cells in 6G networks precipitates a fundamental topological shift from centralized architectures to multi-hop mesh networking, creating quasi-continuous media \cite{IoTJDong}. In this regime, the explosive growth of node density renders microscopic state maintenance computationally intractable, necessitating a scalable macroscopic paradigm to ensure efficient information transport \cite{Polese2020}. Consequently, developing routing mechanisms that can handle this phase transition while maintaining topological stability is a critical challenge for emerging architectures like massive machine-type communications and integrated access and backhaul.
	
	However, managing discrete packet flows becomes unstable as network density increases, where conventional protocols encounter severe scalability limitations \cite{Kim2017RPLSurvey}. A primary bottleneck in decentralized routing is the formation of transient routing loops. These cyclic paths, caused by stale local states or conflicting objectives, significantly deplete bandwidth and exacerbate energy hole phenomena. Unlike physical obstacles that define static boundaries, these topological anomalies are dynamic and distributed, making them difficult to rectify using traditional microscopic strategies without incurring prohibitive signaling overhead and latency.
	
	To address this, existing research primarily follows three technical routes: discrete graph protocols, such as ad hoc on-demand distance vector (AODV) and IPv6 routing protocol for low-power and lossy networks (RPL), stochastic backpressure control, and learning-based inference, all of which struggle with either scalability bottlenecks or a lack of deterministic safety guarantees. While continuum modeling offers a fourth route by treating traffic as a fluid to achieve scalability \cite{Lighthill1955}, current scalar approaches fail to capture the vector geometric structure of rotational loops. Therefore, this paper adopts a fluid-kinetic framework based on the Helmholtz-Hodge decomposition (HHD) \cite{Griffiths1999}. We choose this route because it mathematically decouples traffic into an irrotational component representing effective transport and a solenoidal component representing redundant circulation, enabling the elimination of loops through a vorticity dissipation process that guarantees asymptotic stability independent of network density.
	
	\subsection{Related Work} \label{S1.1}
	
	Existing routing protocols are generally categorized into microscopic approaches, which model interactions on discrete graphs, and macroscopic approaches, which abstract ultra-dense networks as continuum fields.
	
	\subsubsection{Discrete Graph-Based Approaches} 
	Traditional routing frameworks abstract the network as a discrete graph to maximize path efficiency. Classical topology-based protocols, such as the reactive AODV \cite{Perkins1999AODV}, the proactive optimized link state routing (OLSR) \cite{Clausen2003OLSR}, and the RPL \cite{Kim2017RPLSurvey}, rely on topological metrics and rigid graph structures. To reduce global maintenance overhead, geographic strategies, like greedy perimeter stateless routing (GPSR) \cite{Karp2000GPSR}, utilize nodal coordinates for localized forwarding. To enhance adaptability in highly dynamic scenarios, recent research has also explored various bio-inspired heuristics, e.g., ant colony optimization for flying ad hoc networks \cite{Wang2025TVT}, and topology-aware relay selection methods \cite{Yang2025TVT}. Such highly dynamic scenarios are also expected in emerging low-altitude wireless networks with integrated sensing and communication~\cite{Ding2026ISACWaveform}. However, these discrete and heuristic methods are fundamentally constrained by graph granularity: they often lack deterministic safety guarantees and scale poorly in ultra-dense networks, as the signaling and computational overhead required to maintain routing invariants grows prohibitively.
	
	\subsubsection{Stochastic Optimization and Stability Control} 
	Backpressure routing \cite{Tassiulas1992} and its delay-reduction variants \cite{Bui2011} achieve theoretical throughput optimality via Lyapunov drift minimization based on queue differentials. To constrain the directionality of inefficient random walks inherent to low-gradient scenarios, recent advancements have integrated various structural priors and adaptive bias tuning into the backpressure framework, with applications ranging from low Earth orbit (LEO) satellite networks \cite{Deng2023DBPR} to wireless edge computing \cite{Zhao2024BiasedBP, Mahfujul2024TNSE}. Nevertheless, these queue-based mechanisms remain reactive. In high-mobility scenarios, the lag between topology changes and queue convergence frequently induces transient micro-loops, while the complex coupling dynamics incur computational costs that conflict with the real-time constraints of IoT systems.
	
	\subsubsection{Learning-Based Routing} 
	Autonomous network control increasingly employs learning-based paradigms to infer adaptive forwarding policies. Methods ranging from reinforcement learning (RL) \cite{Arafat2022QL} and deep RL (DRL) \cite{Xiao2021DRLSurvey} to graph neural networks (GNNs) \cite{Rusek2020, Shen2021GNN} have been integrated to capture complex spatial dependencies and non-linear interactions in irregular wireless topologies. While highly adaptable, these data-driven models typically lack deterministic safety guarantees \cite{Khan2024}. The probabilistic nature of neural inference creates generalization risks on out-of-distribution (OoD) topologies, and the dependency on extensive retraining imposes significant overhead. Crucially, learning-based formulations generally fail to provide strict loop-freedom proofs, suffering from potential transient oscillations during stochastic exploration that necessitate analytical frameworks with rigorous bounds.
	
	\subsubsection{Topological Signal Processing and Discrete Hodge Theory} 
	The integration of algebraic topology into network science has established the field of topological signal processing (TSP) \cite{Barbarossa2020TSP}. By leveraging discrete exterior calculus, existing research applies the graph Hodge theorem to analyze edge flows, effectively modeling random walks \cite{Schaub2020RandomWalks} and resolving cyclic inconsistencies \cite{Jiang2011StatisticalRanking}. However, applying this directly to dynamic routing presents fundamental challenges: computing the discrete curl requires constructing higher-order simplicial complexes, such {\color{black}as tracking} all 2-simplices, incurring massive combinatorial overhead. Furthermore, discrete Hodge theory functions primarily as a retrospective tool for static flows rather than a low-complexity real-time control mechanism.
	
	\subsubsection{Continuum Field Models} 
	To address scalability, macroscopic approaches model ultra-dense networks as continua using calculus-based tools \cite{Jacquet2004, Toumpis2003Continuum, Samarakoon2016MFG}. A critical limitation of most continuum formulations is the assumption of irrotational potential flow, which enforces a zero-curl condition and explicitly excludes the rotational component of the vector field. Since routing loops correspond mathematically to circulation, curl-based operators are necessary to quantify and regulate these effects. This gap motivates a unified framework that explicitly models and dissipates routing vorticity to ensure loop-free transport.

	\begin{table*}[!t]
		\vspace*{-1mm}
		\scriptsize
		\centering
		\caption{Comparison of capabilities for various routing paradigms}
		\label{tab:comparison} 
		\vspace*{-2mm}
		\renewcommand{\arraystretch}{1.2}
		\begin{threeparttable}
			\resizebox{\textwidth}{!}{
				\begin{tabular}{l c c c c c}
					\toprule
					\textbf{Paradigm \& Representative Works} & \textbf{Scalability} & \textbf{Loop-Freedom Guarantee} & \textbf{Congestion Awareness} & \textbf{Deterministic Safety} & \textbf{Low Latency} \\
					\midrule
					\textbf{Discrete Protocols} & \multirow{2}{*}{$-$} & \multirow{2}{*}{$\checkmark$} & \multirow{2}{*}{$-$} & \multirow{2}{*}{$\checkmark$} & \multirow{2}{*}{$-$} \\
					(e.g., AODV \cite{Perkins1999AODV}, RPL \cite{Kim2017RPLSurvey}, GPSR \cite{Karp2000GPSR}) & & & & & \\
					\midrule
					
					\textbf{Stochastic Control} & \multirow{2}{*}{$-$} & \multirow{2}{*}{$-$} & \multirow{2}{*}{$\checkmark$} & \multirow{2}{*}{$-$} & \multirow{2}{*}{$-$} \\
					(e.g., Backpressure \cite{Tassiulas1992, Bui2011}) & & & & & \\
					\midrule
					
					\textbf{Learning-Based} & \multirow{2}{*}{$\checkmark$} & \multirow{2}{*}{$-$} & \multirow{2}{*}{$\checkmark$} & \multirow{2}{*}{$-$} & \multirow{2}{*}{$\checkmark$} \\
					(e.g., QTAR \cite{Arafat2022QL}, DRL \cite{Xiao2021DRLSurvey}, GNN \cite{Shen2021GNN}) & & & & & \\
					\midrule
					
					\textbf{Discrete Exterior Calculus (DEC)} & \multirow{2}{*}{$-$} & \multirow{2}{*}{$\checkmark$} & \multirow{2}{*}{$-$} & \multirow{2}{*}{$\checkmark$} & \multirow{2}{*}{$-$} \\
					(e.g., Graph Hodge / TSP \cite{Barbarossa2020TSP, Schaub2020RandomWalks}) & & & & & \\
					\midrule
					
					\textbf{Traditional Continuum} & \multirow{2}{*}{$\checkmark$} & \multirow{2}{*}{$-$} & \multirow{2}{*}{$-$} & \multirow{2}{*}{$\checkmark$} & \multirow{2}{*}{$\checkmark$} \\
					(e.g., Potential fields \cite{Toumpis2003Continuum}) & & & & & \\
					
					\midrule
					\textbf{Proposed VDR} &$\checkmark$ &$\checkmark$ &$\checkmark$ &$\checkmark$&$\checkmark$ \\
					\bottomrule
				\end{tabular}
			}
			\begin{tablenotes}
				\scriptsize
				\item  $\checkmark$: Supported / High Performance; $-$: Not Considered / Low Performance.
				\item	\textbf{Abbreviations}: AODV: Ad hoc On-Demand Distance Vector; RPL: IPv6 Routing Protocol for Low-Power and Lossy Networks; GPSR: Greedy Perimeter Stateless Routing; QTAR: Q-learning-based Topology-Aware Routing; GNN: Graph Neural Networks; DRL: Deep Reinforcement Learning; DEC: Discrete Exterior Calculus; TSP: Topological Signal Processing; VDR: Vorticity Dissipation Based Routing (Proposed).
			\end{tablenotes}
		\end{threeparttable}
		\vspace{-5mm}
	\end{table*}
	
	\subsection{Contributions and Organization} \label{subsec:contributions} 
	
	To bridge the gap between microscopic exactness and macroscopic scalability, this paper proposes a continuum routing framework termed vorticity dissipation based routing (VDR). By introducing the HHD, the wireless traffic field is decoupled into orthogonal irrotational and solenoidal components. This transformation converts {\color{black}loop-prone routing behavior from a purely combinatorial graph phenomenon into a field-level circulation-dissipation problem}. The specific contributions of this work are summarized as follows.
	
	\begin{itemize}
		\item \textit{Continuum vector field modeling and network vorticity definition:} {\color{black}A mapping methodology based on kernel density estimation (KDE) is established to abstract observable nodal states into continuous traffic density, source-intensity, and flux fields. Within this continuum representation, network vorticity is defined as a macroscopic measure of local loop tendency. We further clarify the finite-density interpretation of this continuum approximation and analyze how graph resolution, KDE bandwidth, and finite-node sampling affect the continuum-to-graph projection.}
		
		\item \textit{Analytical decomposition and energy interpretation:} {\color{black}The scalar and vector potentials of the traffic field are derived by solving the associated Poisson equations, and the traffic flux is decomposed into demand-driven irrotational and loop-induced solenoidal components. The associated orthogonality property shows that redundant circulation can be separated from useful source-to-sink transport. We also provide a discrete radio-resource interpretation, clarifying that the transport-energy and solenoidal-energy functionals should be viewed as normalized forwarding-load surrogates rather than complete physical battery-energy models.}
		
		\item \textit{Vorticity dissipation control and practical discrete implementation:} 
		{\color{black}
			Routing optimization is formulated as a gradient flow on the enstrophy functional, yielding a vorticity dissipation law with Lyapunov-based monotonic decay under the stated boundary conditions. To enable finite-graph deployment, we develop a distributed realization using graph Laplacian operators, normalized forwarding utilities, adaptive source-intensity estimation, and packet-level Bloom-filter loop guarding. The implementation further accounts for forwarding delay, packet drops, memory overhead, and grid-dependent computational complexity.
		}
	\end{itemize}
	From a broader theoretical perspective, the proposed VDR framework can be viewed as a routing-oriented application of the fluid-spatiotemporal stochastic geometry (F-STSG) framework for modeling information flow in non-stationary fields~\cite{Dong2026FSTSG}. Relatedly, the fluid-dynamic paradigm has also been applied to flux-aware infrastructure provisioning in UAV logistics networks~\cite{Dong2026TMC}.

	To position this work within the literature, Table~\ref{tab:comparison} compares the proposed VDR with conventional routing paradigms. The remainder of this paper is organized as follows. Section~\ref{S2} presents the system model and the construction of the continuous traffic field. Section~\ref{S3} details the analytical resolution of the Helmholtz potentials and the derivation of the energy orthogonality. Section~\ref{S4} establishes the dynamic evolution laws, {\color{black}finite-density interpretation,} and corresponding discrete implementation algorithms. Section~\ref{S5} provides numerical performance evaluations, and Section~\ref{S6} concludes the paper. The key notations and symbols used throughout this paper are summarized in Table~\ref{tab:notations}.
	
	\begin{table}[!t]
		\vspace*{-1mm}
		\scriptsize
		\caption{Summary of Key Mathematical Notations}
		\vspace*{-2mm}
		\label{tab:notations} 
		\centering
		\renewcommand{\arraystretch}{1.1} 
		\setlength{\tabcolsep}{3pt}       
		\resizebox{\columnwidth}{!}{
			\begin{tabular}{c|l}
				\hline
				\textbf{Symbol} & \textbf{Definition / Physical Interpretation} \\
				\hline
				$\mathbf{\Phi}$ & Traffic flux density vector field (information flow) \\
				$\rho$ & Traffic density field (spatial congestion level) \\
				$\mathbf{\Phi}_{\mathrm{irr}}$ & Irrotational flux component (effective transport) \\
				$\mathbf{\Phi}_{\mathrm{sol}}$ & Solenoidal flux component (routing vorticity/loops) \\
				$\psi, \mathbf{A}$ & Scalar potential and Vector potential fields \\
				$\boldsymbol{\omega}$ & Network vorticity field ($\nabla \times \mathbf{\Phi}$) \\
				
				$\mathcal{E}_{\mathrm{tot}}$ & Total transport energy ($\int \|\mathbf{\Phi}\|^2$) \\
				$\mathcal{E}_{\mathrm{sol}}$ & Solenoidal transport energy (Energy of loops) \\
				$\eta_{\mathrm{vor}}$ & Vortex energy ratio (VER), $\mathcal{E}_{\mathrm{sol}} / \mathcal{E}_{\mathrm{tot}}$ \\
				$\mathcal{F}[\mathbf{\Phi}], \mathcal{E}_{\mathrm{vor}}$ & Network enstrophy functional ($\frac{1}{2} \int \|\boldsymbol{\omega}\|^2$) \\
				
				$\tau$ & Virtual optimization time for field evolution \\
				$\mathbf{L}$ & Graph Laplacian matrix ($\mathbf{B} \mathbf{W} \mathbf{B}^{\top}$) \\
				{\color{black}$\alpha_i$} & {\color{black}Adaptive weight balancing normalized potential drive and normalized queue pressure} \\
				{\color{black}$Q_{\max}$} & {\color{black}Queue capacity used for queue normalization} \\
				{\color{black}$\widehat{\Delta\psi}_{ij}$} & {\color{black}Normalized potential-difference term from node $i$ to node $j$} \\
				{\color{black}$\widehat{\Delta Q}_{ij}$} & {\color{black}Normalized queue-pressure difference from node $i$ to node $j$} \\
				{\color{black}$U_{ij}$} & {\color{black}Normalized forwarding utility metric for candidate link $(i,j)$} \\
				$\Lambda(j, \mathcal{H}_p)$ & {\color{black}Hard loop-avoidance barrier for packet trajectory history} \\
				$\mathbf{B}_p$ & Bloom filter vector encoding trajectory history \\
				\hline
			\end{tabular}
		}
		\vspace*{-1mm}
	\end{table}

	\section{System Model and Problem Formulation}\label{S2}
	
	In this section, we establish the mathematical foundation for the proposed fluid-kinetic routing framework. In contrast to traditional discrete graph abstractions, we adopt a continuum perspective to model the collective transport of information in ultra-dense networks. This approach treats packet flows as a compressible fluid, enabling the application of vector field theory to quantify topological routing anomalies.
	
	\begin{figure}[!t]
			\vspace*{-1mm}
		\begin{center}
			\includegraphics[width=1\columnwidth]{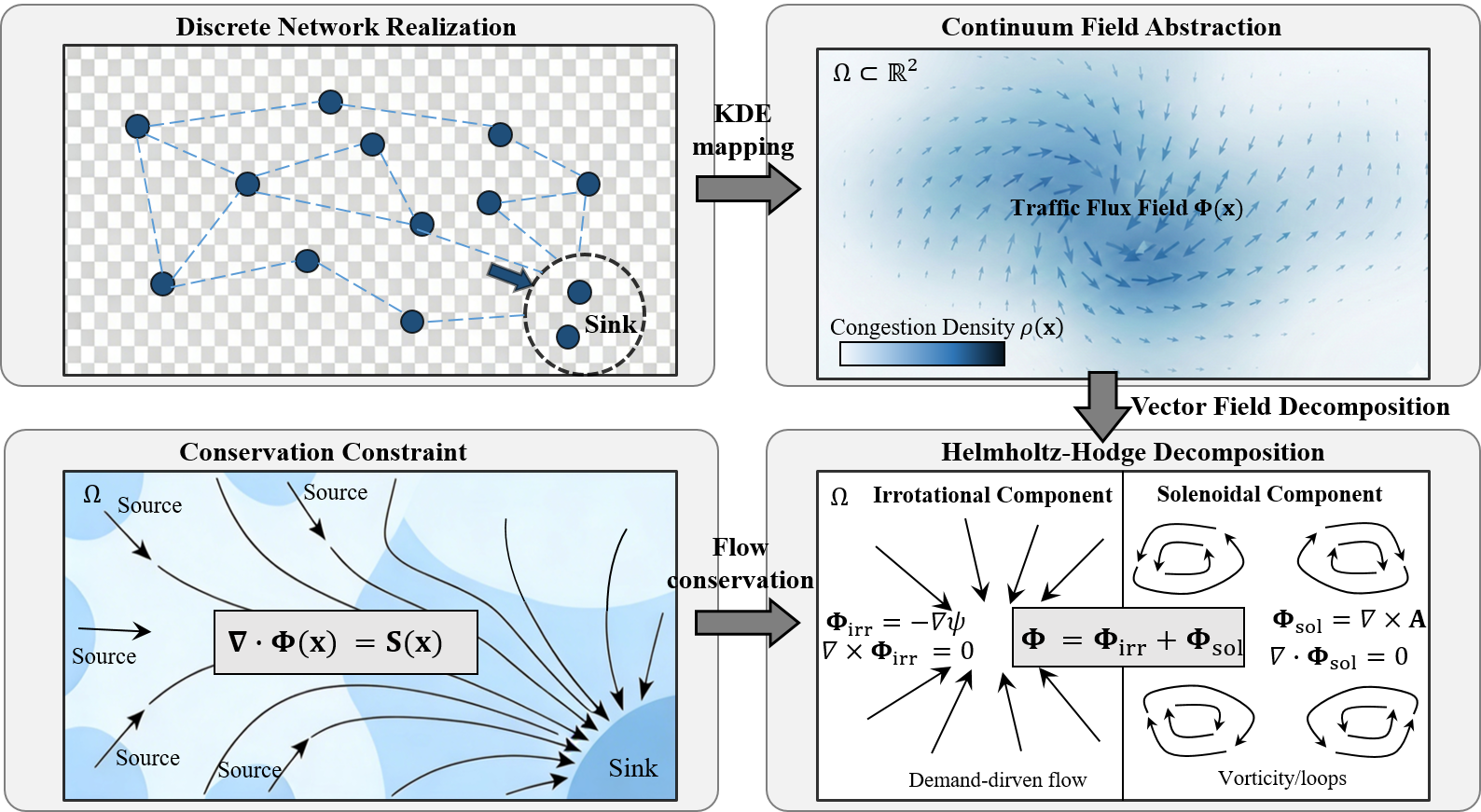}
		\end{center}
		\vspace*{-5mm}
		\caption{System model of VDR: from discrete wireless mesh to continuum traffic fields and Helmholtz-Hodge decomposition.}
		\label{fig:system_model} 
		\vspace*{-2mm}
	\end{figure}
	
	\subsection{Continuum Representation of Wireless Traffic}\label{S2.1}
	
	The proposed VDR model is depicted in Fig.~\ref{fig:system_model}.
	We consider a large-scale wireless mesh network deployed over a bounded spatial domain $\Omega \subseteq \mathbb{R}^2$ with a boundary $\partial \Omega$. The network consists of a set of static nodes $\mathcal{N}$ distributed according to a spatial point process \cite{DongTCOM, DongJSAC}. Two nodes can establish a direct wireless communication link if their Euclidean distance is no greater than the maximum communication range, denoted by $R_{\mathrm{c}}$. To characterize the macroscopic routing behavior, we abstract the discrete packet transmissions as a continuous fluid flow. 
	Let $\mathbf{x} = [x, y]^{\mathrm{T}} \in \Omega$ denote the spatial coordinates and $t \in \mathbb{R}^+$ represent time. To bridge the gap between discrete nodal states and continuous field theory, we construct the macroscopic state variables using KDE on strictly observable metrics. Unlike standard fluid models where flux is derived from velocity as $\mathbf{\Phi} = \rho \mathbf{v}$, we define the wireless traffic state through three coupled fields.
	
	\subsubsection{Traffic Density Field ($\rho$)} 
	The density field $\rho(\mathbf{x}, t)$ represents the spatial distribution of congestion potential, mapped from the normalized nodal queue length $q_i(t)\! =\! Q_i(t) / Q_{\max}$, where $Q_i(t)$ is the instantaneous queue length of node $i$ at time $t$, and $Q_{\max}$ is the maximum queue capacity of the node.  It acts as the diffusive pressure source in the forwarding plane:
	\begin{equation} 
		\rho(\mathbf{x}, t) \triangleq \sum_{i \in \mathcal{N}} q_i(t) \mathcal{K}_h\left( \| \mathbf{x} - \mathbf{x}_i \| \right),
		\label{eq:kde_density}
	\end{equation}
	where $\mathbf{x}_i \in \Omega$ denotes the spatial coordinates of node $i$, and $\mathcal{K}_h(r)\! =\! \frac{1}{2\pi h^2} \exp\left(-\frac{r^2}{2h^2}\right)$ is the Gaussian smoothing kernel with bandwidth $h$.
	
	\subsubsection{Source Intensity Field ($S$)} 
	The source field $S(\mathbf{x}, t)$ represents the net demand topology, mapped from the exponential moving average (EMA) of the packet generation/consumption rate $s_i(t)$. This field drives the global irrotational flow:
	\begin{equation} 
		S(\mathbf{x}, t) \triangleq \sum_{i \in \mathcal{N}} s_i(t) \mathcal{K}_h\left( \| \mathbf{x} - \mathbf{x}_i \| \right).
		\label{eq:kde_source}
	\end{equation}
	
	\subsubsection{Traffic Flux Field ($\mathbf{\Phi}$)} 
	The flux density vector $\mathbf{\Phi}(\mathbf{x}, t)$ is defined as the effective transport of information. Rather than being an algebraic product of density and velocity, $\mathbf{\Phi}$ is treated as the primary control variable governed by the HHD \cite{Griffiths1999}:
	\begin{equation} 
		\mathbf{\Phi}(\mathbf{x}, t) = \mathbf{\Phi}_{\mathrm{irr}}(\mathbf{x}, t) + \mathbf{\Phi}_{\mathrm{sol}}(\mathbf{x}, t), 
		\label{eq:flux_decomp_def}
	\end{equation}
	where the time-varying irrotational component $\mathbf{\Phi}_{\mathrm{irr}}(\mathbf{x}, t)$ is determined by the demand $S(\mathbf{x}, t)$ via the Poisson equation, and the solenoidal component $\mathbf{\Phi}_{\mathrm{sol}}(\mathbf{x}, t)$ represents the loop currents to be dissipated.  
	
	{\color{black}
		In networking terms, $\rho$ can be viewed as a congestion map, $S$ as a source-sink demand map, and $\mathbf{\Phi}$ as the macroscopic forwarding-flow map.}
	{\color{black}
		The bandwidth parameter $h$ dictates the spatial resolution of the continuum approximation. According to the standard bandwidth selection principles in spatial statistics~\cite{Silverman1986}, $h$ should be scaled with the characteristic interaction distance of the discrete system to balance field smoothness and topological fidelity. In this work, we parameterize the bandwidth as $h=\beta_h R_c$, where $\beta_h$ controls the smoothing scale. A smaller $\beta_h$ preserves more localized routing features but may amplify sampling noise, whereas a larger $\beta_h$ improves field regularity but may blur small-scale voids or isolated link failures. Unless otherwise specified, we use $\beta_h=0.5$, i.e., $h=R_c/2$, as the default setting. This value is not assumed to be universally optimal; its impact on vorticity localization and routing performance is evaluated through the bandwidth sensitivity analysis in Section~\ref{subsec:bandwidth_sensitivity}.
	}
	
	Under this formalism, the hydrodynamic variables map directly to key network performance metrics.
	Specifically, by decomposing the macroscopic flux field $\mathbf{\Phi}(\mathbf{x}, t)$ into its magnitude and direction, we provide the following communication-theoretic interpretations.
	\begin{itemize}
		\item The density field $\rho(\mathbf{x}, t)$ represents the spatial distribution of traffic congestion. A local maximum in $\rho$ corresponds to a backlog accumulation or a potential bottleneck.
		\item The flux magnitude $\|\mathbf{\Phi}(\mathbf{x}, t)\|$ corresponds to the effective macroscopic throughput or traffic load traversing a local neighborhood.
		\item The flux direction $\frac{\mathbf{\Phi}(\mathbf{x}, t)}{\|\mathbf{\Phi}(\mathbf{x}, t)\|}$ (or orientation) indicates the collective routing decision, guiding the information flow towards the sinks.
	\end{itemize}
	
	To ensure the theoretical consistency of the proposed framework, we adopt the following standard assumptions regarding the network topology and density limits.
	
	\begin{Assumption} \label{asm:domain_topology}  
		The network domain $\Omega$ is modeled as the simply connected convex hull of the deployment area. 
		Physical obstacles or coverage voids are treated not as topological holes (which would render $\Omega$ multiply connected and necessitate a harmonic flow component), but as regions with vanishing traffic density ($\rho(\mathbf{x}, t) \to 0$) or infinite flow resistance. 
		This ``effective medium'' approach preserves the validity of the two-term HHD (\ref{eq:flux_decomp_def}) while physically forcing the flux $\mathbf{\Phi}$ to bypass void regions via the gradient potential.
	\end{Assumption}
	
	\begin{Assumption}\label{asm:continuum_limit} 
		\textit{Continuum Limit:} The node density $\lambda$ is sufficiently high such that the mean inter-nodal distance is negligible relative to the characteristic length of $\Omega$. As established in dense network theory \cite{Toumpis2003Continuum}, under this condition, the discrete packet-hopping process asymptotically converges to a macroscopic drift-diffusion process, justifying the use of partial differential equations for global traffic modeling.
	\end{Assumption}
	
	\begin{Assumption}\label{asm:solvability} 
		\textit{Solvability Condition:} To guarantee the well-posedness of the Poisson equation under Neumann boundary conditions, the source term $S(\mathbf{x}, t)$, which represents the net traffic generation rate, must satisfy the global compatibility constraint:
		\begin{equation} 
			\int_{\Omega} S(\mathbf{x}, t)\,\mathrm{d}\mathbf{x} = 0.
		\end{equation}
		This condition implies a balanced traffic pattern where the total injection rate equals the total absorption rate, complementing the topological constraints in Assumption \ref{asm:domain_topology} to ensure the existence of a valid potential field.
	\end{Assumption}
	
	\begin{Assumption}\label{asm:timescale}
		Let $\tau_{\mathrm{net}}$ denote the characteristic timescale of network dynamics (e.g., node mobility or traffic pattern shifts $S(\mathbf{x}, t)$), and $\tau_{\mathrm{conv}}$ denote the relaxation time of the vorticity dissipation process.
		We assume that the system operates in a \textit{singular perturbation} regime where $\epsilon \triangleq \tau_{\mathrm{conv}} / \tau_{\mathrm{net}} \ll 1$. 
		Under this condition, the macroscopic flux field $\mathbf{\Phi}(\mathbf{x}, t)$ is considered to be in a \textit{quasi-static equilibrium} relative to the microscopic packet forwarding processes, satisfying $\partial \rho / \partial t \approx 0$ within each optimization window $W_t$.
	\end{Assumption}
	
	In this continuum model, the information fluid is treated as a compressible medium. Unlike incompressible physical fluids, the divergence of the flux $\nabla \cdot \mathbf{\Phi}$ is generally non-zero, reflecting the local injection (source) and absorption (sink) of data traffic within the network.
	
	\subsection{Traffic Conservation and Dynamics}\label{S2.2}
	
	The transport of information bits within the domain $\Omega$ is governed by the principle of \textit{flow conservation}. For an arbitrary control volume $\mathcal{V} \subseteq \Omega$, the rate of change of the information backlog within $\mathcal{V}$ is balanced by the net flux crossing the boundary $\partial \mathcal{V}$ and the internal traffic generation. The differential form of this conservation law is expressed by the continuity equation:
	\begin{equation} 
		\frac{\partial \rho(\mathbf{x}, t)}{\partial t} + \nabla \cdot \mathbf{\Phi}(\mathbf{x}, t) = S(\mathbf{x}, t),
		\label{eq:continuity}
	\end{equation}
	where $\nabla \cdot (\cdot)$ denotes the divergence operator, and $S(\mathbf{x}, t)$ represents the source intensity field. The intensity of $S(\mathbf{x}, t)$ characterizes the local net traffic injection rate, where positive values correspond to traffic sources, such as packet generation, and negative values correspond to traffic sinks, such as packet reception at destinations.
	
	In wireless networks, macroscopic traffic patterns typically evolve at a timescale much slower than that of individual packet transmissions. Within a sufficiently short observation window $\Delta t$, the network can be modeled as being in a quasi-stationary state, i.e., $\partial \rho / \partial t \approx 0$. Under this condition, the temporal evolution term vanishes, reducing \eqref{eq:continuity} to the Poisson-type divergence equation:
	\begin{equation} 
		\nabla \cdot \mathbf{\Phi}(\mathbf{x}) = S(\mathbf{x}).
		\label{eq:poisson_traffic}
	\end{equation}
	Equation \eqref{eq:poisson_traffic} implies that the divergence of the traffic flux density is determined strictly by the local source-sink intensity. From a fluid-dynamic perspective, the information flow is treated as a compressible fluid, where the local compression or expansion of the flux field accounts for the net traffic demand. Crucially, while \eqref{eq:poisson_traffic} imposes a strict constraint on the divergent component of the flow, the solenoidal (divergence-free) component remains mathematically unconstrained by the traffic demand. This degree of freedom physically corresponds to routing loops or circulatory flows, which consume network capacity without contributing to the net delivery of information.
	
	\subsection{Helmholtz-Hodge Decomposition of Traffic Fields}\label{S2.3}
	
	To quantify routing efficiency, the traffic flux density field is analyzed using the HHD \cite{Griffiths1999}, which decouples the flow into two orthogonal components with distinct physical interpretations as summarized in the following theorem.
	
	\begin{Theorem}\label{thm:traffic_decomp}
		For a macroscopic wireless traffic field $\mathbf{\Phi}(\mathbf{x})$ defined under the continuum limit (Assumption \ref{asm:continuum_limit}), there exists a unique orthogonal decomposition that partitions the network flow into a demand-driven irrotational component $\mathbf{\Phi}_{\mathrm{irr}}$ and a loop-induced solenoidal component $\mathbf{\Phi}_{\mathrm{sol}}$:
		\begin{equation} 
			\mathbf{\Phi}(\mathbf{x}) = \underbrace{-\nabla \psi(\mathbf{x})}_{\text{Potential Flow}} + \underbrace{\nabla \times \mathbf{A}(\mathbf{x})}_{\text{Rotational Flow}},
			\label{eq:traffic_mapping}
		\end{equation}
		where $\psi(\mathbf{x})$ is the scalar potential determined by the source-sink distribution $S(\mathbf{x})$, and $\mathbf{A}(\mathbf{x})$ is the vector potential characterizing routing vorticity. {\color{black}
			Thus, $\mathbf{\Phi}_{\mathrm{irr}}$ denotes the useful source-to-sink transport component, whereas $\mathbf{\Phi}_{\mathrm{sol}}$ denotes the loop-induced circulation component to be dissipated.
		}
	\end{Theorem}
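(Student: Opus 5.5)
The plan is to construct the decomposition explicitly by solving a Neumann Poisson problem for the scalar potential, then obtain the vector potential from the remainder, and finally prove orthogonality and uniqueness by integration by parts. I will treat $\mathbf{\Phi}$ as a square-integrable (indeed smooth, since it is built from Gaussian kernels) vector field on the bounded, simply connected domain $\Omega$ of Assumption~\ref{asm:domain_topology}. Uniqueness needs a boundary normalization, and I fix the standard one: $\mathbf{\Phi}_{\mathrm{sol}}\cdot\mathbf{n}=0$ on $\partial\Omega$, so that all normal flux through the boundary is carried by the potential part.

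\emph{Existence.} First solve
\begin{equation*}
-\Delta\psi = S \ \text{in } \Omega, \qquad -\partial_{\mathbf{n}}\psi=\mathbf{\Phi}\cdot\mathbf{n} \ \text{on } \partial\Omega .
\end{equation*}
Here $S=\nabla\cdot\mathbf{\Phi}$ by \eqref{eq:poisson_traffic}. The compatibility condition for this Neumann problem is $\int_\Omega S\,\mathrm{d}\mathbf{x}=\oint_{\partial\Omega}\mathbf{\Phi}\cdot\mathbf{n}\,\mathrm{d}s$, which holds by the divergence theorem. It reduces to Assumption~\ref{asm:solvability} when there is no boundary efflux. The Lax--Milgram lemma on $H^1(\Omega)/\mathbb{R}$ then gives $\psi$, unique up to an additive constant, and we set $\mathbf{\Phi}_{\mathrm{irr}}=-\nabla\psi$. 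This shows that $\psi$ is fixed by the source-sink distribution, as the statement claims.

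Next define $\mathbf{\Phi}_{\mathrm{sol}}=\mathbf{\Phi}+\nabla\psi$. By construction $\nabla\cdot\mathbf{\Phi}_{\mathrm{sol}}=0$ in $\Omega$ and $\mathbf{\Phi}_{\mathrm{sol}}\cdot\mathbf{n}=0$ on $\partial\Omega$. In 2D, take $\mathbf{A}=(0,0,a)$. The field $(-\Phi_{\mathrm{sol},y},\Phi_{\mathrm{sol},x})$ is curl-free, and $\Omega$ is simply connected, so this field is a gradient $\nabla a$. This gives $\mathbf{\Phi}_{\mathrm{sol}}=\nabla\times\mathbf{A}$. Equivalently, $a$ solves $-\Delta a=\omega$ with $a$ constant on $\partial\Omega$, where $\omega=\nabla\times\mathbf{\Phi}$ is the vorticity. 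This is the step that uses simple connectivity to rule out a third, harmonic component.

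\emph{Orthogonality and uniqueness.} For orthogonality, compute
\begin{equation*}
\int_\Omega \nabla\psi\cdot\mathbf{\Phi}_{\mathrm{sol}} = \oint_{\partial\Omega}\psi\,\mathbf{\Phi}_{\mathrm{sol}}\cdot\mathbf{n} - \int_\Omega\psi\,\nabla\cdot\mathbf{\Phi}_{\mathrm{sol}} = 0 .
\end{equation*}
For uniqueness, suppose there are two decompositions. Their difference $\mathbf{g}=-\nabla\delta\psi=\nabla\times\delta\mathbf{A}$ is then both a gradient and divergence-free with zero normal trace. The orthogonality computation applied to $\mathbf{g}$ with itself gives $\|\mathbf{g}\|^2_{L^2}=0$. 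Hence the components are unique, and the potentials are unique up to constants.

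\emph{Main obstacle.} The delicate point is the boundary conditions, not the computation. Without the normalization $\mathbf{\Phi}_{\mathrm{sol}}\cdot\mathbf{n}=0$, the decomposition is neither unique nor orthogonal. Even with it, a harmonic field (curl-free and divergence-free) could appear if $\Omega$ had holes. I will handle this by invoking Assumption~\ref{asm:domain_topology}: the effective-medium convention keeps $\Omega$ simply connected, so the space of harmonic fields with zero normal trace is trivial. Its dimension equals the first Betti number, which is zero here.
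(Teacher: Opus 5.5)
Your proposal is correct and follows essentially the paper's route. The paper gives no standalone proof of Theorem~\ref{thm:traffic_decomp}; it cites the standard HHD and assembles the argument from four pieces: the homogeneous Neumann problem for $\psi$ (Section~\ref{S3.1}), the stream-function problem $\nabla^2 A_z=-\omega_z$ with $A_z|_{\partial\Omega}=0$, the divergence-theorem orthogonality argument of Proposition~\ref{prop:ortho}, and simple connectivity from Assumption~\ref{asm:domain_topology}. Your additions make rigorous what the paper only asserts in Remark~\ref{rem:boundary_logic}: the general Neumann data $-\partial_{\mathbf{n}}\psi=\mathbf{\Phi}\cdot\mathbf{n}$, defining $\mathbf{\Phi}_{\mathrm{sol}}$ as the remainder $\mathbf{\Phi}+\nabla\psi$ so the reconstruction identity holds by construction, and the explicit uniqueness step $\|\mathbf{g}\|_{L^2}^2=0$.
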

	
	{\color{black}The main networking interpretation is as follows: the scalar potential $\psi$ describes the demand-driven descent direction from traffic sources to sinks, whereas the vector potential $\mathbf{A}$ accounts for circulation that does not change the net source-sink balance. Thus, the decomposition in \eqref{eq:traffic_mapping} is not an additional routing heuristic; it separates useful source-to-sink transport from loop-induced redundant circulation before optimization.}
	
	\begin{remark} \label{rem:boundary_logic} 
		In wireless networks, the boundary condition $\mathbf{\Phi} \cdot \mathbf{n} = 0$ on $\partial \Omega$, where $\mathbf{n}$ denotes the outward unit normal vector, represents a closed system where data packets do not exit the service area except through designated sinks. This constraint ensures the uniqueness of the decomposition in \eqref{eq:traffic_mapping}.
	\end{remark}
	
	The components in \eqref{eq:traffic_mapping} are derived from the potentials $\psi(\mathbf{x})$ and $\mathbf{A}(\mathbf{x})$. For a two-dimensional domain $\Omega \subset \mathbb{R}^2$, the vector potential is orthogonal to the plane, expressible as $\mathbf{A}(\mathbf{x}) = A_z(\mathbf{x})\mathbf{e}_z$, where $\mathbf{e}_z$ is the unit vector along the $z$-axis.
	Based on this formulation, the decomposed components yield the following specific physical interpretations for wireless networking.
	
	\subsubsection{Effective Flux ($\mathbf{\Phi}_{\mathrm{irr}}$)} 
	This component is curl-free, satisfying $\nabla \times \mathbf{\Phi}_{\mathrm{irr}} \equiv \mathbf{0}$. Substituting \eqref{eq:traffic_mapping} into the divergence constraint \eqref{eq:poisson_traffic} and utilizing the identity $\nabla \cdot (\nabla \times \mathbf{A}) \equiv 0$, the scalar potential is shown to satisfy the Poisson equation:
	\begin{equation} 
		-\nabla^2 \psi(\mathbf{x}) = S(\mathbf{x}).\label{eq:scalar_potential_poisson}
	\end{equation}
	The existence of a solution to this Neumann boundary value problem is mathematically guaranteed by the solvability condition established in Assumption \ref{asm:solvability}. The solution $\psi(\mathbf{x})$ is unique up to an additive constant, which is fixed by setting the spatial mean to zero ($\int_{\Omega} \psi \,\mathrm{d}\mathbf{x} = 0$) to ensure numerical stability.
	
	Since the potential field $\psi$ is mathematically unique (up to an additive constant), this relation indicates that the irrotational component $\mathbf{\Phi}_{\mathrm{irr}}$ is strictly determined by the spatial distribution of traffic sources and sinks, representing the fundamental transport path that satisfies network demand conservation.
	
	\subsubsection{Vortex Flux ($\mathbf{\Phi}_{\mathrm{sol}}$)} 
	This component is divergence-free, satisfying $\nabla \cdot \mathbf{\Phi}_{\mathrm{sol}} = 0$. In a closed domain, a divergence-free field corresponds to circulatory patterns that neither originate from nor terminate at sources or sinks. Consequently, $\mathbf{\Phi}_{\mathrm{sol}}$ represents routing loops and redundant detours that consume bandwidth without contributing to information delivery.
	
	The decomposition guarantees that $\mathbf{\Phi}_{\mathrm{irr}}$ and $\mathbf{\Phi}_{\mathrm{sol}}$ are orthogonal in the $L^2$ space, enabling the independent analysis of routing efficiency and traffic load.
	
	\subsection{Macroscopic Performance Metrics: Network Vorticity and Enstrophy}\label{S2.4}
	
	To characterize the macroscopic routing state, field-based metrics are introduced. Unlike traditional discrete metrics such as hop count or expected transmission count, these metrics evaluate the rotational properties of the continuous traffic field.
	
	\begin{Definition}\label{def:network_vorticity}
		The network vorticity field $\boldsymbol{\omega}(\mathbf{x})$ is defined as the curl of the traffic flux density field $\mathbf{\Phi}(\mathbf{x})$:
		\begin{equation} 
			\boldsymbol{\omega}(\mathbf{x}) \triangleq \nabla \times \mathbf{\Phi}(\mathbf{x}) = \left( \frac{\partial \Phi_y}{\partial x} - \frac{\partial \Phi_x}{\partial y} \right) \mathbf{e}_z,
			\label{eq:vorticity_def}
		\end{equation}
		where $\Phi_x$ and $\Phi_y$ denote the Cartesian components of the flux vector $\mathbf{\Phi}(\mathbf{x})$.
		{\color{black}In networking terms, a nonzero vorticity value indicates that local forwarding decisions contain a rotational component: packets tend to circulate among neighboring nodes instead of making net progress toward sinks. Therefore, vorticity provides a continuum-level measure of local loop tendency rather than merely a mathematical curl operator.}
	\end{Definition}
	
	\begin{Assumption}\label{asm:boundary_conditions}
		The network boundary $\partial \Omega$ is modeled as a rigid, frictionless reflection wall, characterized by the \textit{Free-Slip Boundary Condition}. This implies two physical constraints:
		\begin{enumerate}
			\item \textbf{No-Penetration (Kinematic):} Traffic flux cannot penetrate the boundary, i.e., $\mathbf{\Phi} \cdot \mathbf{n} = 0$, where $\mathbf{n}$ is the outward normal. This ensures mass conservation within $\Omega$.
			\item \textbf{Vanishing Vorticity (Dynamic):} The boundary does not exert viscous shear stress on the packet flow. Consequently, the vorticity at the boundary vanishes:
			\begin{equation} 
				\boldsymbol{\omega}(\mathbf{x}) = \nabla \times \mathbf{\Phi}(\mathbf{x}) = \mathbf{0}, \quad \forall \mathbf{x} \in \partial \Omega.
				\label{eq:boundary_omega}
			\end{equation}
		\end{enumerate}
		Mathematically, this set of conditions ensures the self-adjointness of the curl operator, guaranteeing the orthogonality of the HHD and the validity of the variational principle.
	\end{Assumption}

{\color{black}
	The free-slip condition is used as a sufficient reference boundary condition for deriving the clean Lyapunov dissipation result, rather than as a strict prerequisite for applying the HHD. For non-convex or multiply connected deployment regions, the decomposition can be generalized as
	\begin{equation}
		\mathbf{\Phi}
		= -\nabla \psi + \nabla \times \mathbf{A}
		+ \mathbf{\Phi}_{\mathrm{har}},
		\label{eq:generalized_hhd}
	\end{equation}
	where $\mathbf{\Phi}_{\mathrm{har}}$ is the harmonic component satisfying $\nabla\cdot\mathbf{\Phi}_{\mathrm{har}}=0$ and $\nabla\times\mathbf{\Phi}_{\mathrm{har}}=0$. Under the no-penetration condition and Hodge-compatible boundary conditions, the irrotational, solenoidal, and harmonic components remain $L^2$-orthogonal. Therefore, boundary-induced circulation or hole-bypass flow does not invalidate the HHD separation; instead, topology-induced harmonic flow is separated from loop-induced solenoidal vorticity.
}

	A non-zero vorticity $\boldsymbol{\omega}(\mathbf{x}) \neq \mathbf{0}$ signifies local traffic circulation, which physically corresponds to routing loops or inefficient detours. To evaluate the global performance of the network, the total network enstrophy $\mathcal{E}_{\mathrm{vor}}$ is defined as half the integral of the squared vorticity magnitude over the domain $\Omega$, consistent with standard fluid-kinetic conventions:
	\begin{equation} 
		\mathcal{E}_{\mathrm{vor}} \triangleq \frac{1}{2} \int_{\Omega} \|\boldsymbol{\omega}(\mathbf{x})\|^2 \, \mathrm{d}\mathbf{x}.
		\label{eq:enstrophy}
	\end{equation}
	
	The enstrophy $\mathcal{E}_{\mathrm{vor}}$ provides a scalar measure of the aggregate rotational intensity within the network. In an ideal routing configuration, the flow field is purely irrotational, leading to $\mathcal{E}_{\mathrm{vor}} \to 0$. Conversely, increasing values of $\mathcal{E}_{\mathrm{vor}}$ reflect higher resource consumption due to circulatory flows. {\color{black}Operationally, $\mathcal{E}_{\mathrm{vor}}$ aggregates the squared strength of local loop-inducing circulation over the entire service region. Thus, minimizing enstrophy means reducing the total intensity of routing-loop tendencies, rather than simply shortening paths or reducing traffic volume.} Minimizing $\mathcal{E}_{\mathrm{vor}}$ is therefore equivalent to aligning the traffic flow with the potential gradient, thereby reaching a potential flow regime that maximizes transport efficiency. Consequently, the routing optimization problem is formulated as the minimization of the enstrophy subject to network connectivity constraints.
	
	\section{Analytical Resolution and Energy Orthogonality}\label{S3}
	
	This section presents the derivation of the potentials obtained from the HHD and characterizes the energy separation between effective transport and routing redundancy. Given the observed traffic flux $\mathbf{\Phi}$, the objective is to determine the scalar potential $\psi$ and the vector potential $\mathbf{A}$ that uniquely satisfy the governing equations established in Section \ref{S2}.
	
	\subsection{Derivation of Traffic Potentials via Poisson Equations}\label{S3.1}
	
	Given the orthogonality of the HHD, the scalar and vector potentials are governed by independent Poisson equations subject to the domain's boundary conditions.
	
	\subsubsection{Scalar Potential (Neumann Problem)}\label{subsubsec:scalar_potential_res} 
	The irrotational component is driven by the source-sink distribution $S(\mathbf{x})$.
	Substituting $\mathbf{\Phi}_{\mathrm{irr}} = -\nabla \psi$ into the continuity equation $\nabla \cdot \mathbf{\Phi} = S$ yields the Poisson equation:
	\begin{equation} 
		\nabla^2 \psi(\mathbf{x}) = -S(\mathbf{x}), \quad \forall \mathbf{x} \in \Omega.
		\label{eq:scalar_poisson}
	\end{equation}
	The physical constraint that no traffic flows across the network boundary ($\mathbf{\Phi} \cdot \mathbf{n} = 0$ on $\partial \Omega$) implies a \textit{homogeneous Neumann boundary condition} for the scalar potential:
	\begin{equation} 
		\frac{\partial \psi}{\partial n} = -\mathbf{\Phi}_{\mathrm{irr}} \cdot \mathbf{n} = 0, \quad \text{on } \partial \Omega.
	\end{equation}
This boundary value problem is well-posed if and only if the global compatibility condition (Assumption \ref{asm:solvability}) is met, i.e., $\int_{\Omega} S(\mathbf{x}) \, \mathrm{d}\mathbf{x} = 0$. The solution $\psi(\mathbf{x})$ is unique up to an additive constant, which is typically fixed by setting the spatial mean to zero, namely, $\int_{\Omega} \psi \, \mathrm{d}\mathbf{x} = 0$. In practice, instead of explicitly evaluating the computationally expensive Green's function integral expressions, this Poisson equation under Neumann boundary conditions can be solved efficiently using standard numerical Poisson solvers, e.g., via the discrete cosine transform or iterative graph Laplacian methods, as detailed in Section \ref{S4.3}.
	
	\subsubsection{Vector Potential (Dirichlet Problem)}\label{subsubsec:vector_potential_res} 
	The solenoidal component is governed by the vorticity field $\boldsymbol{\omega} = \omega_z \mathbf{e}_z$.
	The vector potential $\mathbf{A} = A_z \mathbf{e}_z$ satisfies:
	\begin{equation} 
		\nabla^2 A_z(\mathbf{x}) = -\omega_z(\mathbf{x}).
		\label{eq:scalar_az_poisson}
	\end{equation}
	For the flow to be tangent to the boundary (implying no flux crossing $\partial \Omega$), the stream function $A_z$ must be constant along the boundary.
	Without loss of generality, we impose the \textit{homogeneous Dirichlet boundary condition}:
	\begin{equation} 
		A_z(\mathbf{x}) = 0, \quad \text{on } \partial \Omega.
	\end{equation}
	Similar to the scalar potential, the vector potential $A_z(\mathbf{x}) \mathbf{e}_z$ under Dirichlet boundary conditions can be readily obtained via standard grid-based numerical Poisson solvers, such as fast Fourier transform (FFT), tailored for bounded domains.
	
	\subsection{Orthogonal Decomposition of Transport Energy}\label{S3.2}
	
We quantify the macroscopic transport cost using the total transport energy functional $\mathcal{E}_{\mathrm{tot}}$, defined as the squared $L^2$-norm of the flux field over the network domain $\Omega$:
\begin{equation}
	\mathcal{E}_{\mathrm{tot}} \triangleq \int_{\Omega} \|\mathbf{\Phi}(\mathbf{x})\|^2 \,\mathrm{d}\mathbf{x}.
	\label{eq:total_energy}
\end{equation}
{\color{black}
	Here, $\mathcal{E}_{\mathrm{tot}}$ is a normalized transport-cost functional rather than a complete battery-energy model.
	At the discrete link level, radio resource consumption depends on link distance, channel condition, retransmission probability, transmission power control, and congestion-dependent resource occupation. If $f_{ij}$ denotes the packet forwarding load over link $(i,j)$ and $e_{ij}$ denotes the corresponding per-packet radio cost, a discrete radio-cost model can be written as
		$
		\mathcal{E}_{\mathrm{radio}}
		=
		\sum_{(i,j)\in\mathcal{E}}
		e_{ij} f_{ij}.
		$
		The continuum functional $\mathcal{E}_{\mathrm{tot}}$ should therefore be interpreted as a normalized surrogate of the aggregate forwarding load under spatial averaging and normalized link costs, rather than as a direct measurement of battery expenditure.
		}
	Based on the HHD established in Theorem~\ref{thm:traffic_decomp}, this energy functional can be decoupled into two orthogonal components, representing the effective transport effort and the routing redundancy, respectively.

	\begin{Proposition}\label{prop:ortho} 
		Consider a bounded domain $\Omega$ with boundary $\partial \Omega$. If the solenoidal component of the traffic flux satisfies the no-penetration boundary condition $\mathbf{\Phi}_{\mathrm{sol}} \cdot \mathbf{n} = 0$ on $\partial \Omega$, where $\mathbf{n}$ denotes the outward unit normal vector, the total transport energy is the sum of the irrotational energy and the solenoidal energy:
		\begin{equation} 
			\mathcal{E}_{\mathrm{tot}} = \mathcal{E}_{\mathrm{irr}} + \mathcal{E}_{\mathrm{sol}},
			\label{eq:energy_split}
		\end{equation}
		in which the component energies are defined as $\mathcal{E}_{\mathrm{irr}} = \int_{\Omega} \|\nabla \psi\|^2 \,\mathrm{d}\mathbf{x}$ and $\mathcal{E}_{\mathrm{sol}} = \int_{\Omega} \|\nabla \times \mathbf{A}\|^2 \,\mathrm{d}\mathbf{x}$, respectively.
	\end{Proposition}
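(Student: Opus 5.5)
The plan is to expand the squared norm of $\mathbf{\Phi} = -\nabla\psi + \nabla\times\mathbf{A}$ from Theorem~\ref{thm:traffic_decomp} and show that the cross term vanishes. Pointwise,
\begin{equation*}
\|\mathbf{\Phi}\|^2 = \|\nabla\psi\|^2 + \|\nabla\times\mathbf{A}\|^2 - 2\,\nabla\psi\cdot(\nabla\times\mathbf{A}).
\end{equation*}
Integrating over $\Omega$ immediately gives $\mathcal{E}_{\mathrm{tot}} = \mathcal{E}_{\mathrm{irr}} + \mathcal{E}_{\mathrm{sol}} - 2\int_{\Omega}\nabla\psi\cdot\mathbf{\Phi}_{\mathrm{sol}}\,\mathrm{d}\mathbf{x}$. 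So the whole task reduces to proving that the $L^2$ inner product of $\nabla\psi$ and $\mathbf{\Phi}_{\mathrm{sol}}$ is zero.

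For the cross term I would use the product rule $\nabla\cdot(\psi\,\mathbf{\Phi}_{\mathrm{sol}}) = \nabla\psi\cdot\mathbf{\Phi}_{\mathrm{sol}} + \psi\,\nabla\cdot\mathbf{\Phi}_{\mathrm{sol}}$. Two facts then finish the argument. First, $\nabla\cdot\mathbf{\Phi}_{\mathrm{sol}} = \nabla\cdot(\nabla\times\mathbf{A}) \equiv 0$, the identity already used in Section~\ref{S2.3}. Second, the divergence theorem gives
\begin{equation*}
\int_{\Omega}\nabla\psi\cdot\mathbf{\Phi}_{\mathrm{sol}}\,\mathrm{d}\mathbf{x} = \oint_{\partial\Omega}\psi\,(\mathbf{\Phi}_{\mathrm{sol}}\cdot\mathbf{n})\,\mathrm{d}s,
\end{equation*}
and this boundary integral vanishes by the hypothesis $\mathbf{\Phi}_{\mathrm{sol}}\cdot\mathbf{n}=0$.

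In the two-dimensional setting $\mathbf{A}=A_z\mathbf{e}_z$, the hypothesis is consistent with the homogeneous Dirichlet condition $A_z=0$ on $\partial\Omega$ from Section~\ref{subsubsec:vector_potential_res}. Indeed, $\nabla\times(A_z\mathbf{e}_z) = (\partial_y A_z, -\partial_x A_z)$, so $\mathbf{\Phi}_{\mathrm{sol}}\cdot\mathbf{n}$ is the tangential derivative of $A_z$ along the boundary, which is zero when $A_z$ is constant there. I would note this as a sanity check, but the proof itself only needs the stated hypothesis.

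The step most likely to need care is justifying the divergence theorem. This requires enough regularity: $\partial\Omega$ Lipschitz (fine for the convex hull of Assumption~\ref{asm:domain_topology}), and $\psi\in H^1(\Omega)$, $\mathbf{\Phi}_{\mathrm{sol}}\in H(\mathrm{div},\Omega)$ so that the normal trace makes sense. Since $S$ is a finite sum of Gaussians, it is smooth. Elliptic regularity for the Neumann problem \eqref{eq:scalar_poisson}, which is solvable by Assumption~\ref{asm:solvability}, therefore gives $\psi$ smooth enough. For $\mathbf{\Phi}$ I would simply assume it is square-integrable with square-integrable divergence and curl, and invoke the generalized Green formula in $H(\mathrm{div})$. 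Uniqueness of the decomposition is not needed here; only the existence of the representation and the boundary condition enter.
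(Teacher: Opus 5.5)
Your proposal is correct and follows essentially the same route as the paper: expand $\|\mathbf{\Phi}\|^2$, rewrite the cross term via $\nabla\cdot(\psi\mathbf{\Phi}_{\mathrm{sol}})$ using $\nabla\cdot\mathbf{\Phi}_{\mathrm{sol}}=0$, and remove the resulting boundary integral with the divergence theorem. The only difference is that the paper obtains $\mathbf{\Phi}_{\mathrm{sol}}\cdot\mathbf{n}=0$ from the Dirichlet condition $A_z|_{\partial\Omega}=0$ (as a tangential derivative of $A_z$), whereas you invoke the stated hypothesis directly and treat the Dirichlet link as a consistency check; your regularity remarks add rigor that the paper omits.
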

	
	\begin{proof}
		The total transport energy is defined by the functional $\mathcal{E}_{\mathrm{tot}} \!=\! \int_{\Omega} \! \|\mathbf{\Phi}\|^2 \, \mathrm{d}\mathbf{x}$. Substituting the decomposition $\mathbf{\Phi}\! =\! -\nabla \!\psi \!+\! \mathbf{\Phi}_{\mathrm{sol}}$, the quadratic form of the integrand expands as:
		\begin{equation} 
			\|\mathbf{\Phi}\|^2 \!\! =\! \|\!-\! \nabla \psi \!+\! \mathbf{\Phi}_{\mathrm{sol}}\|^2 \!=\! \|\nabla \psi\|^2 \!+\! \|\mathbf{\Phi}_{\mathrm{sol}}\|^2 \!-\! 2 \nabla \psi \cdot \!\mathbf{\Phi}_{\mathrm{sol}}.
		\end{equation}
		Integrating over the domain $\Omega$, the total energy becomes:
		\begin{equation} 
			\mathcal{E}_{\mathrm{tot}} = \mathcal{E}_{\mathrm{irr}} + \mathcal{E}_{\mathrm{sol}} - 2 \int_{\Omega} \nabla \psi \cdot \mathbf{\Phi}_{\mathrm{sol}} \, \mathrm{d}\mathbf{x}.
		\end{equation}
		The proof of orthogonality hinges on the vanishing of the cross-term integral $I_{\mathrm{cross}} = \int_{\Omega} \nabla \psi \cdot \mathbf{\Phi}_{\mathrm{sol}} \, \mathrm{d}\mathbf{x}$. Using the vector calculus identity $\nabla \cdot (\psi \mathbf{\Phi}_{\mathrm{sol}}) = \nabla \psi \cdot \mathbf{\Phi}_{\mathrm{sol}} + \psi (\nabla \cdot \mathbf{\Phi}_{\mathrm{sol}})$, the integrand can be rewritten as:
		\begin{equation}\label{eqApA1}  
			\nabla \psi \cdot \mathbf{\Phi}_{\mathrm{sol}} = \nabla \cdot (\psi \mathbf{\Phi}_{\mathrm{sol}}) - \psi (\nabla \cdot \mathbf{\Phi}_{\mathrm{sol}}).
		\end{equation}
		By definition, the solenoidal component is divergence-free, i.e., $\nabla \cdot \mathbf{\Phi}_{\mathrm{sol}} \equiv 0$. Consequently, the second term on the right-hand side of (\ref{eqApA1}) vanishes, simplifying the integral to:
		\begin{equation} 
			I_{\mathrm{cross}} = \int_{\Omega} \nabla \cdot (\psi \mathbf{\Phi}_{\mathrm{sol}}) \, \mathrm{d}\mathbf{x}.
		\end{equation}
		Applying the Divergence Theorem, the volume integral transforms into a boundary flux integral:
		\begin{equation} 
			I_{\mathrm{cross}} = \oint_{\partial \Omega} \psi (\mathbf{\Phi}_{\mathrm{sol}} \cdot \mathbf{n}) \, \mathrm{d}l.
		\end{equation}
		Recall from Section \ref{subsubsec:vector_potential_res} that the vector potential satisfies the homogeneous Dirichlet boundary condition ($A_z|_{\partial \Omega} = 0$), which physically enforces that the boundary is a streamline of the solenoidal flow. 
		Mathematically, this implies $\mathbf{\Phi}_{\mathrm{sol}} \cdot \mathbf{n} = (\nabla \times A_z \mathbf{e}_z) \cdot \mathbf{n} = \frac{\partial A_z}{\partial \tau} = 0$, where $\tau$ is the tangential direction.
		Consequently, the boundary integral vanishes identically ($I_{\mathrm{cross}} = 0$), proving that the irrotational and solenoidal energies are strictly decoupled in the bounded domain.
	\end{proof}
	
	The derived orthogonality \eqref{eq:energy_split} confirms that the energy associated with routing loops is mathematically independent of the energy required for effective transport. {\color{black}From a routing perspective, this means that reducing the vortex component does not remove the source-to-sink potential component required for delivery. This is the reason VDR can suppress loop-induced redundancy without intentionally throttling useful traffic.} This theoretical separation allows us to define the vortex energy ratio (VER) as a normalized metric to quantify topological inefficiency:
	\begin{equation} 
		\eta_{\mathrm{vor}} \triangleq \frac{\mathcal{E}_{\mathrm{sol}}}{\mathcal{E}_{\mathrm{tot}}} = \frac{\int_{\Omega} \|\nabla \times \mathbf{A}\|^2 \,\mathrm{d}\mathbf{x}}{\int_{\Omega} \|\mathbf{\Phi}\|^2 \,\mathrm{d}\mathbf{x}}.
		\label{eq:ver_def}
	\end{equation}
	
	The metric $\eta_{\mathrm{vor}}$ characterizes the macroscopic flow regime of the network:
	\begin{itemize}
		\item \textit{Potential Flow Regime ($\eta_{\mathrm{vor}} \to 0$):} Represents an ideal transport state. Packet trajectories align strictly with the potential gradient $-\nabla \psi$, ensuring loop-free delivery.
		\item \textit{Vortex-Dominated Regime ($\eta_{\mathrm{vor}} \to 1$):} Indicates severe topological redundancy. A dominant fraction of traffic flux is trapped in circulatory structures, occupying capacity without contributing to net displacement.
	\end{itemize}
	
	{\color{black}
		The solenoidal component has a more specific operational meaning at the discrete network level. A nonzero solenoidal component corresponds to cyclic packet forwarding, where packets traverse additional links without changing the net source-sink balance. If $f^{\mathrm{sol}}_{ij}$ denotes the loop-induced component of the forwarding load on link $(i,j)$ and $e_{ij}$ denotes the corresponding per-packet radio cost, the redundant radio-resource cost can be represented as
		$
		\mathcal{E}_{\mathrm{red}}
		=
		\sum_{(i,j)\in\mathcal{E}}
		e_{ij} f^{\mathrm{sol}}_{ij}.
		$
		Reducing $\mathcal{E}_{\mathrm{sol}}$ therefore suppresses loop-induced redundant transmissions, which lowers channel occupation, retransmission exposure, and queueing pressure. This should be interpreted as reducing the redundant part of radio-resource consumption, rather than as solving a full physical power-control or battery-minimization problem.
	}
	\section{Dynamic Evolution and Vorticity Dissipation Control}\label{S4}
	
	While Section \ref{S3} establishes the static decomposition of the traffic field, robust network optimization necessitates a dynamic framework to actively suppress routing loops. In this section, we formulate the routing optimization problem as a continuous time-evolution process. By constructing a gradient flow on the network enstrophy functional, we derive a vorticity dissipation equation that guarantees the convergence of the traffic flux to a loop-free irrotational equilibrium.
	
	\subsection{Enstrophy Functional and VDR Formulation} \label{S4.1}
	
	To eliminate the solenoidal component responsible for routing loops, we adopt the network enstrophy as the objective functional for minimization.
	Unlike total energy minimization, which penalizes path length, enstrophy minimization explicitly targets the topological complexity (curl) of the flow, thereby inducing field smoothness and stability.
	The objective functional $\mathcal{F}[\mathbf{\Phi}]$ is exactly the network enstrophy defined in \eqref{eq:enstrophy}:
	\begin{equation} 
		\mathcal{F}[\mathbf{\Phi}] \triangleq \mathcal{E}_{\mathrm{vor}} = \frac{1}{2} \int_{\Omega} \|\nabla \times \mathbf{\Phi}(\mathbf{x})\|^2 \, \mathrm{d}\mathbf{x}.
		\label{eq:enstrophy_functional}
	\end{equation}
	{\color{black}
		For notation economy, $\mathcal{F}[\mathbf{\Phi}]$ is used only to emphasize the optimization objective, and it is physically identical to the network enstrophy $\mathcal{E}_{\mathrm{vor}}$.
	} We formulate  the VDR as a dynamic evolution process governed by the gradient flow of this functional. A virtual optimization time variable $\tau$ is introduced to characterize the iterative update. The evolution of the flux field $\mathbf{\Phi}(\mathbf{x}, \tau)$ follows the steepest descent direction:
	\begin{equation} 
		\frac{\partial \mathbf{\Phi}}{\partial \tau} = - \mu \frac{\delta \mathcal{F}}{\delta \mathbf{\Phi}},
		\label{eq:gradient_flow_def}
	\end{equation}
	where $\mu > 0$ denotes the mobility parameter. To determine the explicit form of the evolution equation, we first derive the functional gradient of the enstrophy.
	
	\begin{Lemma}\label{lem:frechet_derivative}
		Let $\delta \mathbf{\Phi}$ be an arbitrary infinitesimal perturbation satisfying the boundary condition $\delta \mathbf{\Phi} \times \mathbf{n} = \mathbf{0}$ on $\partial \Omega$. The Fréchet derivative (functional gradient) of the enstrophy functional $\mathcal{F}[\mathbf{\Phi}]$ is given by the double curl of the flux field:
		\begin{equation} 
			\frac{\delta \mathcal{F}}{\delta \mathbf{\Phi}} = \nabla \times (\nabla \times \mathbf{\Phi}).
		\end{equation}
	\end{Lemma}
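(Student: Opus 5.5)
The plan is to compute the first variation of $\mathcal{F}$ directly and identify the gradient through the $L^2$ inner product. For the perturbed field $\mathbf{\Phi}+\epsilon\,\delta\mathbf{\Phi}$, linearity of the curl gives
\begin{equation*}
\mathcal{F}[\mathbf{\Phi}+\epsilon\,\delta\mathbf{\Phi}] = \mathcal{F}[\mathbf{\Phi}] + \epsilon\int_{\Omega} (\nabla\times\mathbf{\Phi})\cdot(\nabla\times\delta\mathbf{\Phi})\,\mathrm{d}\mathbf{x} + \frac{\epsilon^2}{2}\int_{\Omega}\|\nabla\times\delta\mathbf{\Phi}\|^2\,\mathrm{d}\mathbf{x}.
\end{equation*}
Because $\mathcal{F}$ is an exact quadratic form, the remainder is genuinely quadratic in $\delta\mathbf{\Phi}$, measured in the curl seminorm. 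So once the linear term has a Riesz representative in $L^2$, the Gateaux derivative is also the Fréchet derivative. I would work in the space $H(\mathrm{curl};\Omega)$ and assume enough regularity that $\nabla\times\mathbf{\Phi}\in H(\mathrm{curl})$, so that the double curl lies in $L^2$.

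Next I would move the curl from $\delta\mathbf{\Phi}$ onto $\boldsymbol{\omega}=\nabla\times\mathbf{\Phi}$ using
\begin{equation*}
\nabla\cdot(\delta\mathbf{\Phi}\times\boldsymbol{\omega}) = \boldsymbol{\omega}\cdot(\nabla\times\delta\mathbf{\Phi}) - \delta\mathbf{\Phi}\cdot(\nabla\times\boldsymbol{\omega}).
\end{equation*}
Integrating and applying the divergence theorem gives
\begin{equation*}
\delta\mathcal{F} = \int_{\Omega}\delta\mathbf{\Phi}\cdot\nabla\times(\nabla\times\mathbf{\Phi})\,\mathrm{d}\mathbf{x} + \oint_{\partial\Omega}(\delta\mathbf{\Phi}\times\boldsymbol{\omega})\cdot\mathbf{n}\,\mathrm{d}l.
\end{equation*}
The boundary integrand equals $(\mathbf{n}\times\delta\mathbf{\Phi})\cdot\boldsymbol{\omega}$, so it vanishes under the stated condition $\delta\mathbf{\Phi}\times\mathbf{n}=\mathbf{0}$. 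It vanishes independently under the free-slip condition $\boldsymbol{\omega}=\mathbf{0}$ on $\partial\Omega$ from Assumption~\ref{asm:boundary_conditions}, which I would note as a consistency check. The $L^2$ representative of the first variation is therefore $\nabla\times(\nabla\times\mathbf{\Phi})$.

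In the planar setting of the paper, $\boldsymbol{\omega}=\omega_z\mathbf{e}_z$, and I would make the computation concrete. Here $\nabla\times(\omega_z\mathbf{e}_z)=(\partial_y\omega_z,-\partial_x\omega_z)$. The boundary term becomes a line integral of $\omega_z(\delta\mathbf{\Phi}\cdot\mathbf{t})$ along the tangent $\mathbf{t}$, which vanishes when the tangential perturbation is zero. This confirms that the three-dimensional identity specializes correctly.

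The main obstacle is the boundary term and the function-space justification rather than the algebra. The perturbation class must be rich enough for the fundamental lemma of the calculus of variations to identify the gradient uniquely; $C_c^\infty(\Omega)^2$ is dense in $L^2$, which suffices. The integration by parts also requires traces to make sense. I would handle this by first proving the identity for smooth fields and then extending by density in $H_0(\mathrm{curl})$, noting that the domain in Assumption~\ref{asm:domain_topology} is convex and hence Lipschitz.
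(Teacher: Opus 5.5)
Your proof is correct and follows the same route as the paper: you take the first variation, use the cross-product divergence identity to move the curl onto $\boldsymbol{\omega}$, apply the divergence theorem, and rewrite the boundary integrand with a triple product. The one difference is in how the boundary term is killed. The paper uses the free-slip condition $\boldsymbol{\omega}=\mathbf{0}$ on $\partial\Omega$ (Assumption~\ref{asm:boundary_conditions}) and never actually invokes the lemma's stated hypothesis $\delta\mathbf{\Phi}\times\mathbf{n}=\mathbf{0}$, whereas you use that hypothesis directly and mention free-slip as an alternative; your remarks on $H(\mathrm{curl})$ and density also add rigor the paper does not provide.
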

	
	\begin{proof}
		We define the first variation of the enstrophy functional $\mathcal{F}$ with respect to an infinitesimal perturbation $\delta \mathbf{\Phi}$ as:
		\begin{equation} 
			\delta \mathcal{F} = \int_{\Omega} (\nabla \times \mathbf{\Phi}) \cdot (\nabla \times \delta \mathbf{\Phi}) \, \mathrm{d}\mathbf{x}.
		\end{equation}
		To isolate the perturbation term $\delta \mathbf{\Phi}$, we employ the vector identity regarding the divergence of a cross product:
		\begin{equation} 
			\nabla \cdot (\mathbf{A} \times \mathbf{B}) = \mathbf{B} \cdot (\nabla \times \mathbf{A}) - \mathbf{A} \cdot (\nabla \times \mathbf{B}).
		\end{equation}
		Rearranging terms and setting $\mathbf{A} = \nabla \times \mathbf{\Phi}$ and $\mathbf{B} = \delta \mathbf{\Phi}$, the integrand in the variation formulation can be rewritten as:
		\begin{equation} 
			(\nabla \!\times \!\mathbf{\Phi}) \! \cdot  \! (\nabla \! \times \! \delta \mathbf{\Phi}) \! =\!  \delta \mathbf{\Phi} \cdot  \! (\nabla \!\times \!(\nabla \!\times \! \mathbf{\Phi})) - \nabla \!\cdot ((\nabla \! \times\! \mathbf{\Phi}) \times \delta \mathbf{\Phi}) .
		\end{equation}
		Substituting this expansion back into the integral yields:
		\begin{equation} 
			\delta \mathcal{F} \!= \!\! \int_{\Omega}\! \delta \mathbf{\Phi} \!\cdot \! (\nabla \times (\nabla \times \mathbf{\Phi})) \, \mathrm{d}\mathbf{x} -\!\! \int_{\Omega}\!\! \nabla \cdot ((\nabla \times \mathbf{\Phi}) \times \delta \mathbf{\Phi}) \, \mathrm{d}\mathbf{x}.
		\end{equation}
		Applying the divergence theorem \cite{Arfken2013}, the second term converts to a boundary integral:
		\begin{equation} 
			I_{\mathrm{boundary}} = \oint_{\partial \Omega}\! ((\nabla \times \mathbf{\Phi}) \times \delta \mathbf{\Phi}) \cdot \mathbf{n} \, \mathrm{d}l.
		\end{equation}
		Using the vector triple product identity $(\mathbf{A} \times \mathbf{B}) \cdot \mathbf{C} = (\mathbf{C} \times \mathbf{A}) \cdot \mathbf{B}$, and substituting $\nabla \times \mathbf{\Phi} = \boldsymbol{\omega}$, the integrand becomes:
		\begin{equation} 
			((\nabla \times \mathbf{\Phi}) \times \delta \mathbf{\Phi}) \cdot \mathbf{n} = (\mathbf{n} \times \boldsymbol{\omega}) \cdot \delta \mathbf{\Phi}.
		\end{equation}
		Under the \textit{Free-Slip} assumption (Assumption \ref{asm:boundary_conditions}), the vorticity vanishes on the boundary ($\boldsymbol{\omega}|_{\partial \Omega} = \mathbf{0}$). 
		Consequently, the boundary integral $I_{\mathrm{boundary}}$ vanishes identically, independent of the perturbation $\delta \mathbf{\Phi}$.
		Thus, the first variation simplifies to:
		\begin{equation} 
			\delta \mathcal{F} = \int_{\Omega} \delta \mathbf{\Phi} \cdot [\nabla \times (\nabla \times \mathbf{\Phi})] \, \mathrm{d}\mathbf{x}.
		\end{equation}
		This completes the proof.
	\end{proof}
	
	Substituting the result from Lemma \ref{lem:frechet_derivative} into \eqref{eq:gradient_flow_def}, we obtain the governing vorticity dissipation equation:
	\begin{equation} 
		\frac{\partial \mathbf{\Phi}(\mathbf{x}, \tau)}{\partial \tau} = -\mu \left( \nabla \times (\nabla \times \mathbf{\Phi}(\mathbf{x}, \tau)) \right).
		\label{eq:evolution_pde}
	\end{equation}
	
	{\color{black}Operationally, this gradient-flow update can be read as a relaxation rule for forwarding fields: whenever neighboring forwarding directions create circulation, the solenoidal component is progressively damped, while the source-sink potential component is preserved. Therefore, the update in \eqref{eq:evolution_pde} directly implements vorticity dissipation at the field level.}

	\begin{remark}
		\label{rem:heat_equation}
		Equation \eqref{eq:evolution_pde} is analytically equivalent to a vector diffusion process. Using the Laplacian identity $\nabla \times (\nabla \times \mathbf{\Phi}) = \nabla(\nabla \cdot \mathbf{\Phi}) - \nabla^2 \mathbf{\Phi}$ and noting that $\nabla \cdot \mathbf{\Phi}$ is time-invariant, the evolution simplifies to $\partial_\tau \mathbf{\Phi}_{\mathrm{sol}} = \mu \nabla^2 \mathbf{\Phi}_{\mathrm{sol}}$. This implies that the control mechanism triggers a diffusive dissipation of the vorticity, effectively annealing topological loops.
	\end{remark}
	
	\subsection{Vorticity Dissipation Theorem}\label{S4.2}
	
	To rigorously analyze the convergence properties, we first examine the dynamics of the vorticity field itself. By applying the curl operator to the flow evolution equation \eqref{eq:evolution_pde}, we obtain the governing equation for routing loops:
	\begin{equation} 
		\frac{\partial \boldsymbol{\omega}}{\partial \tau} = \nabla \times \frac{\partial \mathbf{\Phi}}{\partial \tau} = -\mu \nabla \times (\nabla \times \boldsymbol{\omega}).
	\end{equation}
	Recalling that the vorticity field $\boldsymbol{\omega}$ is inherently solenoidal, i.e., $\nabla \cdot \boldsymbol{\omega} \equiv 0$, the vector identity $\nabla \times (\nabla \times \boldsymbol{\omega}) = \nabla(\nabla \cdot \boldsymbol{\omega}) - \nabla^2 \boldsymbol{\omega}$ simplifies the dynamics to a vector heat equation:
	\begin{equation} 
		\frac{\partial \boldsymbol{\omega}(\mathbf{x}, \tau)}{\partial \tau} = \mu \nabla^2 \boldsymbol{\omega}(\mathbf{x}, \tau).
		\label{eq:vorticity_heat_eqn}
	\end{equation}
	
	Equation \eqref{eq:vorticity_heat_eqn} reveals the fundamental physical mechanism of the VDR framework: the routing vorticity undergoes a  diffusion process. Similar to how thermal energy diffuses from high-temperature regions, topological routing loops spontaneously spread out and dissipate towards the boundaries, driven by the Laplacian operator $\nabla^2$.
	
	Based on this diffusion mechanism, the following theorem establishes the Lyapunov stability of the system.
	
	\begin{Theorem}\label{thm:enstrophy_dissipation} 
		Governed by the evolution dynamics in \eqref{eq:evolution_pde} and subject to the \textit{Ideal Confining Boundary Conditions} (Assumption \ref{asm:boundary_conditions}) which impose vanishing marginal vorticity ($\boldsymbol{\omega} = \mathbf{0}$ on $\partial \Omega$), the total network enstrophy $\mathcal{E}_{\mathrm{vor}}(\tau)$ decays monotonically with respect to the optimization time $\tau$:
		\begin{equation} 
			\frac{\mathrm{d}}{\mathrm{d}\tau}\mathcal{E}_{\mathrm{vor}}(\tau) = -\mu \int_{\Omega} \|\nabla \boldsymbol{\omega}\|^2 \, \mathrm{d}\mathbf{x} \leq 0.
			\label{eq:theorem_inequality}
		\end{equation}
	\end{Theorem}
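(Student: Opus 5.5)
The plan is to treat $\mathcal{E}_{\mathrm{vor}}(\tau)=\frac12\int_\Omega\|\boldsymbol{\omega}(\mathbf{x},\tau)\|^2\,\mathrm{d}\mathbf{x}$ as a Lyapunov functional and differentiate it along trajectories of \eqref{eq:evolution_pde}. I will not work with the flux directly; instead I will use the vorticity heat equation \eqref{eq:vorticity_heat_eqn}, already derived above by taking the curl of \eqref{eq:evolution_pde} and using $\nabla\cdot\boldsymbol{\omega}\equiv 0$. In the planar setting $\boldsymbol{\omega}=\omega_z\mathbf{e}_z$, so the computation reduces to a scalar heat equation $\partial_\tau\omega_z=\mu\nabla^2\omega_z$. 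This makes the identity a standard energy estimate.

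\textbf{Step 1.} Assuming enough regularity to interchange $\mathrm{d}/\mathrm{d}\tau$ and the integral over the fixed domain $\Omega$, differentiate to get
$\frac{\mathrm{d}}{\mathrm{d}\tau}\mathcal{E}_{\mathrm{vor}}=\int_\Omega \boldsymbol{\omega}\cdot\partial_\tau\boldsymbol{\omega}\,\mathrm{d}\mathbf{x}=\mu\int_\Omega \boldsymbol{\omega}\cdot\nabla^2\boldsymbol{\omega}\,\mathrm{d}\mathbf{x}$.

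\textbf{Step 2.} Integrate by parts with Green's first identity, applied componentwise, using $\nabla\cdot(\omega_z\nabla\omega_z)=\|\nabla\omega_z\|^2+\omega_z\nabla^2\omega_z$ and the divergence theorem. This gives
$\int_\Omega \omega_z\nabla^2\omega_z\,\mathrm{d}\mathbf{x}=\oint_{\partial\Omega}\omega_z\,\partial_n\omega_z\,\mathrm{d}l-\int_\Omega\|\nabla\omega_z\|^2\,\mathrm{d}\mathbf{x}$.

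\textbf{Step 3.} Eliminate the boundary term with the vanishing-vorticity condition \eqref{eq:boundary_omega} of Assumption \ref{asm:boundary_conditions}. Since $\omega_z=0$ on $\partial\Omega$, the integrand $\omega_z\partial_n\omega_z$ vanishes pointwise whatever the normal derivative is. This leaves exactly $-\mu\int_\Omega\|\nabla\boldsymbol{\omega}\|^2\,\mathrm{d}\mathbf{x}$, which is nonpositive because $\mu>0$. That is \eqref{eq:theorem_inequality}.

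As a cross-check, Lemma \ref{lem:frechet_derivative} gives the alternative route $\frac{\mathrm{d}}{\mathrm{d}\tau}\mathcal{F}=\int\frac{\delta\mathcal{F}}{\delta\mathbf{\Phi}}\cdot\partial_\tau\mathbf{\Phi}=-\mu\int\|\nabla\times\boldsymbol{\omega}\|^2$. In 2D, $\|\nabla\times(\omega_z\mathbf{e}_z)\|=\|\nabla\omega_z\|$, so the two routes give the same expression.

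The main obstacle is justifying that the boundary condition $\boldsymbol{\omega}=\mathbf{0}$ on $\partial\Omega$ is preserved along the flow for all $\tau$. The heat equation \eqref{eq:vorticity_heat_eqn} is only well-posed with such a condition imposed, and it is not automatically implied by the flux evolution. I will handle this as the paper's assumption does: interpret Assumption \ref{asm:boundary_conditions} as holding for every $\tau$, so that \eqref{eq:vorticity_heat_eqn} is a Dirichlet heat problem, and assume sufficient smoothness ($\boldsymbol{\omega}\in H^1_0(\Omega)$, with $\partial_\tau\boldsymbol{\omega}\in L^2$) to justify Step 1. I may also add a remark that Poincaré's inequality then upgrades the result to exponential decay, $\mathcal{E}_{\mathrm{vor}}(\tau)\le e^{-2\mu\lambda_1\tau}\mathcal{E}_{\mathrm{vor}}(0)$, where $\lambda_1$ is the first Dirichlet eigenvalue of $-\nabla^2$ on $\Omega$.
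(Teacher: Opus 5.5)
Your proposal is correct and matches the paper's proof step for step: differentiate $\mathcal{E}_{\mathrm{vor}}$, substitute the vorticity heat equation \eqref{eq:vorticity_heat_eqn}, apply Green's first identity, and drop the boundary term using $\boldsymbol{\omega}=\mathbf{0}$ on $\partial\Omega$. Your extra remarks are sound additions that the paper leaves implicit or omits: the gradient-flow cross-check via Lemma~\ref{lem:frechet_derivative}, the explicit requirement that the boundary condition hold for every $\tau$, and the Poincar\'e-based exponential decay.
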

	
	\begin{proof}
		The proof follows directly from the vorticity diffusion dynamics established in \eqref{eq:vorticity_heat_eqn}.
		Differentiating the enstrophy functional $\mathcal{E}_{\mathrm{vor}} = \frac{1}{2} \int_{\Omega} \|\boldsymbol{\omega}\|^2 \, \mathrm{d}\mathbf{x}$ yields:
		\begin{equation} 
			\frac{\mathrm{d} \mathcal{E}_{\mathrm{vor}}}{\mathrm{d}\tau} = \int_{\Omega} \boldsymbol{\omega} \cdot \frac{\partial \boldsymbol{\omega}}{\partial \tau} \, \mathrm{d}\mathbf{x}.
		\end{equation}
		Substituting the heat equation $\partial_\tau \boldsymbol{\omega} = \mu \nabla^2 \boldsymbol{\omega}$ into the integral:
		\begin{equation} 
			\frac{\mathrm{d} \mathcal{E}_{\mathrm{vor}}}{\mathrm{d}\tau} = \mu \int_{\Omega} \boldsymbol{\omega} \cdot (\nabla^2 \boldsymbol{\omega}) \, \mathrm{d}\mathbf{x}.
		\end{equation}
		Applying Green's first identity to the heat equation term:
		\begin{equation} 
			\int_{\Omega} \boldsymbol{\omega} \cdot \nabla^2 \boldsymbol{\omega} \, \mathrm{d}\mathbf{x} = \oint_{\partial \Omega} \boldsymbol{\omega} \cdot \frac{\partial \boldsymbol{\omega}}{\partial n} \, \mathrm{d}s - \int_{\Omega} \|\nabla \boldsymbol{\omega}\|^2 \, \mathrm{d}\mathbf{x}.
		\end{equation}
		Invoking the Vanishing Vorticity condition from Assumption \ref{asm:boundary_conditions} ($\boldsymbol{\omega} = \mathbf{0}$ on $\partial \Omega$), the surface integral term vanishes.
		Therefore, the time derivative of the enstrophy becomes strictly non-positive:
		\begin{equation} 
			\frac{\mathrm{d} \mathcal{E}_{\mathrm{vor}}}{\mathrm{d}\tau} = -\mu \int_{\Omega} \|\nabla \boldsymbol{\omega}\|^2 \, \mathrm{d}\mathbf{x} \leq 0.
		\end{equation}
		This confirms the monotonic dissipation of routing loops, completing the proof.
	\end{proof}
	
	Theorem \ref{thm:enstrophy_dissipation} confirms that the system does not exhibit oscillatory behavior. As $\tau \to \infty$, the enstrophy converges to zero, implying that the traffic flux asymptotically reaches the unique irrotational state $\mathbf{\Phi}_{\mathrm{irr}}$, which represents the global energy minimum within the solenoidal subspace.

	\begin{remark}
		While Theorem~\ref{thm:enstrophy_dissipation} assumes static topologies, it can be analytically shown that in time-varying environments, the system exhibits input-to-state stability. This ensures that the network vorticity is confined within a bounded neighborhood of zero, provided that the topological drift is bounded.
		{\color{black}
			A similar bounded-input interpretation applies when the ideal boundary condition $\left.\boldsymbol{\omega}\right|_{\partial\Omega}=\mathbf{0}$ is not exactly satisfied. In this case, the enstrophy balance becomes
			\begin{equation}
				\frac{\mathrm{d}\mathcal{E}_{\mathrm{vor}}}{\mathrm{d}\tau}
				=
				\mu \oint_{\partial\Omega}
				\boldsymbol{\omega}\cdot\frac{\partial\boldsymbol{\omega}}{\partial n}\,\mathrm{d}s
				-
				\mu \int_{\Omega} \|\nabla\boldsymbol{\omega}\|^2\,\mathrm{d}x .
				\label{eq:boundary_vorticity_balance}
			\end{equation}
			Assumption~5 corresponds to the special case where the boundary term vanishes. More generally, monotonic dissipation still holds for dissipative boundaries satisfying
			$\oint_{\partial\Omega}\omega \frac{\partial \omega}{\partial n}\,\mathrm{d}s \le 0$. If an irregular geographical boundary or coverage hole injects bounded local vorticity, the boundary integral acts as a bounded input. Consequently, VDR no longer guarantees exact convergence to zero enstrophy, but confines the network vorticity to a bounded residual neighborhood determined by the magnitude of the topology drift and boundary-induced circulation.
		}
	\end{remark}

	\subsection{Discrete Field Evolution on Graphs} \label{S4.3}
	
	To implement the VDR framework in practical wireless networks, we project the continuous field dynamics onto a discrete graph topology $\mathcal{G}(\mathcal{N}, \mathcal{E})$, where $\mathcal{N}$ represents the set of nodes and $\mathcal{E}$ the set of directional links. 
	This projection maps the infinite-dimensional Hilbert space of the continuum fields to finite-dimensional vector spaces. 
	Specifically, the continuous scalar potential $\psi(\mathbf{x})$ maps to the nodal potential vector $\bm{\psi}\! =\! \big[\psi_1, \cdots , \psi_{|\mathcal{N}|}\big]^{\top}\! \in\! \mathbb{R}^{|\mathcal{N}|}$, and the flux field $\mathbf{\Phi}(\mathbf{x})$ maps to the edge flow vector $\boldsymbol{\phi} \in \mathbb{R}^{|\mathcal{E}|}$. 
	The differential operators derived in the continuum limit are replaced by their spectral graph theoretic counterparts, ensuring that the physical conservation laws—mass conservation and vorticity dissipation—are preserved in the discrete domain.
	
	{\color{black}
		\subsubsection{Finite-density consistency analysis}
		Although the continuum formulation is derived under an asymptotic density assumption, the implemented VDR protocol operates on a finite random geometric graph. We therefore add a finite-density consistency estimate to clarify how the reconstructed continuum flux differs from the graph-projected field used by the discrete solver.

		\begin{Lemma}
			\label{lem:finite_density_projection}
			Let $\Omega\subset\mathbb{R}^2$ be a bounded domain with area $|\Omega|$ measured in $\mathrm{m}^2$, and let $\mathbf{\Phi}\in C^2(\Omega)^2$ denote the ideal continuum traffic flux. Consider a finite random geometric graph $G_\rho=(V_\rho,E_\rho)$ generated with node density $\rho$ nodes/km$^2$. Define the characteristic inter-node spacing as
			\begin{equation}
				\ell_\rho=\frac{1000}{\sqrt{\rho}}\quad \mathrm{m}.
				\label{eq:mean_spacing_density}
			\end{equation}
			Let $h$ be the kernel bandwidth and let $\Delta_g$ be the grid spacing of the numerical field solver. Suppose that $h>\ell_\rho$, $R_c\geq 2\ell_\rho$, and that the graph samples satisfy the usual local fill-distance condition with fill distance of order $\ell_\rho$ up to logarithmic factors. Then, with probability at least $1-N^{-c_0}$ for some constant $c_0>0$, the reconstructed graph-projected flux field $\widehat{\mathbf{\Phi}}_{G,h}$ satisfies
			\begin{align}
				&\left\|\widehat{\mathbf{\Phi}}_{G,h}-\mathbf{\Phi}\right\|_{L^2(\Omega)} \nonumber\\
				& \quad \leq C_1\Delta_g^2+C_2h^2+C_3\frac{\ell_\rho}{h}+C_4\sqrt{\frac{|\Omega|\log N}{Nh^2}},
				\label{eq:finite_density_flux_error}
			\end{align}
			where $N=|V_\rho|$. The constants $C_1$--$C_4$ depend on the domain regularity, the kernel shape, and the $C^2$ norm of $\mathbf{\Phi}$, but not on $N$, $\rho$, $\Delta_g$, or $h$. Moreover, the induced vorticity reconstruction error satisfies
			\begin{align}
				&\left\|\nabla\times\widehat{\mathbf{\Phi}}_{G,h}-\nabla\times\mathbf{\Phi}\right\|_{L^2(\Omega)} \nonumber\\
				&\quad  \leq C_5\Delta_g+C_6h+C_7\frac{\ell_\rho}{h^2}+C_8\sqrt{\frac{|\Omega|\log N}{Nh^4}}.
				\label{eq:finite_density_vorticity_error}
			\end{align}
		\end{Lemma}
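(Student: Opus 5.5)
The plan is to split the reconstruction error by the triangle inequality into four separately controlled pieces, one for each term in \eqref{eq:finite_density_flux_error}. Introduce three intermediate fields: the kernel-smoothed ideal field $\mathbf{\Phi}_h=\mathcal{K}_h*\mathbf{\Phi}$; the semi-discrete ``expected'' graph field $\overline{\mathbf{\Phi}}_{G,h}$, obtained by replacing the random empirical sum over nodes by its intensity-weighted integral while keeping the graph-edge sampling of $\mathbf{\Phi}$; and the fully discrete empirical field $\widetilde{\mathbf{\Phi}}_{G,h}=\sum_i \boldsymbol{\phi}_i\mathcal{K}_h(\|\mathbf{x}-\mathbf{x}_i\|)/\widehat{\lambda}$ before grid interpolation. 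Then
\begin{equation*}
\|\widehat{\mathbf{\Phi}}_{G,h}-\mathbf{\Phi}\|\le \|\widehat{\mathbf{\Phi}}_{G,h}-\widetilde{\mathbf{\Phi}}_{G,h}\|+\|\widetilde{\mathbf{\Phi}}_{G,h}-\overline{\mathbf{\Phi}}_{G,h}\|+\|\overline{\mathbf{\Phi}}_{G,h}-\mathbf{\Phi}_h\|+\|\mathbf{\Phi}_h-\mathbf{\Phi}\|,
\end{equation*}
and we bound these four terms in turn, aiming for $C_1\Delta_g^2$, $C_4(\cdot)$, $C_3\ell_\rho/h$, and $C_2h^2$, respectively.

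The two deterministic terms come first. For $\|\mathbf{\Phi}_h-\mathbf{\Phi}\|$, use the standard second-order Taylor argument for a symmetric Gaussian kernel with zero first moment, giving $C_2h^2\|\mathbf{\Phi}\|_{C^2}$. Near $\partial\Omega$ one needs reflection or boundary-corrected kernels, and the domain-regularity dependence of the constants is meant to absorb this. For the grid term, evaluate and interpolate the smooth field $\widetilde{\mathbf{\Phi}}_{G,h}$ bilinearly, or with the second-order finite-difference/DCT solver of Section~\ref{S4.3}. This gives $O(\Delta_g^2)$ times a $C^2$ norm. That norm must be taken of the smoothed limit rather than of the empirical field, otherwise an $h^{-2}$ factor appears; I would state the constant as depending on $\|\mathbf{\Phi}\|_{C^2}$ via the smoothed field and absorb the lower-order remainder into the other terms.

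The discretization term $\|\overline{\mathbf{\Phi}}_{G,h}-\mathbf{\Phi}_h\|$ is where the fill-distance hypothesis enters. Each node's edge-flow estimate $\boldsymbol{\phi}_i$, formed from its incident edges within range $R_c\ge 2\ell_\rho$, differs from $\mathbf{\Phi}(\mathbf{x}_i)$ by $O(\ell_\rho\|\nabla\mathbf{\Phi}\|_\infty)$. Replacing the Riemann-type sum over Voronoi cells of diameter $O(\ell_\rho)$ by the integral costs $O(\ell_\rho\|\nabla \mathcal{K}_h\|_{L^1})=O(\ell_\rho/h)$. Since $h>\ell_\rho$, this dominates the former and yields $C_3\ell_\rho/h$.

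The main obstacle is the stochastic term $\|\widetilde{\mathbf{\Phi}}_{G,h}-\overline{\mathbf{\Phi}}_{G,h}\|$. I would first try a pointwise Bernstein inequality: the summands are bounded by $\|\mathbf{\Phi}\|_\infty/(2\pi h^2 N/|\Omega|)$ with variance of order $|\Omega|/(Nh^2)$. Then take a union bound over an $h$-net of $O(|\Omega|/h^2)$ points, which is polynomial in $N$ once $h$ is not too small. Lipschitz continuity of $\mathcal{K}_h$ extends the bound off the net. This gives a sup-norm bound $\sqrt{|\Omega|\log N/(Nh^2)}$ with probability $1-N^{-c_0}$, and integrating over the bounded $\Omega$ gives the $L^2$ bound. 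The vorticity estimate \eqref{eq:finite_density_vorticity_error} follows by rerunning the same four steps with $\nabla\times$ applied, i.e. with kernel $\nabla\mathcal{K}_h$, which costs one extra power of $h^{-1}$ in the sampling and discretization terms, and one order in $h$ and $\Delta_g$ in the bias and grid terms (first derivatives of a $C^2$ field).
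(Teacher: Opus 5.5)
Your proposal is essentially the paper's own proof. It uses the same four-term triangle-inequality split into grid, finite-sample, graph-projection (fill-distance) and kernel-bias errors, bounded by second-order discretization, kernel concentration, a quadrature estimate of order $\ell_\rho/h$ and a symmetric-kernel Taylor expansion, with the curl estimate obtained by losing one power of $\Delta_g$ or $h$ per term; if anything your chain of intermediate fields is cleaner, since it telescopes exactly whereas the paper's displayed identity \eqref{eq:error_decomposition} carries a spurious fourth term.

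Three steps need routine tightening.
First, an $h$-net with Lipschitz extension leaves an $O(\|\mathbf{\Phi}\|_\infty)$ off-net error. Use a net of mesh about $h/N$ instead; this is still polynomially many points, since $|\Omega|/h^2<|\Omega|/\ell_\rho^2\approx N$.
Second, keep Bernstein's range term $|\Omega|\log N/(Nh^2)$, which is dominated only when $h\gtrsim\ell_\rho\sqrt{\log N}$; alternatively, concentrate the $L^2$ norm directly via bounded differences.
Third, build each node's flux estimate only from links of length $O(\ell_\rho)$, which exist because $R_c\ge 2\ell_\rho$. Using links up to length $R_c$ gives an error governed by $R_c$ that does not vanish as the density grows.
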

		
		\begin{proof}
			We decompose the reconstruction error as
			\begin{align}
				\widehat{\mathbf{\Phi}}_{G,h}-\mathbf{\Phi}
				=&\left(\widehat{\mathbf{\Phi}}_{G,h}-\mathbf{\Phi}_{\rho,h}\right)+\left(\mathbf{\Phi}_{\rho,h}-\mathbf{\Phi}_{h}\right)\nonumber\\
				&+\left(\mathbf{\Phi}_{h}-\mathbf{\Phi}\right) \!+\! \left(\mathbf{\Phi}_{h}\!-\!\mathbf{\Phi}_{h}^{\Delta_g}\right)\!\!,
				\label{eq:error_decomposition}
			\end{align}
			where $\mathbf{\Phi}_{h}=K_h\ast\mathbf{\Phi}$ is the kernel-smoothed continuum field, $\mathbf{\Phi}_{\rho,h}$ is its finite-sample approximation, and $\mathbf{\Phi}_{h}^{\Delta_g}$ is the grid-discretized representation. By the triangle inequality,
			\begin{equation}
				\left\|\widehat{\mathbf{\Phi}}_{G,h}-\mathbf{\Phi}\right\|_{L^2}
				\leq E_{\mathrm{proj}}+E_{\mathrm{sam}}+E_{\mathrm{ker}}+E_{\mathrm{grid}}.
			\end{equation}
			The second-order finite-difference solver gives
			\begin{equation}
				E_{\mathrm{grid}}=\left\|\mathbf{\Phi}_{h}-\mathbf{\Phi}_{h}^{\Delta_g}\right\|_{L^2}\leq C_1\Delta_g^2.
			\end{equation}
			Since $\mathbf{\Phi}\in C^2(\Omega)^2$ and $K_h$ is symmetric and normalized, a Taylor expansion gives
			\begin{equation}
				\mathbf{\Phi}_{h}(\mathbf{x})-\mathbf{\Phi}(\mathbf{x})
				=\int K_h(\mathbf{x}-\mathbf{y})\left[\mathbf{\Phi}(\mathbf{y})-\mathbf{\Phi}(\mathbf{x})\right]\,\mathrm{d}\mathbf{y}
				=O(h^2),
			\end{equation}
			where the first-order term vanishes by kernel symmetry. Hence
			\begin{equation}
				E_{\mathrm{ker}}=\left\|\mathbf{\Phi}_{h}-\mathbf{\Phi}\right\|_{L^2}\leq C_2h^2.
			\end{equation}
			For the graph projection term, the local fill-distance condition implies that the graph-supported quadrature resolves a kernel neighborhood of radius $O(h)$ up to a normalized coverage error of order $\ell_\rho/h$. Therefore,
			\begin{equation}
				E_{\mathrm{proj}}\leq C_3\frac{\ell_\rho}{h}.
			\end{equation}
			The random node locations introduce a finite-sample fluctuation. Since the expected number of samples in a kernel neighborhood scales as $Nh^2/|\Omega|$, standard concentration for two-dimensional kernel averages yields
			\begin{equation}
				E_{\mathrm{sam}}\leq C_4\sqrt{\frac{|\Omega|\log N}{Nh^2}}
			\end{equation}
			with probability at least $1-N^{-c_0}$. Combining the four estimates proves \eqref{eq:finite_density_flux_error}.
			
			For the vorticity error, the curl operator differentiates the reconstructed flux once. The grid term is reduced from $O(\Delta_g^2)$ to $O(\Delta_g)$, the kernel bias contributes $O(h)$ after differentiation, and the projection and sampling terms acquire one additional factor of $1/h$. This gives
			\begin{align}
				&	\left\|\nabla\times\widehat{\mathbf{\Phi}}_{G,h}-\nabla\times\mathbf{\Phi}\right\|_{L^2} \nonumber\\
				&\quad	\leq C_5\Delta_g+C_6h+C_7\frac{\ell_\rho}{h^2}+C_8\sqrt{\frac{|\Omega|\log N}{Nh^4}},
			\end{align}
			which proves \eqref{eq:finite_density_vorticity_error}.
		\end{proof}
		
		For the density range used in the simulations, the characteristic inter-node spacing is
		\[
		\ell_\rho\approx 31.6~\mathrm{m},~22.4~\mathrm{m},~18.3~\mathrm{m},~15.8~\mathrm{m}
		\]
		for $\rho=1000,2000,3000,4000$ nodes/km$^2$, respectively. With the default bandwidth $h=R_c/2=90$ m, the corresponding graph-projection ratios are
		$
		\ell_\rho/h\approx 0.351,~0.249,~0.204,~0.176.
		$
		Thus, the finite-density projection term decreases over the evaluated density range. This result does not turn the finite graph into an infinite-density continuum model; rather, it shows that the continuum-to-graph projection error is explicitly retained and becomes smaller as the simulated network becomes denser.
		
		This lemma clarifies the role of the continuum approximation in the finite-density implementation. After the flux field is projected onto the graph, the discrete solver in the following subsections operates directly on the finite graph. The lemma therefore provides a consistency bound for the projection step, whereas packet-level loop closure is handled separately by the forwarding rule in Section~\ref{S4.4}.

	}

	\subsubsection{Discrete Scalar Field Solver} 
	To numerically resolve the macroscopic potential field, we discretize the Poisson equation $-\nabla^2 \psi = S$ using the graph Laplacian matrix. 
	Let $\mathbf{W} \in \mathbb{R}^{|\mathcal{E}| \times |\mathcal{E}|}$ be the diagonal matrix of link weights (representing channel capacity or conductance) and $\mathbf{B} \in \mathbb{R}^{|\mathcal{N}| \times |\mathcal{E}|}$ be the node-edge incidence matrix. 
	The weighted graph Laplacian $\mathbf{L} \triangleq \mathbf{B} \mathbf{W} \mathbf{B}^{\top}$ serves as the discrete negative Laplacian operator. 
	A fundamental property of $\mathbf{L}$ is that its row sums are identically zero ($\mathbf{L}\mathbf{1} = \mathbf{0}$), which mathematically enforces the homogeneous Neumann boundary condition ($\partial \psi / \partial n = 0$) inherent to closed network systems.
	
	The discretization yields the linear system:
	\begin{equation} 
		\mathbf{L} \bm{\psi} = \mathbf{s},
		\label{eq:discrete_poisson}
	\end{equation}
	where $\mathbf{s}\! =\! \big[s_1,\cdots ,s_{|\mathcal{N}|} \big]^{\top}\! \in\! \mathbb{R}^{|\mathcal{N}|}$ is the source intensity vector. 
	Subject to the global solvability condition $\sum_{i} s_i\! =\! 0$ (traffic balance), this system admits a unique solution up to an additive constant. 
	In our distributed implementation, we solve this system via a Jacobi-based relaxation method. 
	At iteration $k$, node $i$ updates its potential $\psi_i$ using the weighted average of its neighbors $\mathcal{N}_i$:
	\begin{equation} 
		\psi_i^{(k+1)} = \frac{1}{\sum_{j \in \mathcal{N}_i} w_{ij}} \left( s_i + \sum_{j \in \mathcal{N}_i} w_{ij} \psi_j^{(k)} \right),
		\label{eq:jacobi_update}
	\end{equation}
	where $w_{ij}$ denotes the link weight or conductance between node $i$ and node $j$. Note that mathematically, if an edge $e$ connects nodes $i$ and $j$, then $w_{ij}$ corresponds to the diagonal entry of the diagonal matrix $\mathbf{W}$ indexed by $e$.
	This local interaction diffuses the source intensity $s_i$ across the network, asymptotically minimizing the discrete Dirichlet energy $\mathcal{E}_{\mathrm{pot}} = \frac{1}{2} \boldsymbol{\psi}^{\top} \mathbf{L} \boldsymbol{\psi} - \mathbf{s}^{\top} \boldsymbol{\psi}$.
	
	\paragraph*{Source Intensity Estimation}
	To align with the continuity equation $\nabla \cdot \mathbf{\Phi} = S - \partial \rho/\partial t$, the discrete source term $s_i$ is strictly defined as the \textit{net traffic generation rate} (mean arrival rate minus sink consumption), distinct from instantaneous queue variations $\partial Q_i/\partial t$. 
	This definition decouples global equilibrium planning from transient queue stabilization.
	
	Nodes acting as sources estimate their generation intensity $s_i>0$ from the packet arrival rate
	\begin{equation}
		a_i[t] \triangleq \frac{N_{\mathrm{pkt}}(t,t+\Delta t)}{\Delta t}.
	\end{equation}
	{\color{black}
		To improve robustness under bursty machine-type traffic, we replace the fixed EMA by a burst-aware EMA:
		\begin{equation}
			s_i[t]
			=
			\bigl(1-\gamma_i[t]\bigr)s_i[t-1]
			+
			\gamma_i[t]a_i[t],
			\label{eq:adaptive_source_ema}
		\end{equation}
		where
		\begin{align}
			&\gamma_i[t]
			=
			\gamma_{\min}
			+
			(\gamma_{\max}-\gamma_{\min})
			\frac{\chi_i[t]}{\chi_i[t]+\theta}, \nonumber\\
			&\chi_i[t]
			=
			\frac{|a_i[t]-s_i[t-1]|}{s_i[t-1]+\epsilon_{\rm ema}}.
			\label{eq:adaptive_gamma}
		\end{align}
		Here, $\gamma_{\min}=0.1$, $\gamma_{\max}=0.5$, and $\theta=1$. Under stable traffic, $\gamma_i[t]$ remains close to $\gamma_{\min}$ and preserves smoothing; when a burst is detected, $\gamma_i[t]$ increases toward $\gamma_{\max}$ and shortens the source-field response time.
	}
	Relay nodes with no local traffic maintain $s_i \approx 0$. 
	To satisfy the Neumann solvability condition, the sink node adaptively sets its intensity $s_{\mathrm{sink}}$ to match the negative sum of network influxes, $s_{\mathrm{sink}} = - \sum_{k \neq \mathrm{sink}} s_k$, estimated via total received throughput.
	
	{\color{black}
		For a step change in the observed arrival intensity, the EMA tracking error decays as
		\begin{equation}
			|e_i[t+k]|
			\le
			(1-\gamma)^k |e_i[t]|.
		\end{equation}
		Thus, the $90\%$ response time is
		\begin{equation}
			k_{90}(\gamma)
			=
			\left\lceil
			\frac{\log(0.1)}{\log(1-\gamma)}
			\right\rceil.
		\end{equation}
		The fixed setting $\gamma=0.1$ gives $k_{90}\approx 22$ update windows, whereas the burst mode $\gamma_{\max}=0.5$ reduces this value to about $4$ update windows. This change affects only the control-plane source-intensity estimator; the packet-level forwarding utility remains unchanged.
	}
	
	\paragraph*{Asynchronous Event-Triggered Updates}
	In large-scale networks, continuous global synchronization incurs significant signaling overhead. 
	We implement an asynchronous mechanism where node $i$ broadcasts its updated potential $\psi_i^{(k+1)}$ only when the relative divergence exceeds a threshold $\delta$:
	\begin{equation} 
		\frac{|\psi_i^{(k+1)} - \psi_i^{(k)}|}{|\psi_i^{(k)}| + \epsilon} > \delta,
		\label{eq:update_trigger}
	\end{equation}
	where $\epsilon$ is a regularization constant. 
	This ensures that control signaling is sparse, generated primarily during topological perturbations or traffic bursts.
	
	\subsubsection{Discrete Flux Evolution and Gradient Projection} 
	While the continuous vorticity dissipation is governed by the double-curl operator (\ref{eq:evolution_pde}), directly discretizing this operator via the graph Hodge Laplacian requires evaluating circulation over all 2-simplices (triangles or cliques) in the network. 
	In a distributed wireless environment, tracking such higher-order topological structures incurs prohibitive signaling overhead.
	
	To achieve a practical distributed protocol, we employ a first-order gradient projection method that bypasses the explicit computation of discrete curls, while asymptotically reaching the exact same irrotational equilibrium. 
	For each directed link $e=(i,j)$, the discrete flux $\phi_{ij}$, i.e., the element of $\bm{\phi}$ indexed by $e$, at iteration $\tau$ is updated via a convex relaxation:
	\begin{equation} 
		\phi_{ij}^{(\tau+1)} = (1-\zeta) \phi_{ij}^{(\tau)} + \zeta w_{ij} \left( \psi_i - \psi_j \right),
		\label{eq:discrete_update}
	\end{equation}
	where $\zeta \in (0, 1]$ is the relaxation step size. In vector form, this update is expressed as $\boldsymbol{\phi}^{(\tau+1)} = (1-\zeta) \boldsymbol{\phi}^{(\tau)} + \zeta \mathbf{W} \mathbf{B}^{\top} \boldsymbol{\psi}$. 
	By iteratively mixing the current flow with the ideal potential-driven flow, Eq. \eqref{eq:discrete_update} acts as a low-pass topological filter. 
	The theoretical guarantee of this engineering approximation is established below.
	
	\begin{Proposition}\label{prop:discrete_convergence}
		Let $\boldsymbol{\phi}_{\mathrm{irr}}^* = \mathbf{W} \mathbf{B}^{\top} \boldsymbol{\psi}$ denote the ideal irrotational edge flow, and let the discrete traffic flux be orthogonally decomposed according to the discrete Helmholtz theorem as $\boldsymbol{\phi}^{(\tau)} = \boldsymbol{\phi}_{\mathrm{irr}}^* + \boldsymbol{\phi}_{\mathrm{sol}}^{(\tau)}$, where $\boldsymbol{\phi}_{\mathrm{sol}}^{(\tau)} \in \mathrm{Ker}(\mathbf{B})$ represents the discrete routing loops. 
		Under the update rule \eqref{eq:discrete_update} with symmetric link weights ($w_{ij} = w_{ji}$) and a step size $\zeta \in (0, 1]$, the discrete solenoidal component decays exponentially:
		\begin{equation} 
			\boldsymbol{\phi}_{\mathrm{sol}}^{(\tau)} = (1-\zeta)^{\tau} \boldsymbol{\phi}_{\mathrm{sol}}^{(0)}.
		\end{equation}
		Consequently, the discrete loop energy vanishes asymptotically ($\lim_{\tau \to \infty} \|\boldsymbol{\phi}_{\mathrm{sol}}^{(\tau)}\|^2 = 0$), mathematically mirroring the continuous Lyapunov stability (Theorem \ref{thm:enstrophy_dissipation}) without requiring explicit Hodge Laplacian computations.
	\end{Proposition}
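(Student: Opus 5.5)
The plan is to treat the update \eqref{eq:discrete_update} as an affine map whose fixed point is $\boldsymbol{\phi}_{\mathrm{irr}}^*$, and then track the deviation from that fixed point. The potential vector $\boldsymbol{\psi}$ entering the update is the solution of \eqref{eq:discrete_poisson}. I will assume it has already converged and is held fixed during the flux relaxation, consistent with the timescale separation of Assumption~\ref{asm:timescale}. Under that assumption the target $\boldsymbol{\phi}_{\mathrm{irr}}^* = \mathbf{W}\mathbf{B}^{\top}\boldsymbol{\psi}$ is a constant vector independent of $\tau$, and the update reads
\begin{equation*}
\boldsymbol{\phi}^{(\tau+1)} = (1-\zeta)\boldsymbol{\phi}^{(\tau)} + \zeta \boldsymbol{\phi}_{\mathrm{irr}}^*.
\end{equation*}

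I will define the deviation $\boldsymbol{\phi}_{\mathrm{sol}}^{(\tau)} \triangleq \boldsymbol{\phi}^{(\tau)} - \boldsymbol{\phi}_{\mathrm{irr}}^*$ and subtract $\boldsymbol{\phi}_{\mathrm{irr}}^*$ from both sides of the update. The relation $\boldsymbol{\phi}_{\mathrm{irr}}^* = (1-\zeta)\boldsymbol{\phi}_{\mathrm{irr}}^* + \zeta\boldsymbol{\phi}_{\mathrm{irr}}^*$ cancels the forcing term. This leaves the homogeneous recursion $\boldsymbol{\phi}_{\mathrm{sol}}^{(\tau+1)} = (1-\zeta)\boldsymbol{\phi}_{\mathrm{sol}}^{(\tau)}$. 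Induction on $\tau$ then gives the closed form $(1-\zeta)^{\tau}\boldsymbol{\phi}_{\mathrm{sol}}^{(0)}$. Since $|1-\zeta|<1$ for $\zeta\in(0,1]$, we get $\|\boldsymbol{\phi}_{\mathrm{sol}}^{(\tau)}\|^2 = (1-\zeta)^{2\tau}\|\boldsymbol{\phi}_{\mathrm{sol}}^{(0)}\|^2 \to 0$. For $\zeta=1$ the loop component vanishes after a single step.

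The remaining step is to check that this deviation agrees with the loop component of the statement, namely that it lies in $\mathrm{Ker}(\mathbf{B})$. We have $\mathbf{B}\boldsymbol{\phi}_{\mathrm{irr}}^* = \mathbf{B}\mathbf{W}\mathbf{B}^{\top}\boldsymbol{\psi} = \mathbf{L}\boldsymbol{\psi} = \mathbf{s}$. Hence $\boldsymbol{\phi}_{\mathrm{sol}}^{(0)}\in\mathrm{Ker}(\mathbf{B})$ exactly when the initial flow satisfies the same nodal conservation $\mathbf{B}\boldsymbol{\phi}^{(0)}=\mathbf{s}$, which is the premise of the proposition's decomposition. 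Applying $\mathbf{B}$ to the homogeneous recursion shows that this membership is preserved for all $\tau$, so $\mathbf{B}\boldsymbol{\phi}^{(\tau)}=\mathbf{s}$ holds throughout and the demand constraint is never violated.

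I expect the main obstacle to be the claimed \emph{orthogonality} of the decomposition. The symmetric link weights $w_{ij}=w_{ji}$ make $\mathbf{L}$ symmetric and the edge orientation consistent. However, $\boldsymbol{\phi}_{\mathrm{irr}}^*\in\mathrm{Im}(\mathbf{W}\mathbf{B}^{\top})$ is orthogonal to $\mathrm{Ker}(\mathbf{B})$ in the $\mathbf{W}^{-1}$-weighted inner product, $\langle \mathbf{W}\mathbf{B}^{\top}\boldsymbol{\psi},\mathbf{u}\rangle_{\mathbf{W}^{-1}} = \boldsymbol{\psi}^{\top}\mathbf{B}\mathbf{u}=0$, and not in the plain Euclidean one. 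I would therefore state the orthogonality and the energy split in this weighted norm. Since $\mathbf{W}$ is diagonal and positive, the weighted norm is equivalent to the Euclidean norm, so the exponential decay of $\|\boldsymbol{\phi}_{\mathrm{sol}}^{(\tau)}\|$ holds in either norm and the conclusion of the proposition follows.
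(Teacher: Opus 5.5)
Your proof is correct and follows the paper's argument: hold $\boldsymbol{\psi}$ fixed, subtract the fixed point $\boldsymbol{\phi}_{\mathrm{irr}}^*$ from the update to get $\boldsymbol{\phi}_{\mathrm{sol}}^{(\tau+1)} = (1-\zeta)\boldsymbol{\phi}_{\mathrm{sol}}^{(\tau)}$, and conclude from the contraction factor $|1-\zeta|<1$. Your extra checks are correct refinements that the paper's proof omits, since it simply defines $\boldsymbol{\phi}_{\mathrm{sol}}^{(\tau)}$ as the deviation: $\mathrm{Ker}(\mathbf{B})$-membership is preserved once $\mathbf{B}\boldsymbol{\phi}^{(0)}=\mathbf{s}$, and the decomposition is orthogonal in the $\mathbf{W}^{-1}$-weighted inner product rather than the Euclidean one.
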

	
	\begin{proof}
		Subtracting the steady-state irrotational flow $\boldsymbol{\phi}_{\mathrm{irr}}^*$ from both sides of the vector update equation $\boldsymbol{\phi}^{(\tau+1)} = (1-\zeta) \boldsymbol{\phi}^{(\tau)} + \zeta \boldsymbol{\phi}_{\mathrm{irr}}^*$ yields the error dynamics:
		\begin{equation} 
			\boldsymbol{\phi}^{(\tau+1)} - \boldsymbol{\phi}_{\mathrm{irr}}^* = (1-\zeta) (\boldsymbol{\phi}^{(\tau)} - \boldsymbol{\phi}_{\mathrm{irr}}^*).
		\end{equation}
		Since $\boldsymbol{\phi}^{(\tau)} - \boldsymbol{\phi}_{\mathrm{irr}}^* \triangleq \boldsymbol{\phi}_{\mathrm{sol}}^{(\tau)}$, this simplifies to $\boldsymbol{\phi}_{\mathrm{sol}}^{(\tau+1)} = (1-\zeta) \boldsymbol{\phi}_{\mathrm{sol}}^{(\tau)}$. 
		For any constant step size $\zeta \in (0, 1]$, the contraction mapping factor $|1-\zeta| < 1$, ensuring geometric convergence of the solenoidal error to zero.
	\end{proof}
	
	\begin{remark} 
		This discrete formulation ensures spectral consistency with the continuous theory. The update rule \eqref{eq:discrete_update} effectively projects the initial traffic distribution onto the gradient subspace $\mathrm{Im}(\mathbf{B}^{\top})$, which is orthogonal to the cycle subspace (the kernel of the discrete curl). Consequently, as $\tau \to \infty$, the discrete circulation vanishes, guaranteeing the asymptotic loop-free property established in Theorem \ref{thm:enstrophy_dissipation}.
	\end{remark}
	
	\subsection{Distributed Packet Forwarding Protocol}\label{S4.4}

	To bridge the timescale discrepancy between macroscopic field evolution and microscopic packet dynamics, the VDR framework operates on a rigorous closed-loop architecture. At the slow timescale (macro-level), the network periodically reconstructs the continuous demand fields and executes the discrete gradient projection (Section \ref{S4.3}) to dissipate topological routing loops. Concurrently, at the fast timescale (micro-level), individual nodes utilize these optimized macroscopic fields to make instantaneous, per-packet routing decisions. This multi-timescale separation ensures that transient congestion is handled locally, while the global transport remains asymptotically loop-free.
	
	{\color{black}
		The virtual optimization index $\tau$ used in the field evolution is not a physical packet-forwarding time slot. In practical operation, potential-field updates are performed at the control-plane refresh timescale and can be warm-started from the previous field, while packets continue to be forwarded using the latest available potentials. Thus, the relaxation iterations used to illustrate convergence are amortized over the control window and are not incurred as per-packet signalling latency.}
	
	{\color{black}
		If a control-plane refresh is executed, its overhead can be decomposed as
		$
		T_{\mathrm{ctrl}}
		=
		T_{\mathrm{state}}
		+
		K_{\tau}T_{\mathrm{relax}}
		+
		T_{\mathrm{dist}},
		$
		where $T_{\mathrm{state}}$ is the state-collection time, $T_{\mathrm{relax}}$ is the cost of one relaxation/update step, $K_{\tau}$ is the number of relaxation steps within the refresh window, and $T_{\mathrm{dist}}$ is the potential-distribution time. This overhead belongs to the control plane and is amortized over the refresh window; it is not added to each packet forwarding decision.
	}
	
	To execute these micro-level decisions, we implement a deterministic greedy forwarding policy. 
	To balance the trade-off between latency minimization (gradient drift) and throughput maximization (pressure diffusion), we introduce a hybrid forwarding utility metric. 
	{\color{black}To avoid mixing quantities with different units and scales, the potential and queue differences are normalized before aggregation. For $j\in\mathcal{N}_i$, let}
	\begin{equation}
		{\color{black}
			\widehat{\Delta\psi}_{ij}
			\!=\!
			\frac{\psi_i-\psi_j}
			{\max_{k\in\mathcal{N}_i}|\psi_i-\psi_k|+\varepsilon},
			\quad \!
			\widehat{\Delta Q}_{ij}
			\!=\!
			\frac{Q_i-Q_j}{Q_{\max}},
		}
		\label{eq:utility_normalization}
	\end{equation}
	{\color{black}
		where $\varepsilon>0$ avoids division by zero and $Q_{\max}$ is the queue-capacity normalization constant. Both terms are therefore dimensionless. Moreover, $|\widehat{\Delta\psi}_{ij}|\leq 1$ by construction, and $|\widehat{\Delta Q}_{ij}|\leq 1$ when $0\leq Q_i,Q_j\leq Q_{\max}$. Therefore, the two components entering the weighted utility are scale-aligned before aggregation.
	}
	For a packet $p$ at node $i$, the forwarding utility $U_{ij}$ for each candidate neighbor $j \in \mathcal{N}_i$ is defined as:
	\begin{equation}
		U_{ij} \triangleq 
		\underbrace{\alpha_i \widehat{\Delta\psi}_{ij}}_{\text{Normalized Potential Drive}} 
		+ 
		\underbrace{(1-\alpha_i) \widehat{\Delta Q}_{ij}}_{\text{Normalized Queue Pressure}} 
		- 
		\underbrace{\Lambda(j, \mathcal{H}_p)}_{\text{Loop Penalty}},
		\label{eq:utility_def}
	\end{equation}
	{\color{black}
		where $Q_i$ denotes the local queue length, the normalized terms are defined in \eqref{eq:utility_normalization}, and $\Lambda(j, \mathcal{H}_p)$ represents the packet-level loop-avoidance barrier:
	}
	\begin{equation}
		\Lambda(j, \mathcal{H}_p) =
		\begin{cases}
			\infty, & \text{if } j \in \mathcal{H}_p, \\
			0, & \text{otherwise}.
		\end{cases}
		\label{eq:penalty_func}
	\end{equation}
	Here, $\mathcal{H}_p$ denotes the packet trajectory history. 
	{\color{black}
		The penalty $\Lambda(j,\mathcal{H}_p)$ is not a physical queue or potential quantity to be balanced against the two normalized terms; it is a hard exclusion barrier used to remove candidates that would close a realized forwarding loop. The local weighting factor $\alpha_i(t) \in [0,1]$ therefore balances the normalized potential-drive term and the normalized queue-pressure term, giving it a consistent interpretation independent of the raw units of $\psi_i$ and $Q_i$
	}
	Instead of fixing a global parameter, we define $\alpha_i(t)$ as a monotonically decreasing function of the normalized local queue length $q_i(t) = Q_i(t) / {\color{black}Q_{\max}}$:
	\begin{equation} 
		\alpha_i(t) = \alpha_{\mathrm{max}} \cdot \left( 1 - q_i(t) \right)^k,
		\label{eq:adaptive_alpha}
	\end{equation}
	where $\alpha_{\mathrm{max}} \in (0, 1]$ is the maximum gradient sensitivity and $k \ge 1$ is a shaping factor. This formulation creates two distinct operating regimes. Under light loads ($q_i \to 0$), $\alpha_i$ approaches $\alpha_{\max}$, biasing the decision towards the potential gradient for delay optimality. Conversely, under heavy loads ($q_i \to 1$), $\alpha_i$ vanishes, transitioning the mechanism into pure backpressure routing to maximize throughput via load balancing.
	
	Finally, the next hop $j^*$ is selected to maximize the utility $U_{ij}$, as detailed in Algorithm \ref{alg:vdr_logic}, where the variable $U_{\max}$ tracks the optimal candidate and $T_{\max}$ enforces a standard time-to-live (TTL) constraint to prevent indefinite circulation.

	{\color{black}
		The zero-loop guarantee in Algorithm~\ref{alg:vdr_logic} refers to realized forwarding cycles, i.e., a packet is never forwarded to a node already recorded in its trajectory history. A Bloom-filter hit does not create a loop; it only blocks the corresponding candidate next hop. If all candidates are blocked or have non-positive utility, the packet is temporarily held in the local buffer and re-evaluated in the next scheduling slot using the updated queue states and potential field. This waiting time is included in the end-to-end delay measurement. If the packet exceeds the TTL or the local buffer overflows, it is counted as a packet drop and is therefore reflected in the PDR. Thus, loop prevention is not excluded from the performance statistics: blocked loop-closure attempts appear as temporary stalls or drops, whereas Fig.~\ref{fig:loop_perf} reports only realized closed loops.
	}
	
	\begin{algorithm}[!t]
		\footnotesize
		\caption{VDR Distributed Forwarding Protocol}
		\label{alg:vdr_logic}
		\begin{algorithmic}[1]
			\Require Current Node $i$; Packet $p$ with trajectory Bloom filter $\mathbf{B}_p$; Neighbor set $\mathcal{N}_i$; Potentials $\boldsymbol{\psi}$; Queues $\mathbf{Q}$.
			\Ensure Next Hop Node $j^*$.
			
			\State \textbf{Initialization:} $j^* \leftarrow i$, $U_{\max} \leftarrow 0$. \Comment{Strict positive-utility threshold}
			\State \textbf{Update Weight:} Calculate local sensitivity $\alpha_i$ via \eqref{eq:adaptive_alpha}.
			\State {\color{black}$\Psi_{\max} \leftarrow \max_{k\in\mathcal{N}_i}|\psi_i-\psi_k|+\varepsilon$}
			
			\State \textit{// Step 1: Check Termination Conditions}
			\If{$i$ is Sink node} 
			\State \Return \textbf{Packet Delivered} 
			\EndIf
			\If{$\mathrm{HopCount}(p) \ge T_{\max}$} 
			\State \Return \textbf{Packet Dropped} 
			\EndIf
			
			\State \textit{// Step 2: Evaluate Candidates via Hybrid Metric}
			\For{each neighbor $j \in \mathcal{N}_i$}
			\State {\color{black}$\widehat{\Delta\psi}_{ij} \leftarrow \frac{\psi_i-\psi_j}{\Psi_{\max}}$}
			\State {\color{black}$\widehat{\Delta Q}_{ij} \leftarrow \frac{Q_i-Q_j}{Q_{\max}}$}

			\If{$\text{Query}(\mathbf{B}_p, j) == \text{TRUE}$} \Comment{Loop detected in history}
			\State $\Lambda \leftarrow \infty$ 
			\Else
			\State $\Lambda \leftarrow 0$
			\EndIf
			
			\State {\color{black}$U_{ij} \leftarrow \alpha_i \cdot \widehat{\Delta\psi}_{ij} + (1-\alpha_i) \cdot \widehat{\Delta Q}_{ij} - \Lambda$}
			
			\If{$U_{ij} > U_{\max}$}
			\State $U_{\max} \leftarrow U_{ij}$; \quad $j^* \leftarrow j$
			\EndIf
			\EndFor
			
			\State \textit{// Step 3: Execution}
			\If{$j^* \neq i$} 
			\State Add $\text{ID}_i$ to $\mathbf{B}_p$ \Comment{Record self before departing}
			\State Forward packet $p$ to $j^*$ 
			\Else 
			\State {\color{black}Hold packet in local buffer and re-evaluate in the next slot} 
			\Comment{\color{black}Waiting time counted in delay; drop if TTL or buffer limit is exceeded}
			\EndIf
		\end{algorithmic}
	\end{algorithm}
	
	\begin{remark} 
		{\color{black}
			Lemma~\ref{lem:finite_density_projection} bounds the continuum-to-graph reconstruction and projection error, but it does not by itself prove packet-level loop freedom on a finite graph. In particular, finite grid and projection errors may perturb near-tie forwarding utilities. For this reason, finite-graph loop-closure prevention is enforced at the packet-forwarding layer by the trajectory history guard. Since the TTL is $64$ hops, at most $n_{\max}=64$ node identifiers are inserted into the Bloom filter. We set the Bloom-filter size to $m=640$ bits and use $k_{\rm BF}=7$ hash functions, giving
			\begin{equation}
				p_{\rm fp}=\left(1-e^{-k_{\rm BF}n_{\max}/m}\right)^{k_{\rm BF}}\approx 8.2\times10^{-3}.
				\label{eq:bloom_fp}
			\end{equation}
			A previously visited node is therefore not accepted as a valid next hop because standard Bloom filters have no false negatives. False positives may only block feasible candidates and trigger temporary holding or re-evaluation; they do not create a completed forwarding cycle. The per-packet memory overhead is $640$ bits ($80$ bytes), and the worst-case nodal overhead is $80Q_{\max}$ bytes, i.e., $3200$ bytes for $Q_{\max}=40$.
		}
	\end{remark}

	\subsection{Macroscopic Network Observability and Anomaly Diagnosis}\label{S4.5}
	
	While the VDR protocol (Sections \ref{S4.3}--\ref{S4.4}) inherently suppresses routing loops via vorticity dissipation, practical network management requires a mechanism to visually monitor and verify the topological health of the network.
	However, the direct application of continuous vector operators on discrete, irregular graph topologies introduces specific numerical artifacts.
	A primary challenge arises from high-density convergent flows near traffic sinks.
	Due to the discretization error of the gradient operator on non-uniform meshes, these strong radial fluxes often manifest as spurious vorticity (false positives), potentially masking low-intensity, genuine routing loops.
	
	To enable robust diagnosis in such noisy environments, we propose a physics-based filtering mechanism that exploits the distinct kinematic signatures of convergent versus circulatory fluxes.
	Consistent with the system model in Section \ref{S2.3}, convergent flows (sinks) are characterized by high negative flux divergence ($|\nabla \cdot \mathbf{\Phi}| \approx |S| \gg 0$) with low intrinsic vorticity, whereas circulatory flows (loops) exhibit near-zero divergence ($|\nabla \cdot \mathbf{\Phi}| \approx 0$) but high flux vorticity ($\|\nabla \times \mathbf{\Phi}\| > 0$).
	Leveraging this property, we define a divergence-based filter to generate a diagnostic heatmap, denoted as $\mathcal{H}(\mathbf{x})$, which visualizes the spatial distribution of genuine routing anomalies.
	The filtered vorticity magnitude $\omega_{\mathrm{filt}}(\mathbf{x})$ is computed by selectively suppressing regions where the local flux is divergence-dominated:
	\begin{equation} 
		\omega_{\mathrm{filt}}(\mathbf{x})\!\! = \!\!
		\begin{cases} 
			\hspace*{6mm}0, \!\!&\!\!\! \text{if } \!|\nabla \!\cdot  \! \mathbf{\Phi}(\mathbf{x})| \!> \!  \beta \|\nabla \! \times \! \mathbf{\Phi}(\mathbf{x})\|, \\
			\|\nabla \times \mathbf{\Phi}(\mathbf{x})\| , \!\! &\!\!\! \text{otherwise},
		\end{cases}
		\label{eq:div_filter}
	\end{equation}
	where $\beta > 0$ is a sensitivity threshold (typically $\beta \approx 1.0$).
	This filter effectively removes the numerical artifacts induced by strong sink convergence (where $\nabla \rho$ is large) while preserving the topological signature of genuine routing loops ($\boldsymbol{\omega} = \nabla \times \mathbf{\Phi}$).
	The complete diagnostic procedure, which transforms a set of raw discrete routing traces $\mathcal{T}$, representing the collected packet forwarding logs, into the anomaly heatmap $\mathcal{H}(\mathbf{x})$, is summarized in Algorithm \ref{alg:diagnosis}.
	
	\begin{algorithm}[!t]
		\footnotesize
		\caption{Macroscopic Anomaly Detection via Hodge Filtering}
		\label{alg:diagnosis}
		\begin{algorithmic}[1]
			\Require Discrete flow traces $\mathcal{T} = \{ (\mathbf{x}_k, \boldsymbol{\phi}_k) \}$; Smoothing bandwidth $h$; Threshold $\beta$.
			\Ensure Diagnostic Heatmap $\mathcal{H}(\mathbf{x})$.
			
			\State \textit{// Step 1: Field Reconstruction}
			\State Reconstruct continuous flux field $\mathbf{\Phi}(\mathbf{x})$ from $\mathcal{T}$ via vector KDE with bandwidth $h$.
			
			\State \textit{// Step 2: Differential Extraction}
			\State Compute flux vorticity: $\boldsymbol{\omega}(\mathbf{x}) \leftarrow \nabla \times \mathbf{\Phi}(\mathbf{x})$
			\State Compute flux divergence: $D(\mathbf{x}) \leftarrow \nabla \cdot \mathbf{\Phi}(\mathbf{x})$
			
			\State \textit{// Step 3: Physics-Based Filtering}
			\For{each grid point $\mathbf{x}$}
			\If{$|D(\mathbf{x})| > \beta \|\boldsymbol{\omega}(\mathbf{x})\|$} 
			\State $\mathcal{H}(\mathbf{x}) \leftarrow 0$ \Comment{Suppress sink/source artifacts}
			\Else
			\State $\mathcal{H}(\mathbf{x}) \leftarrow \|\boldsymbol{\omega}(\mathbf{x})\|$ \Comment{Retain loop anomaly signature}
			\EndIf
			\EndFor
			
			\State \Return $\mathcal{H}(\mathbf{x})$
		\end{algorithmic}
	\end{algorithm}

	\begin{table}[!b]
		\vspace*{-2mm}
		\scriptsize
		\caption{Default Simulation Parameters}
		\vspace*{-2mm}
		\label{tab:sim_params}
		\centering
		\renewcommand{\arraystretch}{1.2}
		\begin{tabular}{l c l}
			\toprule
			\textbf{Parameter} & \textbf{Symbol} & \textbf{Default Value} \\
			\midrule
			Network Domain & $\Omega$ & $1000 \times 1000$ m$^2$ \\
			Node Density & $\lambda$ & $2000$ nodes/km$^2$ \\
			Comm. Range & $R_{\rm c}$ & $180$ m \\
			Queue Capacity & $Q_{\max}$ & $40$ packets \\
			Time-to-Live & $T_{\max}$ & $64$ hops \\
			Sink Location & $\mathbf{x}_{\text{sink}}$ & $(500, 500)$ (Center) \\
			Traffic Generation & $\lambda_{\text{pkt}}$ & Poisson Process \\
			Max Gradient Weight & $\alpha_{\max}$ & $0.95$ \\
			Congestion Shaping Factor & $k$ & $3$ \\
			{\color{black}Adaptive EMA Factors} & {\color{black}$\gamma_{\min},\gamma_{\max},\theta$} & {\color{black}$0.1,\ 0.5,\ 1$} \\
			Bloom Filter Size & $M_{\text{BF}}$ & {\color{black}$640$} bits \\
			Number of Bloom hash functions& $k_{\rm BF}$ & $7$ \\
			Grid Resolution & $M \times M$ & $128 \times 128$ \\
			{\color{black}KDE Bandwidth} & $h$ & $90$ m $(\beta_h=0.5)$, default \\
			Filter Threshold & $\beta$ & $1.0$ \\
			\bottomrule
		\end{tabular}
	\end{table}

	{\color{black}
		\subsubsection*{Deployment overhead and parameter sensitivity}
		The discrete VDR implementation mainly relies on local one-hop operations. At node $i$, potential relaxation and flux projection both require $O(d_i)$ operations per update, where $d_i$ is the one-hop degree. Packet forwarding requires $O(k_{\rm BF}d_i)$ operations because each candidate next hop is checked by the Bloom filter and evaluated by the normalized utility. The anomaly filter is executed at the management plane on the reconstructed $M\times M$ field, with $O(M^2)$ grid-level complexity.
		
		In terms of communication overhead, VDR does not require network-wide flooding. Potential values are broadcast only to one-hop neighbors when the event-trigger condition is satisfied, and packet-level loop prevention uses a fixed-size packet-carried Bloom filter. The main parameters also have clear operational roles: $\zeta$ controls the convergence speed of flux projection; $\alpha_{\max}$ and $k$ determine the transition between potential-guided forwarding and queue-pressure response; and $\beta$ controls the sensitivity of circulation-based anomaly filtering. Larger $\alpha_i$ gives more weight to potential descent, whereas smaller $\alpha_i$ increases the relative influence of queue pressure.
	}

	\vspace*{-2mm}
	\section{Numerical Results}\label{S5}
	
	In this section, we evaluate the performance of the proposed VDR framework through numerical simulations. The experiments are designed to validate the theoretical foundations, assess the diagnostic capabilities, and quantify performance gains in dynamic routing scenarios. Unless otherwise specified, the simulation parameters are detailed in Table \ref{tab:sim_params}.
	\vspace*{-2mm}
	\subsection{Macroscopic Diagnostic and Visual Verification}\label{S5.1}
	
	This section establishes the theoretical baseline of the proposed framework. We verify two fundamental hypotheses derived from Theorem \ref{thm:traffic_decomp}: 1) discrete routing loops correspond mathematically to solenoidal vortices in the continuum limit; and 2) the HHD ensures strict orthogonality, rendering the detection metric immune to normal irrotational traffic flows.
	
	\begin{figure}[!h]
		\vspace*{-2mm}
		\centering
		\begin{minipage}[t]{0.494\linewidth} %
			\centering
			\subfigure[Ideal irrotational flow] {	
				\centering
				\includegraphics[width=1\linewidth]{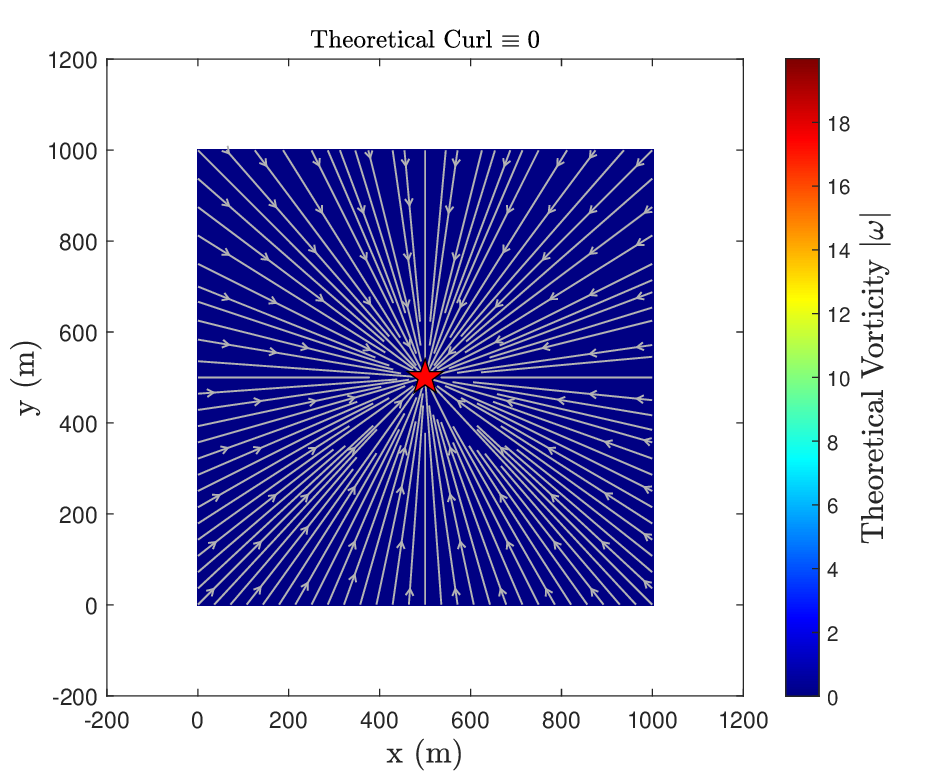}
				\label{fig:case_a}
			}
		\end{minipage} %
		\hspace*{-3mm}
		\begin{minipage}[t]{0.494\linewidth}
			\centering
			\subfigure[Potential flow with a Rankine vortex] {
				\centering
				\includegraphics[width=1\linewidth]{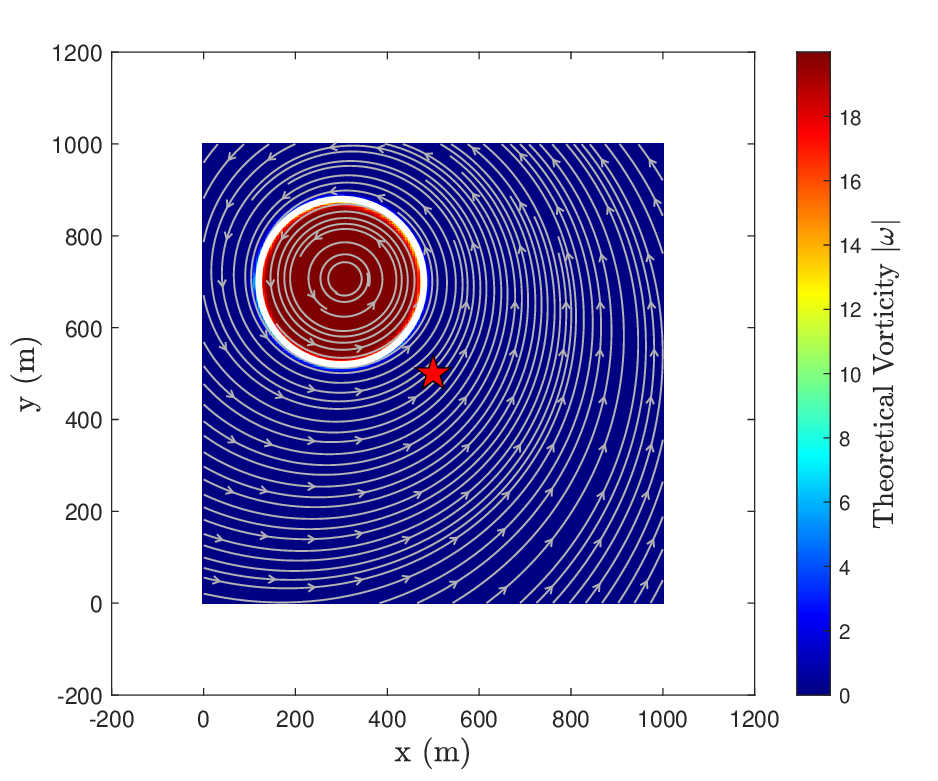}
				\label{fig:case_b}
			}
		\end{minipage}
		\vspace*{-2mm}
		\caption{{\color{black}Visual interpretation of network vorticity using the Helmholtz decomposition.
				A normal irrotational source-to-sink flow produces almost zero vorticity, even when the streamlines converge strongly near the sink.
				In contrast, a routing loop appears as a localized high-vorticity region, indicating loop-induced circulation in the traffic field.}}
	\label{fig:visual_verification} 
	\vspace*{-1mm}
\end{figure}

\subsubsection{Theoretical Validation via Rankine Vortex Model} 
We construct an analytical continuum model within a $1000 \times 1000$\,m$^2$ domain containing two distinct regimes: a pure potential flow converging to a sink and a superimposed Rankine vortex \cite{Saffman1992} centered at $(300, 700)$. Fig. \ref{fig:visual_verification} visualizes the theoretical streamlines and vorticity magnitude. In Fig. \ref{fig:visual_verification}(a), the background vorticity remains uniformly zero despite the high-intensity convergent flux. This confirms the strict orthogonality of the HHD, ensuring that dense multipath routing does not induce spurious vorticity artifacts. Conversely, Fig. \ref{fig:visual_verification}(b) reveals a dual structure: the routing loop manifests as a high-intensity disk (forced vortex) while the surrounding flow remains zero-vorticity (free vortex). The sharp boundary demonstrates the network vorticity metric's capability to precisely isolate the true vortex core, effectively distinguishing the causal anomaly from the surrounding passive curvilinear flows. {\color{black}Visually, the vorticity heatmap should be interpreted as a map of local rotational tendency: dark regions correspond to normal potential-driven transport, whereas bright regions indicate circulating traffic that may correspond to routing loops.}

\begin{figure}[!t]
	\vspace*{-1mm}
	\begin{minipage}[t]{0.494\linewidth} %
		\centering
		\subfigure[Raw discrete routing traces]{
			\label{fig:diagnosis_chaos}
			\vspace{-4mm}
			\includegraphics[width=1\linewidth]{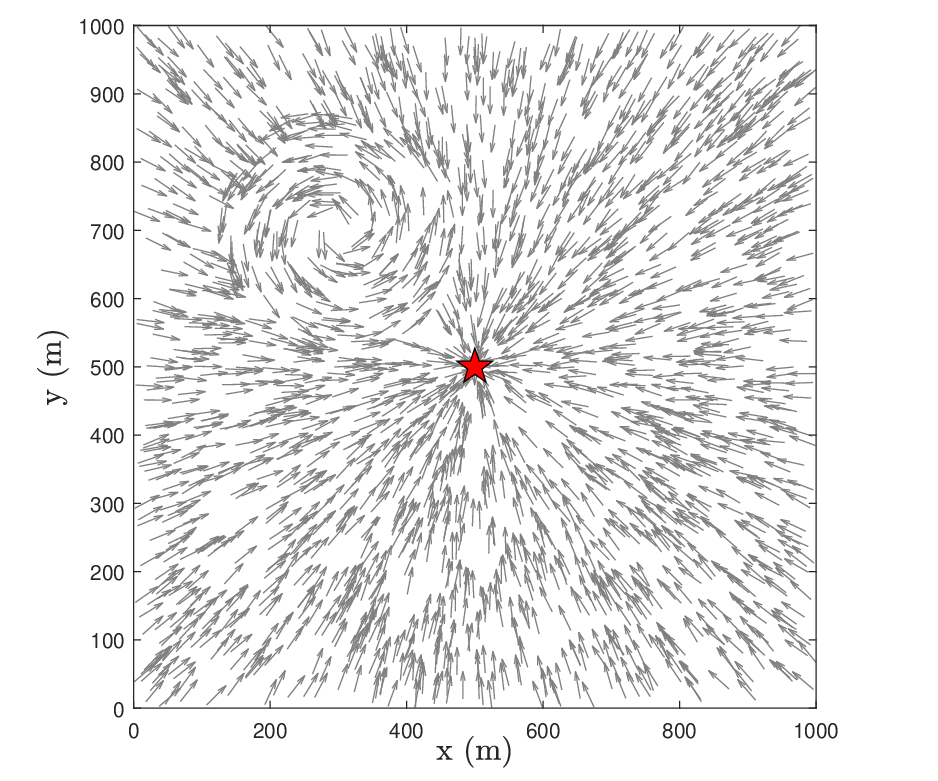}
	}\end{minipage} %
	\hspace*{-3mm}
	\begin{minipage}[t]{0.494\linewidth} %
		\subfigure[Filtered vorticity heatmap]{
			\label{fig:diagnosis_order}
			\includegraphics[width=1\linewidth]{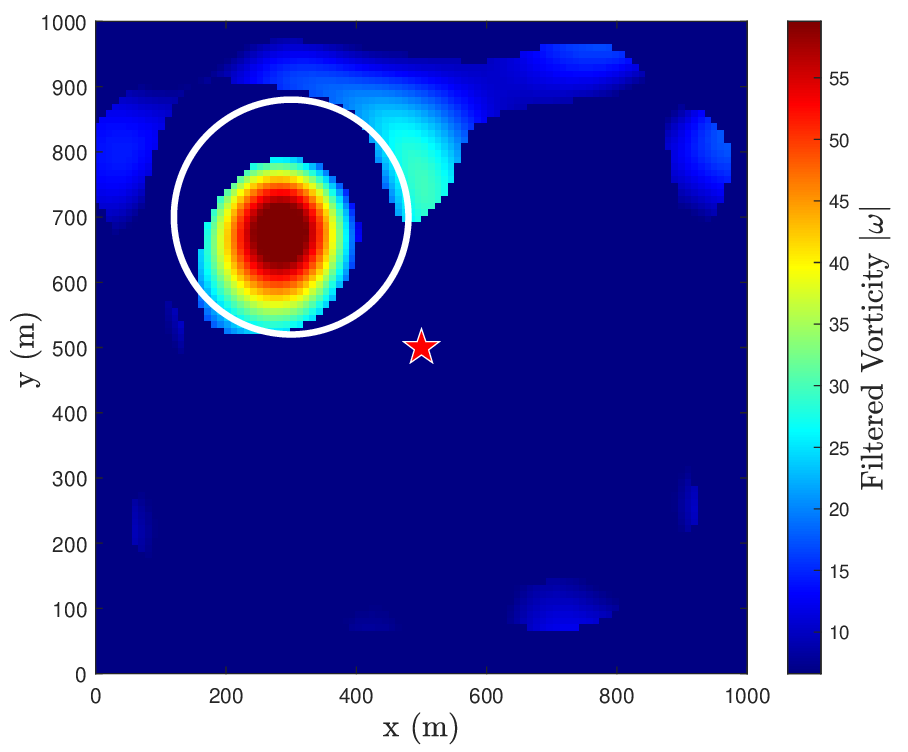}
	}\end{minipage}  
	\vspace*{-2mm}
	\caption{{\color{black}Blind diagnosis of routing anomalies from noisy discrete routing traces.
			The raw forwarding trajectories mix normal sink-directed traffic with anomalous cyclic paths, making the loop difficult to identify directly.
			After divergence-based filtering, the vorticity heatmap suppresses sink/source convergence artifacts and highlights the genuine loop-induced circulation region.}}
\label{fig:discrete_diagnosis}  
\vspace*{-2mm}
\end{figure}

\subsubsection{Blind Diagnosis from Noisy Discrete Traces} 
We further evaluate diagnostic robustness using raw discrete routing logs from a large-scale simulation. As shown in Fig. \ref{fig:discrete_diagnosis}(a), the latent loop is visually obscured by the clutter of background traffic converging toward the sink. Direct application of discrete curl operators on such data typically yields false positives due to shear singularities near the sink. To resolve this, we apply the divergence-based filtering strategy (Algorithm \ref{alg:diagnosis}). By exploiting the kinematic distinction that routing loops are solenoidal while normal flows are irrotational, the filter effectively attenuates high-divergence noise. The resulting heatmap in Fig. \ref{fig:discrete_diagnosis}(b) achieves a high signal-to-noise ratio, suppressing normal traffic into the background while isolating the anomaly. {\color{black}Therefore, high intensity in the filtered vorticity heatmap does not simply indicate high traffic density; rather, it indicates circulation-dominated traffic whose local forwarding directions form a loop-like pattern.} This confirms the framework's capability for blind diagnosis without prior knowledge of the global network state.

\vspace*{-2mm}
\subsection{Sensitivity to KDE Bandwidth}
\label{subsec:bandwidth_sensitivity}

{\color{black}
To validate the spatial abstraction induced by KDE smoothing, we evaluate the sensitivity of vorticity localization to the normalized bandwidth $\beta_h=h/R_c$. Specifically, $\beta_h$ is varied over ${0.25,0.33,0.50,0.67,0.75,1.00}$ while keeping $R_c=180~\mathrm{m}$ fixed. This controlled test keeps the underlying forwarding traces unchanged and isolates the effect of the KDE reconstruction bandwidth on the recovered vorticity field. For localization fidelity, we define the vorticity contrast as
\begin{equation}
	C_\omega =
	\frac{\overline{|\omega|}_{\mathrm{loop}}}
	{\overline{|\omega|}_{\mathrm{bg}}+\epsilon},
	\label{eq:vorticity_contrast}
\end{equation}
where $\overline{|\omega|}_{\mathrm{loop}}$ and $\overline{|\omega|}_{\mathrm{bg}}$ denote the average vorticity magnitude inside the loop region and in the background region, respectively, and $\epsilon$ is a small regularization constant. A larger $C_\omega$ indicates clearer separation between loop-induced circulation and background traffic. The peak preservation ratio is defined as the ratio between the recovered peak vorticity under bandwidth $\beta_h$ and the maximum recovered peak over all tested bandwidths. The localization error is the Euclidean distance between the detected vorticity peak and the ground-truth loop center.
}

\begin{table}[t]
	\vspace*{-3mm}
\scriptsize
\centering
\caption{{\color{black}Sensitivity of vorticity localization to the KDE bandwidth $h=\beta_h R_c$.}}
\vspace*{-3mm}
\label{tab:bandwidth_sensitivity}
\begin{tabular}{c|c|c|c|c}
	\hline
	$\beta_h$ & $h$ (m) & $C_\omega$ & Peak Ratio & Loc. Error (m) \\
	\hline
	0.25 & 45  & 24.53 & 0.709 & 52.77 \\
	0.33 & 59.4 & 21.56 & 1.000 & 8.17 \\
	0.50 & 90  & 12.03 & 0.979 & 8.17 \\
	0.67 & 120.6 & 8.20 & 0.716 & 8.17 \\
	0.75 & 135 & 7.57 & 0.611 & 7.92 \\
	1.00 & 180 & 7.16 & 0.384 & 7.92 \\
	\hline
\end{tabular}
\vspace*{-2mm}
\end{table}

{\color{black}
As shown in Table~\ref{tab:bandwidth_sensitivity}, the KDE bandwidth exhibits a clear resolution--smoothness tradeoff. A small bandwidth, e.g., $\beta_h=0.25$, yields a high vorticity contrast but also leads to a large localization error, indicating sensitivity to local sampling fluctuations. In contrast, overly large bandwidths reduce both the vorticity peak ratio and the contrast, showing that excessive smoothing may diffuse localized anomalies. The default value $\beta_h=0.5$ preserves a low localization error while maintaining a high peak ratio, and is therefore used as a stable compromise between noise suppression and anomaly localization in the considered scenario.
}

\vspace*{-2mm}
\subsection{Dynamic Evolution and Convergence Analysis}\label{S5.2}

We investigate the stability of the control mechanism by simulating a mixed flow state where a Rankine vortex is superimposed on a background potential flow. Fig. \ref{fig:convergence} tracks the energy trajectories over 60 iterations. The vorticity energy $\mathcal{E}_{\mathrm{vor}}$ (blue curve, left logarithmic axis) exhibits an approximately linear decline on the logarithmic scale, indicating that the discrete routing loop follows an exponential decay law ($\mathcal{E}_{\mathrm{vor}} \propto (1-\zeta)^{2\tau}$), consistent with our Lyapunov analysis. Quantitatively, the loop intensity is suppressed by over five orders of magnitude ($10^3 \to 10^{-2}$), confirming rapid anomaly elimination. Simultaneously, the transport energy $\mathcal{E}_{\mathrm{irr}}$ (red curve, right linear axis) remains quasi-constant with negligible variation ($<0.1\%$). This verifies the orthogonality of the discrete Helmholtz decomposition: the dissipation operator selectively targets the solenoidal subspace without perturbing the irrotational potential field, thereby preserving normal packet transmission. The alignment between the discrete trajectory and the theoretical exponential law further supports the framework's structural stability against discretization errors. {\color{black}
Therefore, the 60 iterations in Fig.~\ref{fig:convergence} should be interpreted as a convergence visualization from a mixed initial field, rather than as 60 mandatory network-wide synchronization rounds before packet forwarding.
}

\begin{figure}[!h]
\vspace*{-4mm}
\centering
\includegraphics[width=0.85\linewidth]{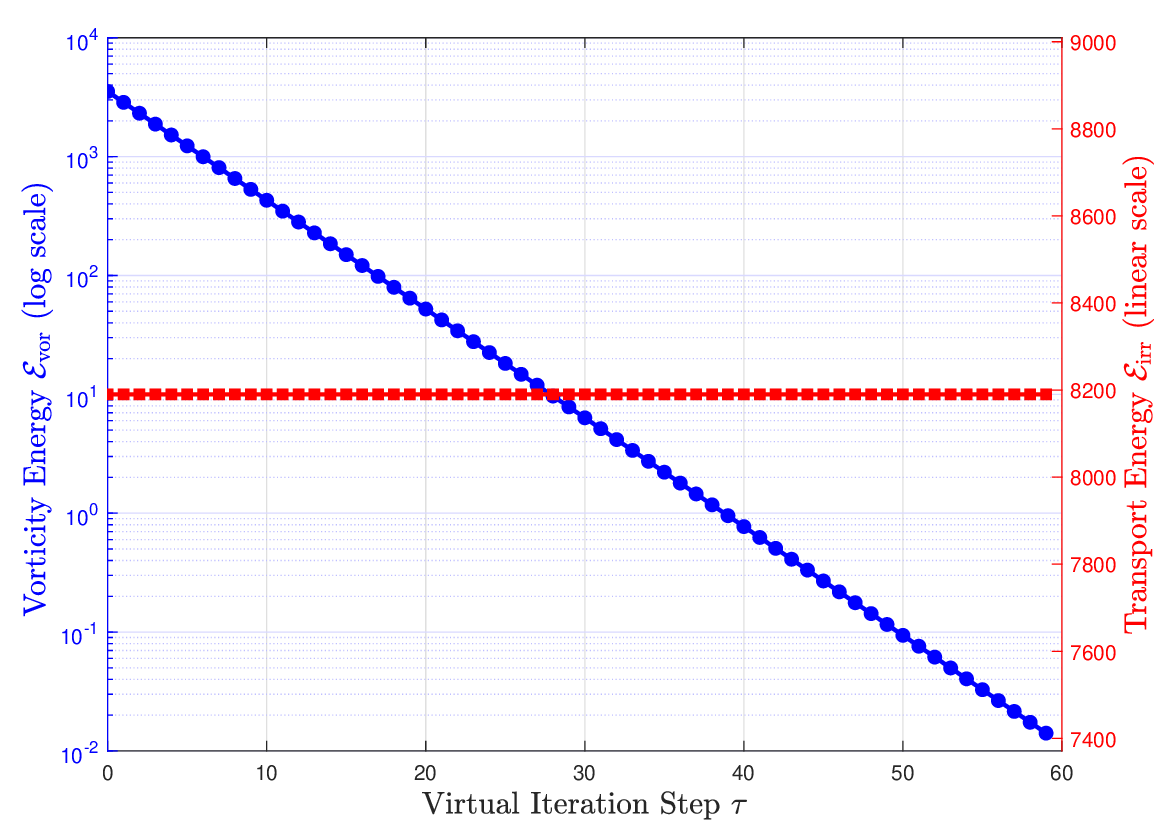} 
\vspace*{-4mm}
\caption{{\color{black}Dynamic evolution of network energy components over virtual relaxation iterations. The horizontal axis denotes the algorithmic index $\tau$, not per-packet physical signalling rounds; in practice, potential-field updates are amortized over the control-plane refresh window.}}
\label{fig:convergence} 
\vspace*{-4mm}
\end{figure}

\vspace*{-2mm}
\subsection{System Performance Comparison: Macro-Benchmarks}\label{S5.3}

To assess routing performance in complex, unstructured topologies characterized by high path diversity and prevalent routing loops, we extend the evaluation to a large-scale random geometric graph scenario. To simulate a dense IoT mesh network, $N$ nodes are randomly distributed over a $1000 \times 1000$ m$^2$ area. In this environment, we compare the proposed VDR against four representative baselines. These comprise two static routing protocols: a global shortest path (SP) strategy, which serves as an idealized representative of proactive link-state protocols like OLSR \cite{Clausen2003OLSR}, and GPSR \cite{Karp2000GPSR} as well as two dynamic, queue-aware protocols: backpressure (BP) \cite{Bui2011} and Q-learning-based QTAR \cite{Arafat2022QL}.

{\color{black}
VDR differs from BP and QTAR in its control principle. BP relies on local queue differentials and may suffer from random-walk-like detours when queue gradients are weak, while QTAR may introduce transient delay due to exploration and policy updates. In contrast, VDR combines a destination-aligned potential drive, adaptive queue-pressure response, and vorticity dissipation. This allows VDR to retain low delay under light traffic while improving congestion adaptability under heavy traffic.
}

\begin{figure}[!b]
\vspace*{-2mm}
\centering
\includegraphics[width=0.85\linewidth]{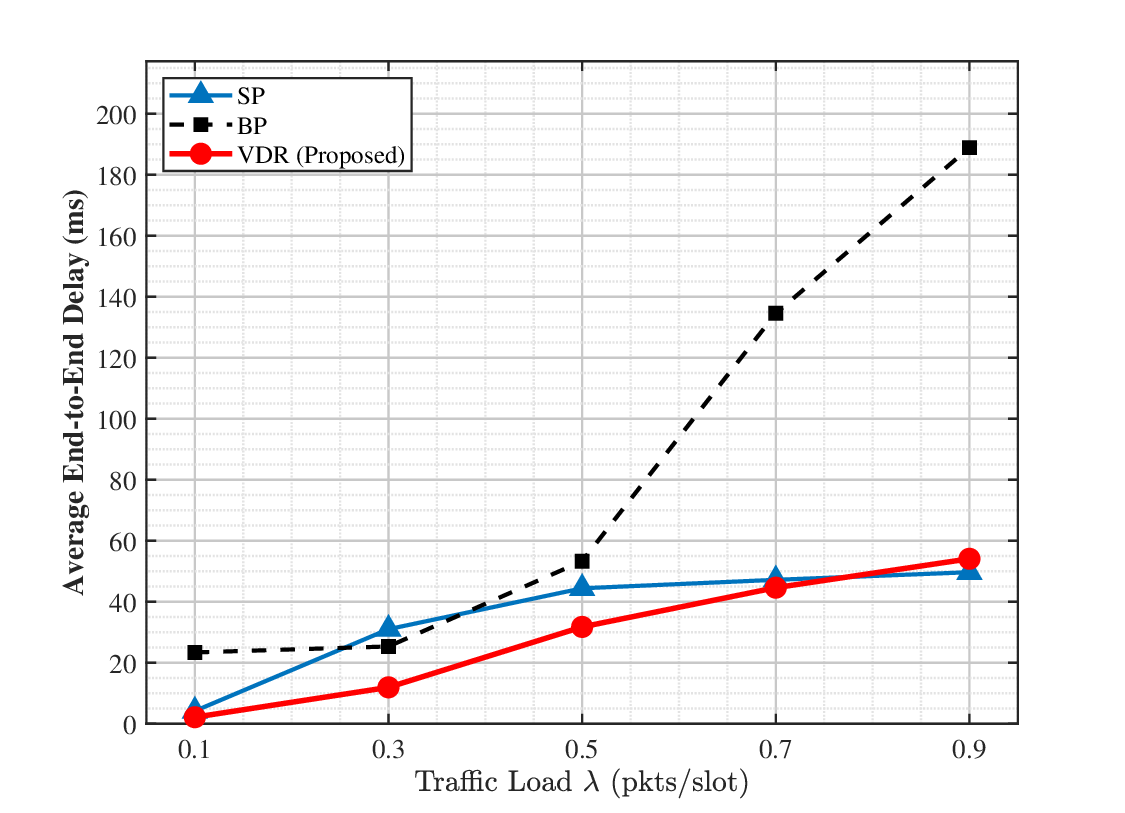} 
\vspace*{-4mm}
\caption{Average end-to-end delay characteristics as a function of traffic load.}
\label{fig:delay_perf} 
\vspace*{-1mm}
\end{figure}

\subsubsection{End-to-End Delay Analysis}\label{S5.3.1}

Fig.~\ref{fig:delay_perf} illustrates the average end-to-end delay of successfully delivered packets as a function of the normalized traffic load. 
{\color{black}
For VDR, if a packet is temporarily held because all admissible next-hop candidates are blocked or have non-positive utility, the holding time is accumulated in its end-to-end delay once the packet is eventually delivered. Packets that expire due to TTL or are dropped due to buffer overflow are not included in the delivered-packet delay average, but are counted as delivery failures in the PDR metric.
}

{\color{black}
In this delay experiment, we focus on SP, BP, and VDR. SP serves as a deterministic shortest-path baseline, while BP represents queue-differential forwarding in which weak or noisy queue gradients may induce random-walk-like detours. Since the delay metric is computed only over successfully delivered packets, the SP curve should be interpreted together with the PDR results in Fig.~\ref{fig:pdr_perf}. In particular, under heavy traffic, shortest-path forwarding may route many packets into bottleneck nodes; packets trapped in these congested regions are more likely to be dropped or remain undelivered and therefore do not enter the delivered-packet delay average.
}
In contrast, BP exhibits rapidly increasing delay as the offered load grows, reflecting the path dilation caused by queue-differential forwarding under local pressure fluctuations. 

{\color{black}
VDR substantially reduces this delay by combining a normalized potential-drive term with a normalized queue-pressure term and a packet-level loop-history guard. Quantitatively, over the entire load range, BP yields an average delay of $85.09$ ms, whereas VDR reduces this value to $28.90$ ms, corresponding to a $66.0\%$ average latency reduction. The point-wise reduction remains substantial under all tested loads: VDR reduces delay by $90.9\%$ at $\lambda=0.1$, $52.9\%$ at $\lambda=0.3$, $40.4\%$ at $\lambda=0.5$, $66.8\%$ at $\lambda=0.7$, and $71.4\%$ at $\lambda=0.9$.
}

At low loads, BP suffers from a latency penalty because the lack of strong queue differentials induces inefficient random walks. VDR avoids this issue because the irrotational potential drive supplies a destination-aligned descent direction even when the queue-pressure gradient is weak or temporarily flat. 
{\color{black}
As congestion increases, the adaptive weight gradually increases the influence of the normalized queue-pressure term, allowing VDR to detour packets around local hotspots while retaining the geometric guidance of the potential field. Thus, VDR achieves a predictable latency-reliability trade-off: it remains close to the low-delay behavior of potential-guided forwarding under light traffic and avoids the severe random-walk delay of BP under heavier traffic.
}

\begin{figure}[!b]
\vspace*{-3mm}
\centering
\includegraphics[width=0.85\linewidth]{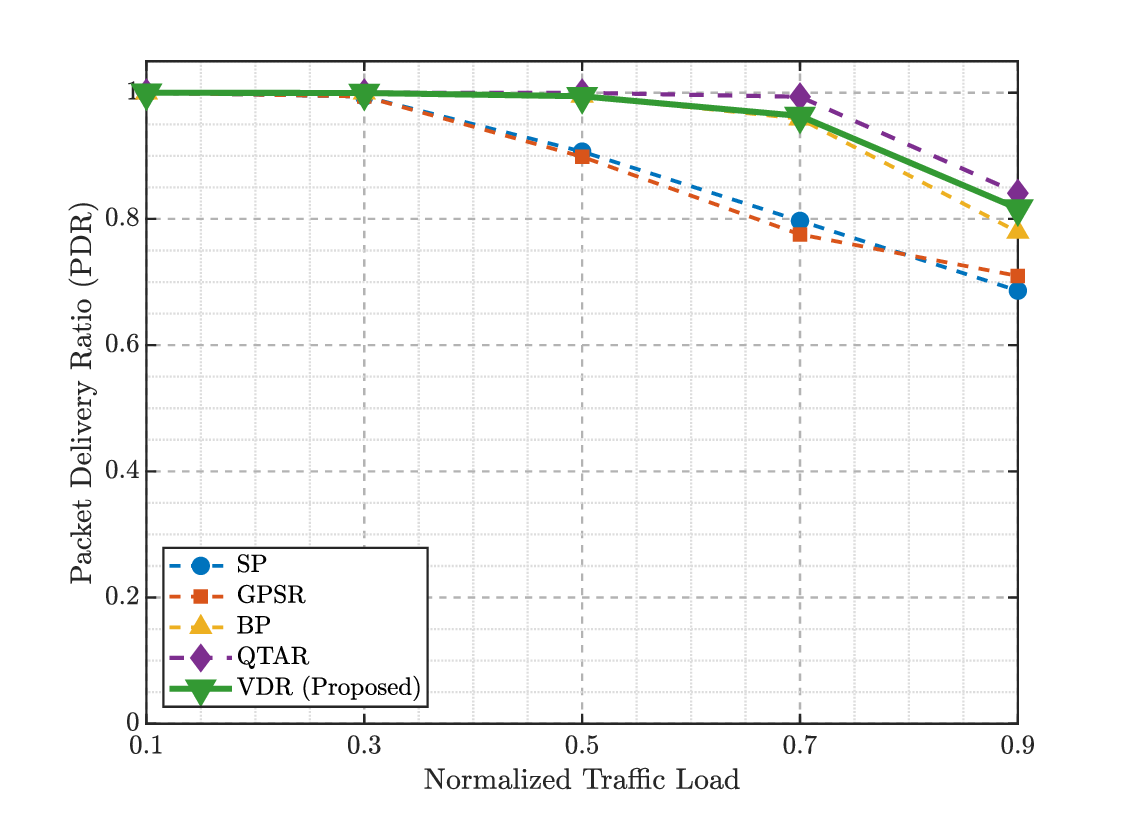} 
\vspace*{-4mm}
\caption{Comparative analysis of packet delivery ratio versus traffic load.}
\label{fig:pdr_perf} 
\vspace*{-1mm}
\end{figure}

\subsubsection{Packet Delivery Ratio Analysis}\label{S5.3.2}

Fig.~\ref{fig:pdr_perf} evaluates network reliability via the packet delivery ratio (PDR). 
{\color{black}
At light traffic loads, all routing schemes maintain a high PDR because congestion and receiver contention are still weak. As the normalized load increases, however, the reliability gap among different routing mechanisms becomes more visible. SP and GPSR degrade more rapidly because their forwarding decisions are mainly determined by shortest-hop or greedy geographic progress, which tends to concentrate traffic around sink-side bottlenecks under heavy load.}
Dynamic protocols mitigate this effect by incorporating congestion- or learning-aware adaptation, but they exhibit different operational trade-offs.

{\color{black}
Quantitatively, at the highest tested load $\lambda=0.9$, SP and GPSR achieve PDR values of $0.6863$ and $0.7097$, respectively, whereas BP improves the PDR to $0.7793$. VDR further increases the PDR to $0.8166$, corresponding to relative gains of approximately $19.0\%$, $15.1\%$, and $4.8\%$ over SP, GPSR, and BP, respectively. Compared with QTAR, whose PDR is $0.8408$ at the same load, VDR remains within a small $2.42$ percentage-point gap.}
This result indicates that VDR achieves reliability comparable to the learning-based baseline while avoiding the exploration overhead and transient instability associated with learning-based routing.

{\color{black}
The improvement of VDR comes from the joint effect of destination-aligned potential descent, adaptive queue-pressure response, and packet-level loop suppression. In the low-load regime, the scalar potential term provides deterministic geometric guidance, preventing the directionless forwarding behavior that may occur in purely local queue-differential routing. Under heavier loads, the queue-pressure term becomes more influential and allows VDR to route packets around congested regions. Therefore, VDR does not simply trade reliability for loop safety: it maintains a robust PDR trajectory while retaining the deterministic loop-suppression property verified in Fig.~\ref{fig:loop_perf}.} {\color{black}
The PDR calculation includes packets lost due to TTL expiration or local buffer overflow after temporary holding; hence, the high PDR of VDR is not obtained by excluding stalled packets from the statistics.
}

\begin{figure}[!h]
\vspace*{-3mm}
\centering
\includegraphics[width=0.85\linewidth]{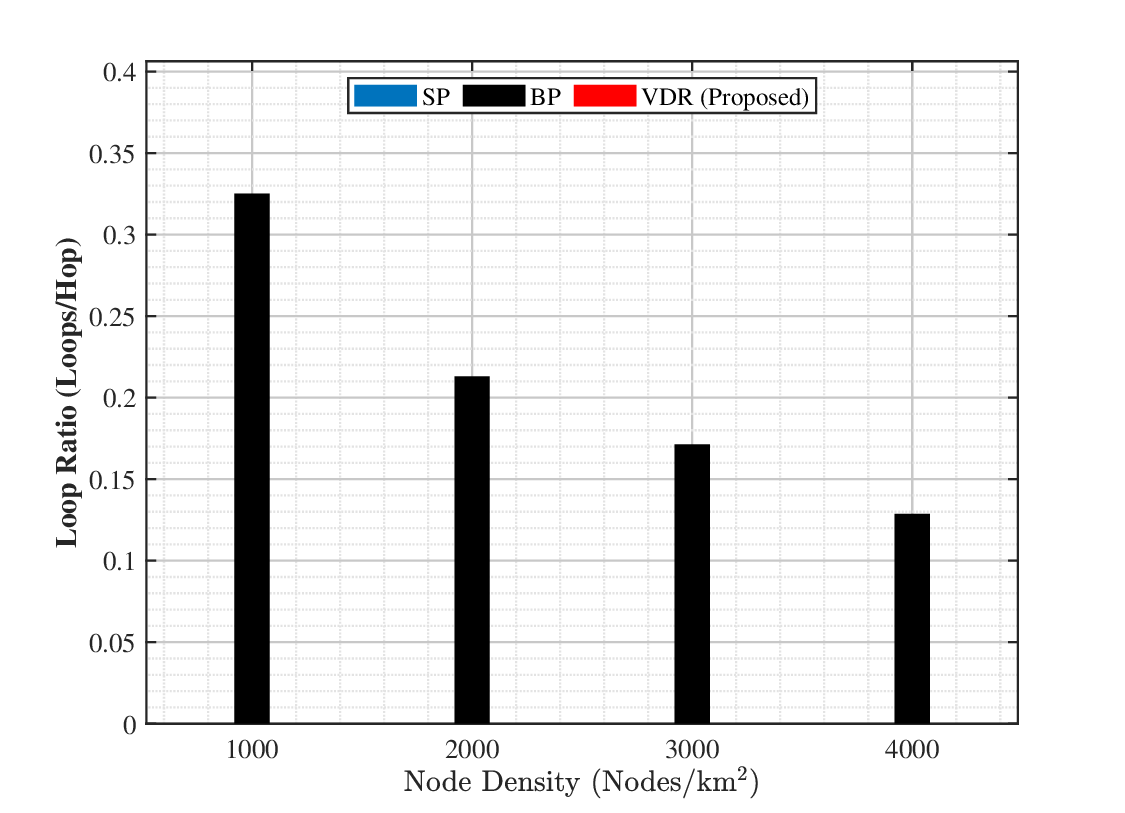} 
\vspace*{-4mm}
\caption{Impact of node density on the realized routing loop ratio. The loop ratio counts completed forwarding cycles, not blocked loop-closure attempts.}
\label{fig:loop_perf} 
\vspace*{-2mm}
\end{figure}

\subsubsection{Loop Suppression Capability}\label{S5.3.3}

Routing loops are a major source of inefficiency in ultra-dense networks, since cyclic forwarding consumes bandwidth and radio resources without contributing to net source-to-sink delivery. Fig.~\ref{fig:loop_perf} evaluates the realized routing loop ratio across different node densities.

{\color{black}
In this loop-suppression experiment, we focus on SP, BP, and VDR. SP serves as a deterministic loop-free reference, BP represents queue-differential forwarding where cyclic forwarding may arise under weak or noisy pressure gradients, and VDR is the proposed loop-suppression method. BP exhibits persistent cyclic forwarding, with the loop ratio decreasing from approximately $0.32$ to $0.13$ loops/hop as the density increases from $1000$ to $4000$ nodes/km$^2$. In contrast, both SP and VDR maintain a zero realized loop ratio across all tested densities.
}

This behavior reflects the limitation of relying solely on localized queue differentials. Without a destination-aligned macroscopic gradient, packets may enter cyclic trajectories when the local pressure field is noisy or temporarily flat. Static shortest-path forwarding avoids loops by following a fixed acyclic route structure, but this rigidity can concentrate traffic around bottleneck nodes, as reflected by the reliability comparison in Fig.~\ref{fig:pdr_perf}.

{\color{black}
VDR suppresses realized forwarding loops through two coupled mechanisms. At the field level, the curl-free potential component provides destination-aligned guidance, while the vorticity dissipation mechanism reduces the solenoidal circulation associated with loop-prone traffic patterns. At the packet level, the trajectory-history guard prevents finite-graph micro-loop closure by blocking candidates that have already appeared in the packet history. Thus, the empirical loop suppression in Fig.~\ref{fig:loop_perf} is consistent with the field-level dissipation principle in Theorem~\ref{thm:enstrophy_dissipation}, while the strict zero realized loop ratio is enforced by the packet-level history guard.
}

{\color{black}
It should be noted that Fig.~\ref{fig:loop_perf} reports realized forwarding loops, not blocked loop-closure attempts. A loop is counted only when a packet is actually forwarded to a node already recorded in its trajectory history. In VDR, such an attempted revisit is intercepted before transmission. The packet is then temporarily held and re-evaluated in the next scheduling slot if no admissible positive-utility candidate is available. The resulting waiting time is included in the end-to-end delay statistics, and any TTL expiration or buffer overflow is counted as packet loss in the PDR metric. Therefore, the zero realized loop ratio in Fig.~\ref{fig:loop_perf} is consistent with the PDR behavior in Fig.~\ref{fig:pdr_perf}: loop closure is prevented at the forwarding decision stage, while the cost of prevention is accounted for through delay and delivery outcomes.
}

\subsubsection{Robustness under Bursty Traffic}\label{S5.3.4}

{\color{black}
To verify the robustness of the source-intensity estimator under transient machine-type traffic, we further evaluate VDR with ON-OFF bursty arrivals. During the ON period, a subset of hotspot sources increases its arrival rate by a burst factor $B$, while the remaining nodes keep the baseline rate. We compare the original fixed EMA with $\gamma=0.1$ and the adaptive EMA in \eqref{eq:adaptive_source_ema}--\eqref{eq:adaptive_gamma}. The packet-level forwarding rule is kept unchanged, so this experiment isolates the effect of source-field tracking.

Table~\ref{tab:bursty_traffic} shows that the adaptive EMA substantially reduces the source-field recovery time under bursty traffic. For example, when the burst factor is $B=10$, the recovery time decreases from $21.75$ update windows with the fixed EMA to $2.80$ update windows with the adaptive EMA. Meanwhile, the PDR remains stable, increasing slightly from $0.9535$ to $0.9548$, and the loop ratio remains zero. Similar behavior is observed for $B=5$, where the recovery time is reduced from $19.97$ to $2.48$ update windows without degrading delivery performance. These results confirm that the adaptive estimator improves transient source-field tracking while preserving the packet-level reliability and loop-suppression behavior of VDR.
}

\begin{table}[!t]
\centering
\caption{{\color{black}Robustness of VDR under bursty traffic.}}
\vspace*{-3mm}
\label{tab:bursty_traffic}
\scriptsize
\setlength{\tabcolsep}{3pt}
\renewcommand{\arraystretch}{1.1}
\begin{tabular}{c|c|c|c|c|c}
	\hline
	$B$ & Estimator & Recovery time & PDR & Avg. delay & Loop ratio \\
	\hline
	1  & Fixed EMA & 0.00 & 0.9957 & 32.36 & 0 \\
	1  & Adaptive EMA & 0.00 & 0.9957 & 32.36 & 0 \\
	5  & Fixed EMA & 19.97 & 0.9895 & 49.30 & 0 \\
	5  & Adaptive EMA & 2.48 & 0.9895 & 49.30 & 0 \\
	10 & Fixed EMA & 21.75 & 0.9535 & 76.90 & 0 \\
	10 & Adaptive EMA & 2.80 & 0.9548 & 76.91 & 0 \\
	\hline
\end{tabular}
\vspace*{-3mm}
\end{table}

\vspace*{-2mm}
\subsection{Scalability and Computational Complexity}\label{S5.4}

\begin{figure}[!t]
\vspace*{-1mm}
\centering
\includegraphics[width=0.85\linewidth]{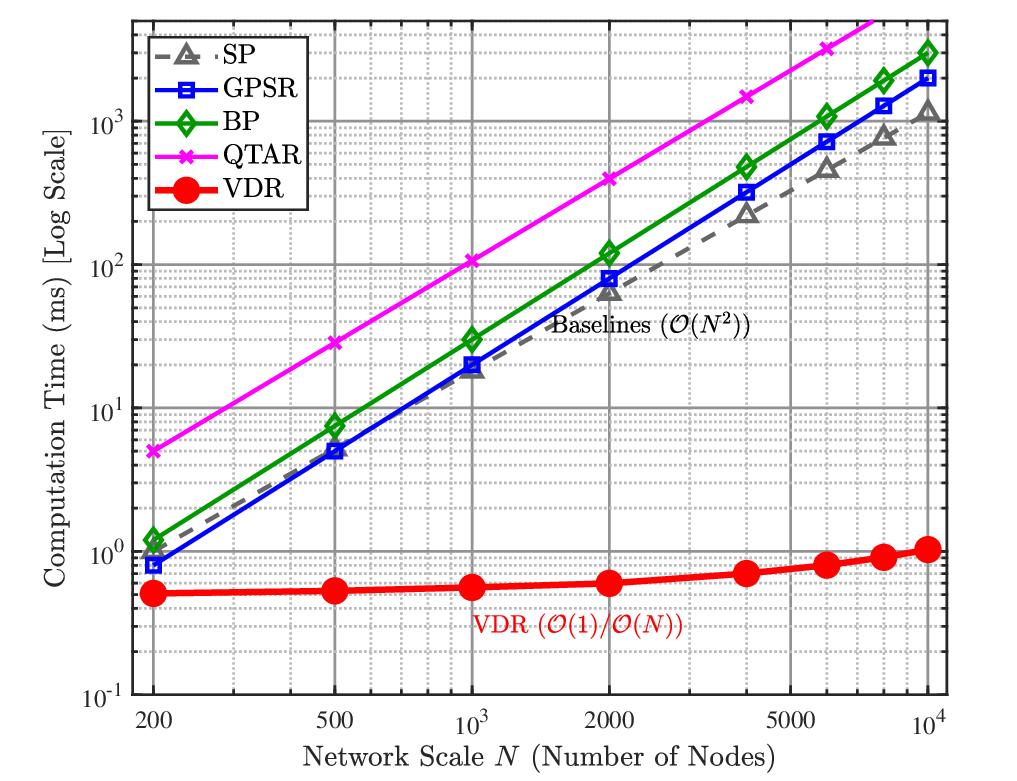}
\vspace*{-4mm}
\caption{{\color{black}	Computational scalability under fixed-area densification. The VDR per-iteration complexity is $O(N+M^2\log M)$. The near-linear trend in $N$ is observed in this experiment because both the physical domain and the grid size $M$ are fixed.}}

\label{fig:scalability} 
\vspace*{-1mm}
\end{figure}

To evaluate the scalability of the proposed framework in ultra-dense deployment scenarios, we analyze the average computation time per routing iteration as the network size $N$ increases from $2 \times 10^2$ to $8 \times 10^3$. The results are presented in Fig.~\ref{fig:scalability} using a log-log scale. Traditional discrete algorithms (SP, GPSR, BP, QTAR) exhibit a quadratic growth trend in this fixed-area densification setting. For instance, at $N=8 \times 10^3$, the update latency for QTAR and BP exceeds $10^3$\,ms. This bottleneck arises because the average node degree increases with $N$, causing the edge set cardinality and the associated graph-theoretic computational load to grow rapidly.

{\color{black}
The proposed VDR algorithm has a per-iteration complexity of
$
O(N+M^2\log M),
$
where the $O(N)$ term accounts for mapping discrete traffic samples onto the continuum field and evaluating local forwarding quantities, while the $O(M^2\log M)$ term accounts for solving the Poisson equation on an $M\times M$ grid. In Fig.~\ref{fig:scalability}, the physical domain is fixed to $1000\times1000~\mathrm{m}^2$ and the grid size is fixed at $M=128$. Therefore, the Poisson-solver cost is constant with respect to $N$, and the measured trend is dominated by the node-dependent $O(N)$ component. This explains the near-linear scaling observed in the fixed-area densification experiment.

When the physical network area increases while a fixed spatial resolution $h$ is maintained, the grid size must also increase. For a square domain with side length $L$, we have $M\approx L/h$, and the grid-dependent cost becomes
$
O\!\left(\frac{L^2}{h^2}\log\frac{L}{h}\right).
$
Thus, under fixed node density and fixed spatial resolution, the overall per-iteration complexity scales as $O(N\log N)$ rather than strictly $O(N)$.
}

Therefore, the near-linear trend in Fig.~\ref{fig:scalability} should be interpreted under the fixed-area, fixed-grid setting used in this experiment. Under this setting, VDR avoids repeated graph traversal over a rapidly densifying edge set and achieves a substantial reduction in computation time at high node densities.

\vspace*{-2mm}
\section{Conclusion}\label{S6}

This work has established a continuum routing framework based on Helmholtz-Hodge decomposition to address scalability limitations in ultra-dense networks. We have proved that macroscopic traffic flux allows for an orthogonal decoupling into demand-driven irrotational and loop-induced solenoidal components. The proposed VDR strategy formulates loop elimination as a gradient flow on the network enstrophy functional, guaranteeing the monotonic decay of routing vorticity. Numerical simulations validated that the framework achieves asymptotic loop-free routing and exhibits near-linear computational scaling under fixed-area, fixed-grid densification, while explicitly accounting for the grid-dependent Poisson-solver cost. Future research will extend this fluid-kinetic paradigm to three-dimensional vehicular networks and cognitive radio environments.




\small

\begin{thebibliography}{10}
\providecommand{\url}[1]{#1}
\csname url@samestyle\endcsname
\providecommand{\newblock}{\relax}
\providecommand{\bibinfo}[2]{#2}
\providecommand{\BIBentrySTDinterwordspacing}{\spaceskip=0pt\relax}
\providecommand{\BIBentryALTinterwordstretchfactor}{4}
\providecommand{\BIBentryALTinterwordspacing}{\spaceskip=\fontdimen2\font plus
	\BIBentryALTinterwordstretchfactor\fontdimen3\font minus
	\fontdimen4\font\relax}
\providecommand{\BIBforeignlanguage}[2]{{%
		\expandafter\ifx\csname l@#1\endcsname\relax
		\typeout{** WARNING: IEEEtran.bst: No hyphenation pattern has been}%
		\typeout{** loaded for the language `#1'. Using the pattern for}%
		\typeout{** the default language instead.}%
		\else
		\language=\csname l@#1\endcsname
		\fi
		#2}}
\providecommand{\BIBdecl}{\relax}   
\BIBdecl

\bibitem{IoTJDong} 
W.-Y. Dong, S. Yang, P. Zhang, and S. Chen, ``Modeling and performance analysis of IoT-over-LEO satellite systems under realistic operational constraints: A stochastic geometry approach,'' \emph{IEEE Internet Things J.}, vol. 12, no. 15, pp. 30576--30593, Aug. 2025. 

\bibitem{Polese2020} 
M. Polese, \emph{et al.}, ``Integrated access and backhaul in 5G mmWave networks: Potential and challenges,'' \emph{IEEE Commun. Mag.}, vol. 58, no. 3, pp. 62--68, Mar. 2020.

\bibitem{Kim2017RPLSurvey} 
H.-S. Kim, J. Ko, D. E. Culler, and J. Paek, ``Challenging the IPv6 routing protocol for low-power and lossy networks (RPL): A survey,'' \textit{IEEE Commun. Surv. Tuts.}, vol. 19, no. 4, pp. 2502--2525, 4th Quart. 2017.

\bibitem{Lighthill1955} 
M.~J. Lighthill and G.~B. Whitham, ``On kinematic waves II. A theory of traffic flow on long crowded roads,'' \emph{Proc. Roy. Soc. London. Ser. A, Math. Phys. Sci.}, vol. 229, no. 1178, pp. 317--345, May 1955.


\bibitem{Griffiths1999} 
D.~J. Griffiths, \emph{Introduction to Electrodynamics} (3rd~ed.). Upper Saddle River, NJ, USA: Prentice Hall, 1999.


\bibitem{Perkins1999AODV} 
C. E. Perkins and E. M. Royer, ``Ad-hoc on-demand distance vector routing,'' in \emph{Proc. WMCSA'99} (New Orleans, LA, USA), Feb.~25-26, 1999, pp. 90--100.

\bibitem{Clausen2003OLSR} 
T. Clausen and P. Jacquet, ``Optimized Link State Routing Protocol (OLSR),'' Internet Engineering Task Force (IETF), RFC 3626, Oct. 2003, doi: 10.17487/RFC3626.


\bibitem{Karp2000GPSR} 
B. Karp and H. T. Kung, ``GPSR: Greedy perimeter stateless routing for wireless networks,'' in \textit{Proc. MobiCom'00} (Boston, MA, USA), Aug.~6-11, 2000, pp. 243--254.


\bibitem{Wang2025TVT} 
C.-M. Wang, \emph{et al.}, ``A distributed hybrid proactive-reactive ant colony routing protocol for highly dynamic FANETs with link quality prediction,'' \emph{IEEE Trans. Veh. Technol.}, vol. 74, no. 1, pp. 1817--1822, Jan. 2025.

\bibitem{Yang2025TVT} 
S. Yang, \emph{et al.}, ``Betweenness centrality based dynamic source routing for flying ad hoc networks in marching formation,'' \emph{IEEE Trans. Veh. Technol.}, vol. 74, no. 8, pp. 12791--12798, Aug. 2025.

\bibitem{Ding2026ISACWaveform}
X. Ding \emph{et al.}, ``Integrated communication and sensing waveform in low-altitude wireless network,'' \emph{IEEE Commun. Stand. Mag.}, vol.~10, no.~2, pp.~140--149, Jun.~2026.


\bibitem{Tassiulas1992} 
L. Tassiulas and A. Ephremides, ``Stability properties of constrained queueing systems and scheduling policies for maximum throughput in multihop radio networks,'' \emph{IEEE Trans. Autom. Control}, vol. 37, no. 12, pp. 1936--1948, Dec. 1992.

\bibitem{Bui2011} 
L.~X.~Bui, R.~Srikant, and A.~Stolyar, ``A novel architecture for reduction of delay and queueing structure complexity in the back-pressure algorithm,'' \emph{IEEE/ACM Trans. Netw.}, vol. 19, no. 6, pp. 1597--1609, Dec. 2011.

\bibitem{Deng2023DBPR} 
X. Deng, \emph{et al.}, ``Distance-based back-pressure routing for load-balancing LEO satellite networks,'' \emph{IEEE Trans. Veh. Technol.}, vol. 72, no. 1, pp. 1240--1253, Jan. 2023.

\bibitem{Zhao2024BiasedBP} 
Z. Zhao, \emph{et al.}, ``Biased backpressure routing using link features and graph neural networks,'' \emph{IEEE Trans. Mach. Learn. Commun. Netw.}, vol. 2, pp. 1424--1439, Oct. 2024.

\bibitem{Mahfujul2024TNSE} 
K. Mahfujul, K. Qu, Q. Ye, and N. Lu, ``Augmenting backpressure scheduling and routing for wireless computing networks,'' \emph{IEEE Trans. Netw. Sci. Eng.}, vol. 11, no. 6, pp. 6605--6622, Nov.-Dec. 2024.

\bibitem{Arafat2022QL} 
M.~Y.~Arafat and S.~Moh, ``A Q-learning-based topology-aware routing protocol for flying ad hoc networks,'' \textit{IEEE Internet Things J.}, vol.~9, no.~3, pp.~1985--2000, Feb.~2022.

\bibitem{Xiao2021DRLSurvey} 
Y. Xiao, J. Liu, J. Wu, and N. Ansari, ``Leveraging deep reinforcement learning for traffic engineering: A survey,'' \emph{IEEE Commun. Surveys Tuts.}, vol. 23, no. 4, pp. 2064--2097, 4th Quart. 2021.

\bibitem{Rusek2020} 
K.~Rusek, \emph{et al.}, ``RouteNet: Leveraging graph neural networks for network modeling and optimization in SDN,'' \emph{IEEE J. Sel. Areas Commun.}, vol. 38, no. 10, pp. 2260--2270, Oct. 2020.

\bibitem{Shen2021GNN} 
Y. Shen, Y. Shi, J. Zhang, and K. B. Letaief, ``Graph neural networks for scalable radio resource management: Architecture design and theoretical analysis,'' \emph{IEEE J. Sel. Areas Commun.}, vol. 39, no. 1, pp. 101--115, Jan. 2021.

\bibitem{Khan2024} 
N. Khan, \emph{et al.}, ``Explainable and robust artificial intelligence for trustworthy resource management in 6G networks,'' \emph{IEEE Commun. Mag.}, vol. 62, no. 4, pp. 50--56, Apr. 2024.

\bibitem{Barbarossa2020TSP} 
S. Barbarossa and S. Sardellitti, ``Topological signal processing over simplicial complexes,'' \emph{IEEE Trans. Signal Process.}, vol. 68, pp. 2992--3007, Jun.. 2020.

\bibitem{Schaub2020RandomWalks} 
M. T. Schaub, \emph{et al.}, ``Random walks on simplicial complexes and the normalized Hodge 1-Laplacian,'' \emph{SIAM Rev.}, vol. 62, no. 2, pp. 353--391, 2020.

\bibitem{Jiang2011StatisticalRanking} 
X. Jiang, L.-H. Lim, Y. Yao, and Y. Ye, ``Statistical ranking and combinatorial Hodge theory,'' \emph{Math. Program.}, vol. 127, no. 1, pp. 203--244, 2011.

\bibitem{Jacquet2004} 
P.~Jacquet, ``Geometry of information propagation in massively dense ad hoc networks,'' in \emph{Proc. MobiHoc'04} (Tokyo, Japan), May 24-26, 2004, pp. 157--162.

\bibitem{Toumpis2003Continuum} 
S. Toumpis and A. J. Goldsmith, ``Capacity regions for wireless ad hoc networks,'' \emph{IEEE Trans. Wireless Commun.}, vol. 2, no. 4, pp. 736--748, Jul. 2003.

\bibitem{Samarakoon2016MFG} 
S. Samarakoon, \emph{et al.}, ``Ultra-dense small cell networks: Turning density into energy efficiency,'' \emph{IEEE J. Sel. Areas Commun.}, vol. 34, no. 5, pp. 1267--1280, May 2016.

\bibitem{DongTCOM} 
W.-Y. Dong, S. Yang and S. Chen, ``Uplink performance analysis of heterogeneous non-terrestrial networks in harsh environments: A novel stochastic geometry model,'' \emph{IEEE Trans. Commun.}, vol. 73, no. 8, pp. 6734--6747, Aug. 2025.

\bibitem{DongJSAC} 
W.-Y.~Dong, S.~Yang, P.~Zhang, and S.~Chen, ``Stochastic geometry based modeling and analysis of uplink cooperative satellite-aerial-terrestrial networks for nomadic communications with weak satellite coverage,'' \emph{IEEE J. Sel. Areas Commun.}, vol.~42, no.~12, pp.~3428--3444, Dec.~2024.

\bibitem{Silverman1986} 
B. W. Silverman, \emph{Density Estimation for Statistics and Data Analysis}. London, U.K.: Chapman and Hall, 1986.


\bibitem{Saffman1992} 
P. G. Saffman, \textit{Vortex Dynamics}. Cambridge, U.K.: Cambridge University Press, 1992.

\bibitem{Arfken2013} 
G.~B. Arfken, H.~J. Weber, and F.~E. Harris, \emph{Mathematical Methods for Physicists} (7th~ed.). Oxford, U.K.: Academic Press, 2013.

	\bibitem{Dong2026FSTSG} 
W.-Y. Dong, W. Jiang, S. Zhao, Q. Bi, and S. Chen, ``Fluid-spatiotemporal stochastic geometry: Information flow in non-stationary fields,'' arXiv:2607.00616, 2026.

\bibitem{Dong2026TMC} 
W.-Y. Dong, S. Zhao, R.-S. Han, Q. Bi, and S. Chen, ``Digital tides: A fluid-dynamic framework for flux-aware infrastructure provisioning in UAV logistics networks,'' \emph{IEEE Trans. Mobile Comput.}, early access, Apr.~2026, doi: 10.1109/TMC.2026.3688690.



\end{thebibliography}

\makeatletter
\def\@IEEEBIOskipN{0.4\baselineskip}
\makeatother

\begin{IEEEbiography}
[{\includegraphics[width=1in,height=1.25in,clip,keepaspectratio]{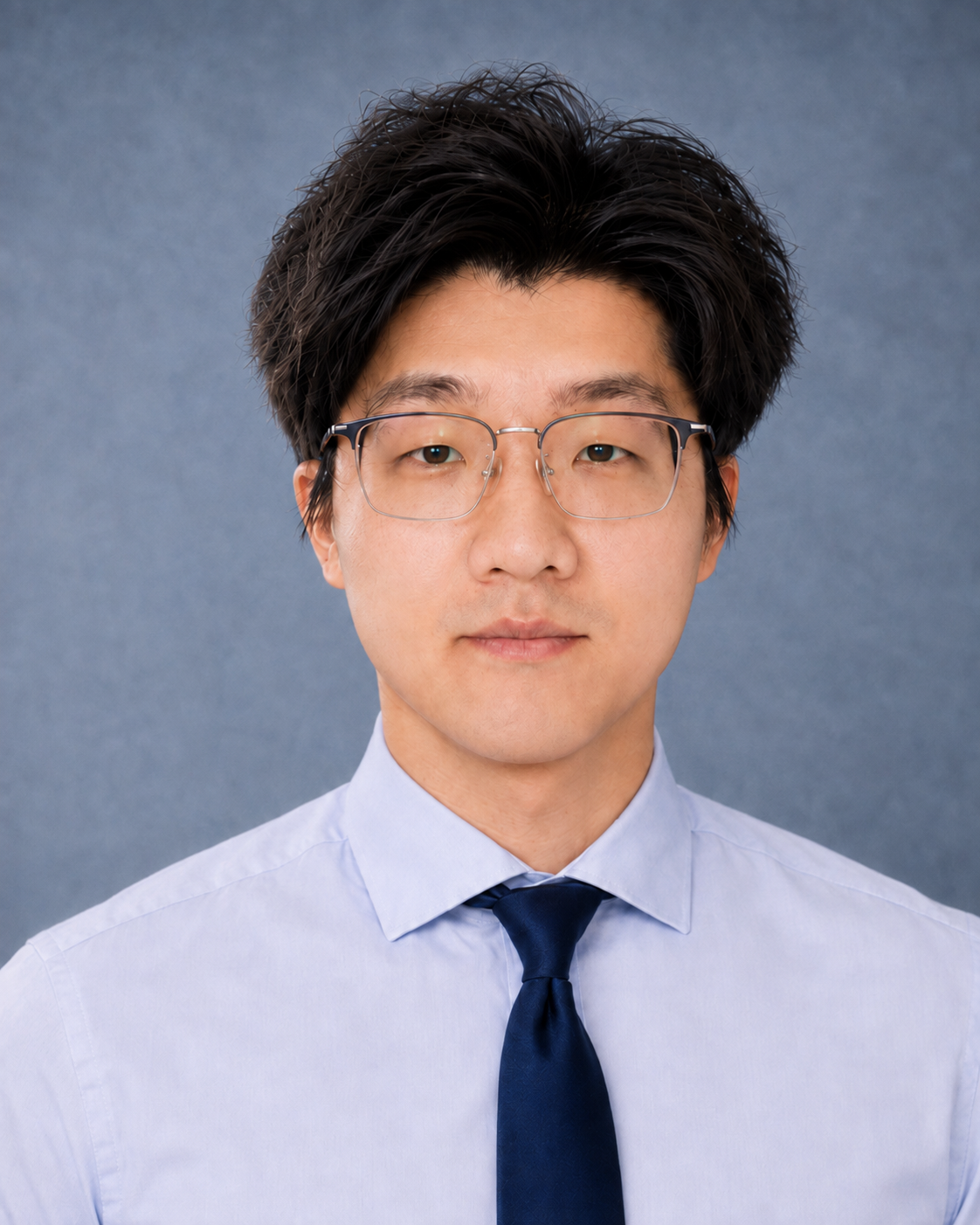}}]
{Wen-Yu Dong}
received the B.S. degree in electronic and information engineering from Sichuan University, Chengdu, China, in 2019, and the Ph.D. degree in information and communication engineering from the School of Information and Communication Engineering, Beijing University of Posts and Telecommunications, Beijing, China, in 2025. He is currently a Researcher with the Future Technology Research Center, China Telecom Research Institute, Beijing, China. His current research interests include space-air-ground integrated networks, fluid-spatiotemporal stochastic geometry (F-STSG), and wireless communications.
\end{IEEEbiography}

\begin{IEEEbiography}
[{\includegraphics[width=1in,height=1.25in,clip,keepaspectratio]{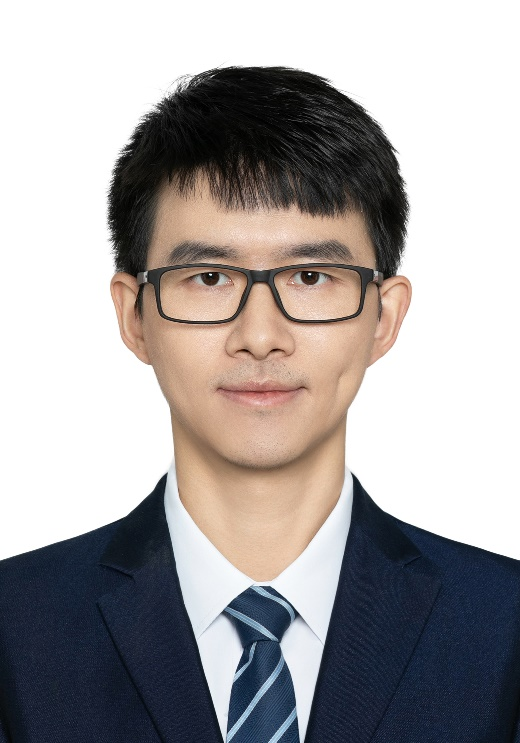}}]
{Weiwei Jiang} (Senior Member, IEEE)
received the B.Sc. Degree of Electronic Engineering and Ph.D. Degree of Information and Communication Engineering from the Department of Electronic Engineering, Tsinghua University, Beijing, China, in 2013 and 2018, respectively. He is currently an associte professor with the School of Information and Communication Engineering, Beijing University of Posts and Telecommunications, and Key Laboratory of Universal Wireless Communications, Ministry of Education. His current research interests include artificial intelligence, machine learning, big data, wireless communication and edge computing. He has published more than 100 academic papers in IEEE Trans and other journals, with more than 5700 citations in Google Scholar. He is one of 2022, 2023, 2024 and 2025 Stanford's List of World's Top 2\% Scientists.
\end{IEEEbiography}

\begin{IEEEbiography}[{\includegraphics[width=1in,height=1.25in,clip,keepaspectratio]{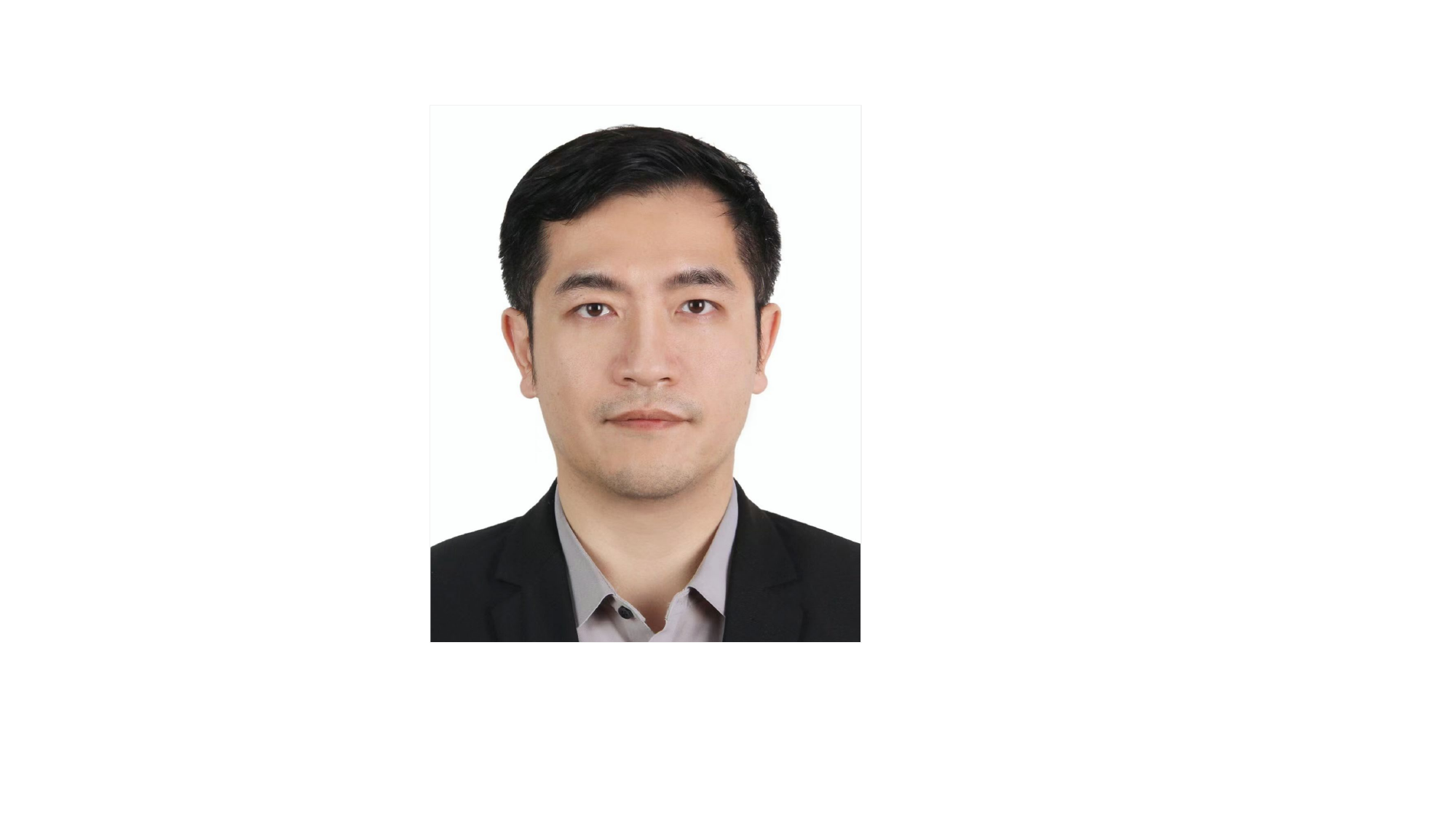}}]{Song Zhao}
is a senior engineer with the Future Technology Center, China Telecom Research Institute, Beijing, China. He received his Ph.D. degree in Telecommunications and Information Systems from Beijing University of Posts and Telecommunications, Beijing, China. He serves as a delegate of China Telecom in the 3GPP SA2 and SA5 working groups, and as rapporteur for multiple 3GPP work items related to network automation and intelligence. His research interests include wireless channel modeling, radio resource management, proximity services, and network AI.
\end{IEEEbiography}

\begin{IEEEbiography}
[{\includegraphics[width=1in,height=1.25in,clip,keepaspectratio]{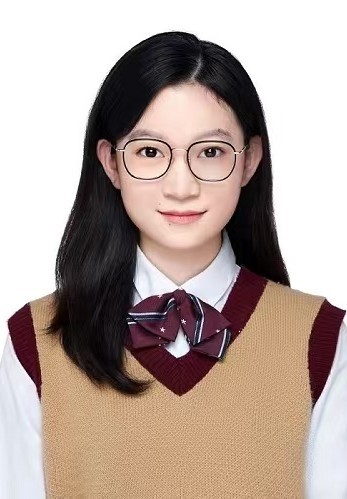}}]
{Rui-Si Han}
received her B.S. degree in E-Commerce and Law from Beijing University of Posts and Telecommunications, Beijing, China, in 2021, and her M.Eng. degree in Information and Communication Engineering from the same university in 2024. She is currently a researcher at the China Telecom Cloud Network Operating System R\&D Center, Beijing, China. Her research interests include mobile ad hoc networks and research project management.
\end{IEEEbiography}

\begin{IEEEbiography}[{\includegraphics[width=1in,height=1.25in,clip,keepaspectratio]{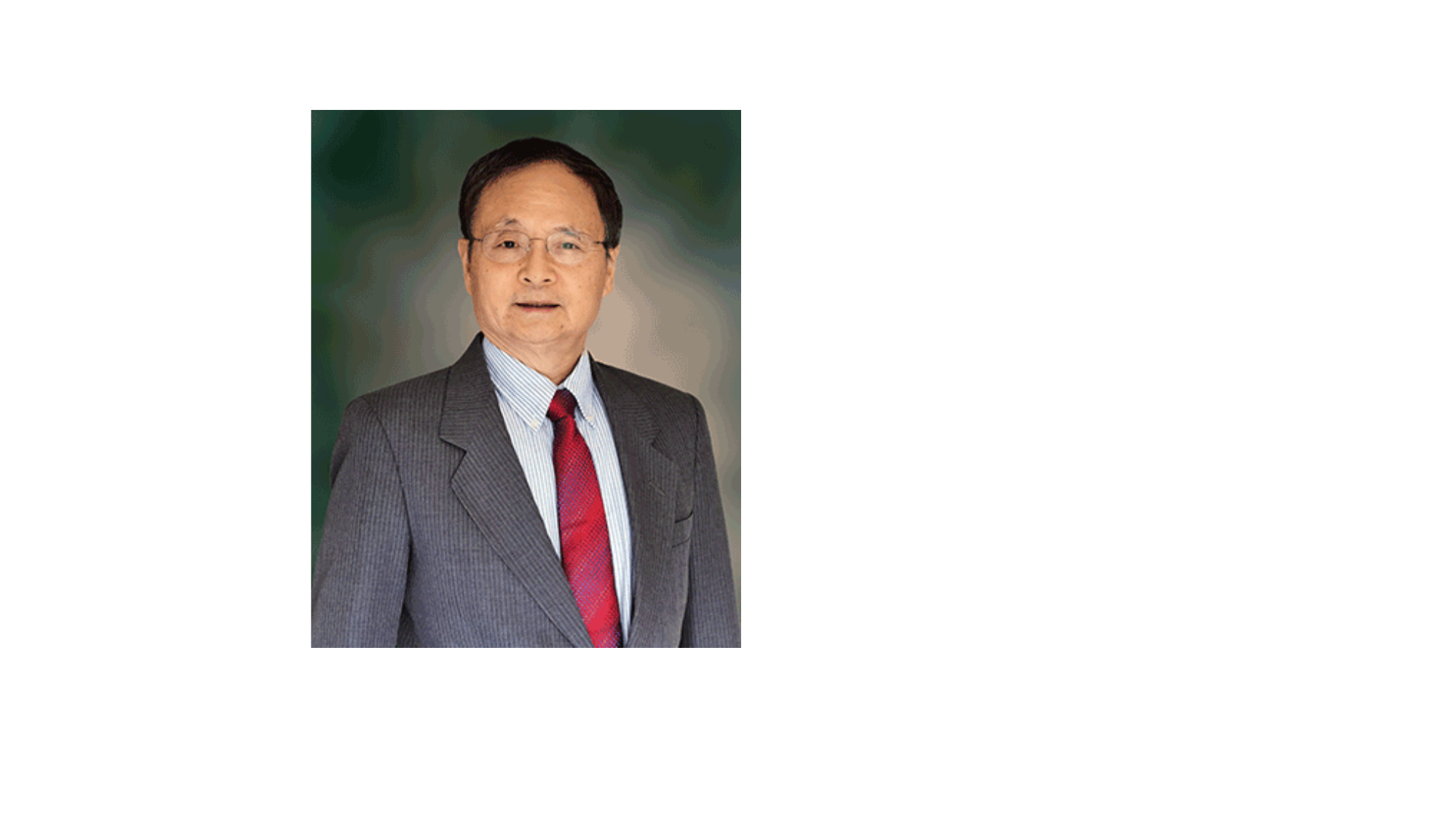}}]{Qi Bi}
(Fellow, IEEE) is Chief Scientist of China Telecom and CTO of its Research Institute, specializing in 5G and 6G. He received his M.S. from Shanghai Jiao Tong University and Ph.D. from Pennsylvania State University. He was awarded Bell Labs Fellow in 2002, the Bell Labs President’s Gold Awards in 2000 and 2002, the Asian American Engineer of the Year in 2005, the Beijing Outstanding Contribution Award for Innovation and Entrepreneurship for Overseas Scholars in 2019, and three Patent Silver Prizes from the China National Intellectual Property Administration from 2022 to 2024.
\end{IEEEbiography}

\begin{IEEEbiography}[{\includegraphics[width=0.9in,height=1.1in,clip,keepaspectratio]{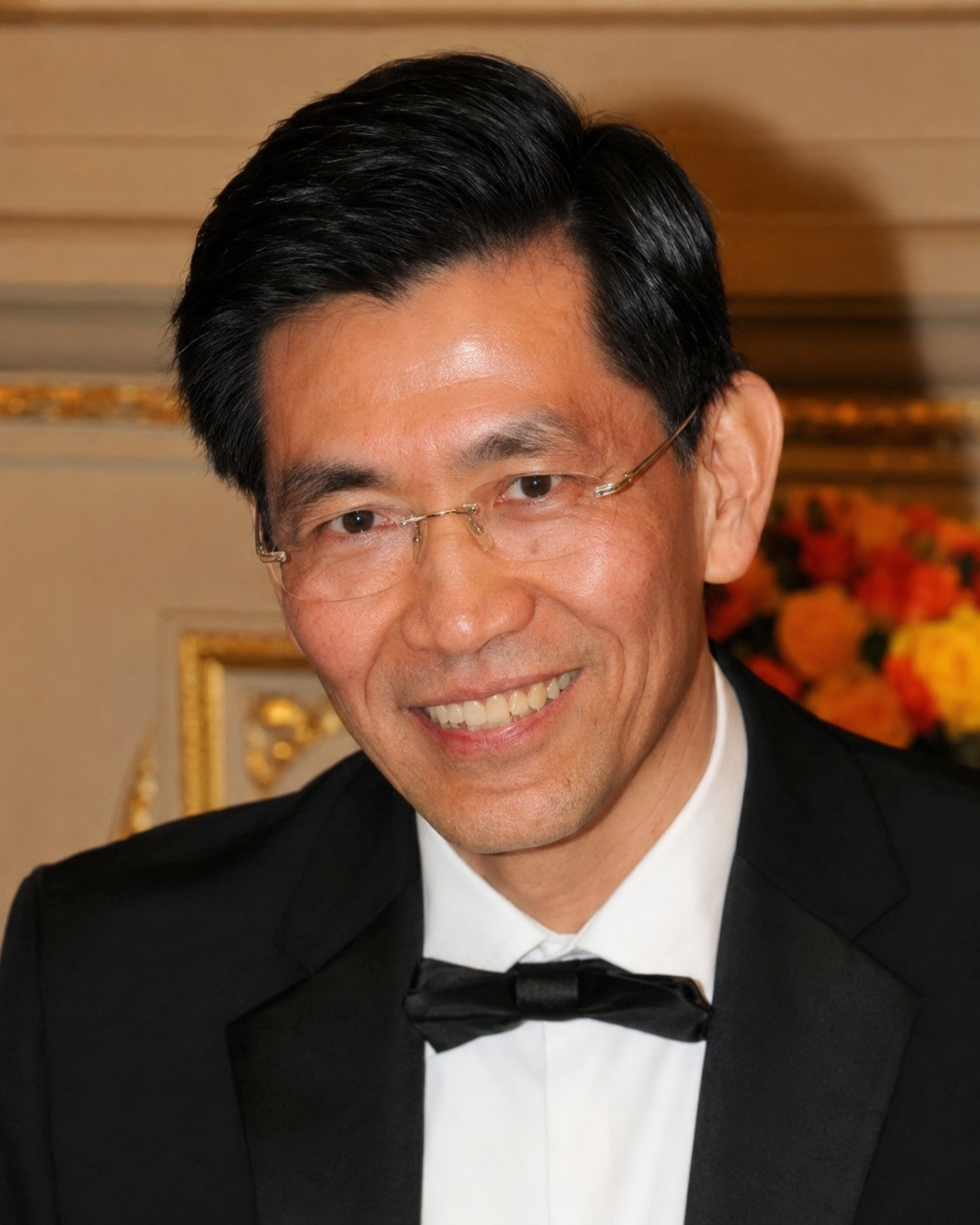}}]{SHENG CHEN} (IEEE Life Fellow) is an Emeritus Professor, University of Southampton, Southampton, U.K.
	
	He received the B.Eng. degree in control engineering from the East China Petroleum Institute, Dongying, China, in January 1982, the Ph.D. degree in control engineering from City University, London, in September 1986, and the higher doctoral (D.Sc.) degree from the University of Southampton, Southampton, U.K., in August 2005. From 1986 to 1999, he held research and academic appointments with the University of Sheffield, the University of Edinburgh, and the University of Portsmouth, all in U.K. In September 1999, he joined the School of Electronics and Computer Science, University of Southampton, and retired in June 2026. Since then, he is a University Visitor associated with the School of Electronics and Computer Science, University of Southampton. In 2023, Dr. Chen was appointed  a Distinguished Professor at Ocean University of China.
	
	Professor Chen has established an over 40-year distinguished academic career in the fields of adaptive signal processing, wireless communications, neural networks and machine learning. He has authored/coauthored over 750 research papers. He has 23,000+ Web of Science citations with h-index 67, and 44,000+ Google Scholar citations with h-index 88. Professor Chen is IEEE ComSoc Signal Processing and Computing for Communications (SPCC) 2025 Technical Recognition Award recipient.
	
	Professor Chen was elected as a Fellow of the United Kingdom Royal Academy of Engineering in 2014. He is a fellow of Asia-Pacific Artificial Intelligence Association, a fellow of Industry Academy of the International Artificial Intelligence Industry Alliance, a fellow of IET, and one of the original 200 ISI Highly Cited Researchers in engineering (March 2004).
\end{IEEEbiography}

\end{document}